\newtheorem{theorem}{Theorem}
\newtheorem{lemma}[theorem]{Lemma}
\theoremstyle{definition}\newtheorem{example}{Example}
\newtheorem{claim}{Claim}
\newtheorem{definition}{Definition}
\algnewcommand\algprocedure{\textbf{Procedure:}}
\algnewcommand\Procedurename{\item[\underline{\algprocedure}]}
\algnewcommand\algmain{\textbf{Main:}}
\algnewcommand\Main{\item[\underline{\algmain}]}
\algnewcommand\algorithmicinput{\textbf{Input:}}
\algnewcommand\Input{\item[\algorithmicinput]}
\algnewcommand\algorithmicoutput{\textbf{Output:}}
\algnewcommand\Output{\item[\algorithmicoutput]}
\renewenvironment{proof}[1][\proofname] {\pushQED{\qed}\normalfont\topsep\z@\@plus0\p@\relax\trivlist\item[\hskip\labelsep\bfseries#1\@addpunct{:}]\ignorespaces}{\popQED\endtrivlist\@endpefalse}
\def\thm@space@setup{%
	\thm@preskip=8pt plus 2pt minus 4pt
	\thm@postskip=\thm@preskip 
}
\renewcommand{\ALG@beginalgorithmic}{\small}
\titlespacing\section{0pt}{12pt plus 2pt minus 2pt}{-1pt plus1pt minus 1pt}
\titlespacing\subsection{0pt}{12pt plus 2pt minus 2pt}{-1pt plus 1pt minus 1pt}
\titlespacing\subsubsection{0pt}{12pt plus 2pt minus 2pt}{-1pt plus 1pt minus 1pt}
\setlist{topsep = 0pt plus1pt}
\newcommand{\abs}[1]{\left|#1\right|}
\newcommand{\floor}[1]{\left\lfloor#1\right\rfloor}
\newcommand{\ceil}[1]{\left\lceil#1\right\rceil}
\newcommand{\ma}{{\mathcal A}}
\newcommand{\me}{{\mathcal E}}
\newcommand{\mh}{{\mathcal H}}
\newcommand{\mi}{{\mathcal I}}
\newcommand{\ml}{{\mathcal L}}
\newcommand{\mm}{{\mathcal M}}
\newcommand{\mn}{{\mathcal N}}
\newcommand{\mpp}{{\mathcal P}}
\newcommand{\mq}{{\mathcal Q}}
\newcommand{\bbg}{{\mathbb G}}
\newcommand{\bmp}{\bm{p}}
\newcommand{\ttc}{\text{TTC}}
\newcommand{\sw}{\text{SW}}
\newcommand{\val}{\text{value}}
\newcommand{\proofcase}[1]{\par\smallskip\noindent{\textbf{#1}}}
\newcommand{\blank}[1]{}
\newenvironment{proof*}[1][\proofname]{
	
	\begin{proof}[#1]}{\end{proof}}
\title{Improving Welfare in One-sided Matching using Simple Threshold Queries}
 \author{
 	Thomas Ma\footnote{Department of Computer Science, University of Toronto.\ \ttfamily thomas.ma@mail.utoronto.ca
 	}
 	\and
 	Vijay Menon\footnote{David R.\ Cheriton School of Computer Science, University of Waterloo.\ \email}
 	\and 
 	Kate Larson\footnotemark[2]
}
\date{}
\begin{document}
	
	\maketitle
	
	\begin{abstract}
	We study one-sided matching problems where $n$ agents have preferences over $m$ objects and each of them need to be assigned to at most one object. Most work on such problems assume that the agents only have ordinal preferences and usually the goal in them is to compute a matching that satisfies some notion of economic efficiency. However, in reality, agents may have some preference intensities or cardinal utilities that, e.g., indicate that they like an object much more than another object, and not taking these into account can result in a loss in welfare. While one way to potentially account for these is to directly ask the agents for this information, such an elicitation process is cognitively demanding. Therefore, we focus on learning more about their cardinal preferences using simple threshold queries which ask an agent if they value an object greater than a certain value, and use this in turn to come up with algorithms that produce a matching that, for a particular economic notion $X$, satisfies $X$ and also achieves a good approximation to the optimal welfare among all matchings that satisfy $X$. We focus on several notions of economic efficiency, and look at both adaptive and non-adaptive algorithms. Overall, our results show how one can improve welfare by even non-adaptively asking the agents for just one bit of extra information per object. 
\end{abstract}
	\section{Introduction}

There are several scenarios where there is a set of $n$ agents who have preferences over a set of $m$ objects and the goal is to assign each agent to at most one object. For instance, the case where the objects are houses represents the well-studied housing allocation problem (e.g., see \cite{hyll79,abd98,abra04}) or, when agents have initial endowments, the housing market problem (e.g., see \cite{shapley74,roth77,abd99}). Other examples include assigning faculty members to school committees, workers to tasks, etc. Together, these scenarios are commonly referred to as \textit{one-sided matching problems}. 

Most of the literature on one-sided matching problems typically assume that the agents have an acceptable set of objects and that they submit a preference order over the acceptable set. Given this, the standard objective is to come up with an assignment of agent to objects (henceforth, matching) such that it satisfies some notion of economic efficiency like, e.g., Pareto optimality or rank-maximality (formally defined in \Cref{sec:prelims}). Although matchings that satisfy such notions are better than arbitrary matchings, their main drawback is that they do not take into account agents' preference intensities.  To illustrate this, consider the simple example in \Cref{tab:example} where there are three agents and three objects. The agents all have the same ordinal preferences, but have different preference intensities. For instance, agent 3 only has a slight preference for object 1 over object 2 (0.51 vs 0.49), whereas agents 1 and 2 both prefer object 1 much more than object 2 (0.9 vs 0.1). In this example, if we were concerned about Pareto optimality or rank-maximality, then it is easy to see that any matching satisfies both these notions. However, it is also clear that any matching that matches agent 3 to object 3 is worse in terms of overall welfare, since agent 3 experiences a much larger loss in utility when matched to object 3 instead of object 2, whereas for agents 1 and 2 this difference is very small. 
\begin{table}[t]
	\centering 
	\begin{tabular}{c|c|c|c}
		& object 1 & object 2 & object 3 \\ \hline
		\multicolumn{1}{c|}{agent 1} & 0.9      & 0.1      & 0     \\ 
		\multicolumn{1}{c|}{agent 2} & 0.9      & 0.1      & 0     \\ 
		\multicolumn{1}{c|}{agent 3} & 0.51     & 0.49     & 0        \\ 
	\end{tabular}
	\caption{Example which illustrates how there is a loss in welfare due to not accounting for preference intensities.}
	\label{tab:example}
\end{table}

The observation that there might be a loss in welfare due to not capturing preference intensities (henceforth, cardinal utilities) is not new, and in particular, this has been a much debated issue surrounding various \textit{school-choice} mechanisms (e.g., see \cite{abd11,abd15}).\footnote{School-choice is the problem of allocating students to schools. This is essentially a one-sided matching problem, except that here the schools usually have, what are called, \textit{priorities}---which is essentially an ordering over students that is used only when the school is over-demanded (see, e.g., \cite{abd11} for formal definitions).} This has also lead to, for instance, proposal for new school choice mechanisms that ask agents to provide some extra information along with their ordinal preferences \cite{abd15}. Our work here is partially motivated by this line of work, but takes a more computer science approach to this issue that is similar to the flurry of work that looks at \textit{distortion}---which is essentially the cost of using only ordinal information---in various settings (e.g., see \cite{pro06,bout15,ansh16,ansh17,goel17,abra18}).

In particular, given a one-sided matching instance (which is the set of agents, objects, and the agents' preferences), we are still interested in computing matchings that satisfy some notion of economic efficiency, say, $X$, but at the same time our aim is to account for the agents' cardinal utilities. We do this by aiming to design algorithms that always return a matching that satisfies $X$ and at the same time achieves a good approximation to the optimal welfare among all matchings that satisfy $X$. Now, of course, one way to achieve this is to just ask agents directly for their cardinal utilities, since once we have them we can compute the matching with the best welfare among ones that satisfy $X$. However, as one can imagine, asking agents for their cardinal utilities is cognitively non-trivial, since even attributing values to objects is not easy in many scenarios. Therefore, here our focus is to achieve a middle-ground between completely ordinal and completely cardinal elicitation, and to do this, we use simple threshold queries which ask an agent if their value for an object is at least some real number $v$. Our goal is to ask each agent a small number of such queries and then use it to pick a matching that achieves a good approximation as described above. 

Although the general idea of  using queries to elicit some information regarding cardinal utilities is not new and has even been considered in the context of one-sided matching in a recent paper \cite{aman20b}, there are some differences. The main difference between our approach here and the one by \citet{aman20b} is that the focus in their paper is to come up with algorithms that have low distortion, whereas our focus is on algorithms that, for a particular notion of economic efficiency $X$, always produce a matching that satisfies $X$ and has a good approximation to the optimal welfare among all matchings that satisfy $X$. We believe that while achieving low distortion might be a good objective to be concerned about in certain settings, it is too reliant on cardinal preferences---the presence of which is already an assumption. Hence, comparing different algorithms just based on this value (i.e., the distortion they achieve) does not seem ideal. Our approach, on the other hand, is less reliant on cardinal information, using it only to pick a matching from the set of matchings that satisfy some property $X$, which in turn is dependent only on the ordinal information that is arguably more robust. Moreover, the query model used by \citet{aman20b} is much stronger than the one we employ. Ours just asks for a binary answer to whether the value of an item is greater than some real number $v$, whereas in their model a query asks an agent to reveal their utility for the object, which in turn is cognitively much more demanding. These differences mean that there are no direct overlaps between our work and that of \citet{aman20b}.

\subsection{Our contributions} \label{sec:contribution} 
\begin{table}[t]
	{\small 
		\centering 
		\begin{tabular}{c|c|c|c|}
			\cline{2-4}
			& 
			\begin{tabular}[c]{@{}c@{}}\textbf{Ordinal} \\ \textbf{algorithms} \end{tabular} &
			\begin{tabular}[c]{@{}c@{}}\textbf{Adaptive threshold} \\ \textbf{query algorithms}\\ $\big($for any $\epsilon > 0$, $O(c\log n)$ queries \\per agent, where $c = \ceil{\frac{\log(n^2 \cdot 1/\epsilon)}{\log{(1+\epsilon/2)}}}$ $\big)$ \end{tabular} & \begin{tabular}[c]{@{}c@{}}\textbf{Non-adaptive threshold} \\ \textbf{query algorithms}\\ $\big($at most 1 query \\ per (agent, object) pair$\big)$\end{tabular}                                          \\ \hline
			\multicolumn{1}{|c|}{\begin{tabular}[c]{@{}c@{}}unit-sum \\ valuations\end{tabular}} 
			& \begin{tabular}[c]{@{}c@{}} 
				\rule{0pt}{3ex}  
				UB: $O(n^{2})$ \\
				$[$Theorem~\ref{thm:LB-ordinal}$]$
				\\[1.25ex] 
				LB: $\Omega(n^{2})$ \\
				$[$Theorem~\ref{thm:LB-ordinal}$]$
			\end{tabular}		
			
			&\begin{tabular}[c]{@{}c@{}}
				$1+\epsilon$\\ 
				$[$Theorem~\ref{thm:UB-Adap}$]$
			\end{tabular}                                        
			
			& \begin{tabular}[c]{@{}c@{}} 
				\rule{0pt}{3ex}  
				UB: $O(n^{2/3})$ \\
				$[$Theorems~\ref{thm:UB-nonAdap-PO} and~\ref{thm:UB-nonAdap-priority}$]$\\[1.25ex] 
				LB: $\Omega(\sqrt{n})$ \\
				$[$Theorem~\ref{thm:LB-nonAdap}$]$
			\end{tabular} \\ \hline
			\multicolumn{1}{|c|}{\begin{tabular}[c]{@{}c@{}}unit-range \\ valuations\end{tabular}}  
			& \begin{tabular}[c]{@{}c@{}} 
				\rule{0pt}{3ex}  
				UB: $O(n)$ \\
				$[$Theorem~\ref{thm:LB-ordinal-unitRange}$]$
				\\[1.25ex] 
				LB: $\Omega(n)$ \\
				$[$Theorem~\ref{thm:LB-ordinal-unitRange}$]$
			\end{tabular}
			
			&\begin{tabular}[c]{@{}c@{}}
				$1+\epsilon$\\ 
				$[$Theorem~\ref{thm:UB-Adap}$]$
			\end{tabular}                                                              
			
			& \begin{tabular}[c]{@{}c@{}}
				\rule{0pt}{3ex}
				UB: $O(\sqrt{n})$ \\
				$[$Theorem~\ref{thm:UB-nonAdap-unitRange}$]$\\[1.25ex]
				LB: $\Omega(\sqrt{n})$\\
				$[$Theorem~\ref{thm:LB-nonAdap-unitRange}$]$
			\end{tabular} \\ \hline
		\end{tabular}
		\caption{Summary of our results. For $X$, where $X$ is one of the properties in the set $\{$Pareto optimal, rank-maximal, max-cardinality rank-maximal, fair$\}$, an upper bound (UB) of $\alpha$ indicates that there is a deterministic algorithm that always produces a matching that satisfies $X$ and achieves an $\alpha$-approximation to the optimal welfare among matchings that satisfy $X$. A lower bound (LB) of $\beta$ indicates that there is no deterministic algorithm that produces a matching that satisfies $X$ and achieves a $\beta$-approximation to the optimal welfare among matchings that satisfy $X$.}
		\label{tab:summary}
	}
\end{table}
We consider the following four well-studied types of matchings that satisfy a specific notion of economic efficiency: \textit{i)} Pareto optimal matchings, \textit{ii)} rank-maximal matchings, \textit{iii)} max-cardinality rank-maximal matchings, and \textit{iv)} fair matchings. As mentioned above, for each of these types, our goal is to find deterministic algorithms that always output a matching of the corresponding type and one that achieves a good approximation to the optimal welfare among all matchings of that type. Towards this end, we consider two kinds of cardinal utilities, namely, unit-sum and unit-range valuations, and show the following results, which are summarized in \Cref{tab:summary}.
\begin{itemize}
	\item We first look at adaptive algorithms---i.e., algorithms that are able to change their queries depending on how agents answer its previous queries---and show how for each of the notions mentioned above and for any $\epsilon >0$, there is a deterministic algorithm that asks $O(c \log n)$ queries per agent, where $c = \ceil{\frac{\log(n^2 \cdot 1/\epsilon)}{\log{(1+\epsilon/2)}}}$, and returns a matching that satisfies this notion and also achieves a $(1+\epsilon)$-approximation to the optimal welfare among all matchings that satisfy this notion when the agents have unit-sum or unit-range valuations.
	
	\item While the previous result achieves the best possible approximation one can hope for---and in particular it results in an $O(1)$-approximation with just $O(\log^2 n)$ queries per agent---we believe that the fact it is adaptive is not ideal because of the following reasons: \textit{i)} Adaptive algorithms may not be practical in many settings since it involves waiting for the agents to respond, and potentially having them respond multiple times. \textit{ii)} For every agent, the algorithm mentioned above potentially asks multiple queries with respect to the same object and as this number increases, one could argue that it defeats the real purpose of such algorithms---since responding to them entails that the agents are somewhat sure about their cardinal utilities. 
	
	As a result, we focus on non-adaptive algorithms, which address the first issue mentioned above, and in order to address the second one, we consider a very special type of non-adaptive algorithm---one which is allowed to ask at most one query per (agent, object) pair. We believe that this is the most practical setting to consider for this problem, since this means every agent is asked to provide just one extra bit of information per object.\footnote{Also, deploying this seems easier, since instead of the current system which presumably just asks the agents to list their preferences, now all that needs to be done is to have a checkbox next to it, indicating whether their answer is a ``Yes" or ``No" w.r.t.\ a certain threshold query.}
	
	For this setting, we show how for each of the notions considered, there is a deterministic algorithm that returns a matching that satisfies this notion and also achieves an $O({n}^{2/3})$-approximation to the optimal welfare among all matchings that satisfy this notion when the agents have unit-sum valuations. We also derive a similar result for the unit-range case, showing an algorithm that achieves an $O(\sqrt{n})$-approximation. Note that these bounds are as opposed to a $\Theta({n}^{2})$ and  $\Theta({n})$ approximation that is achievable when using only ordinal preferences and when the agents have unit-sum and unit-range valuations, respectively.   
	
	\item Finally, we also show that, for all the notions considered, any deterministic algorithm that uses at most one query per (agent, object) pair can only achieve an approximation factor of $\Omega(\sqrt{n})$, both for the case when agents have unit-sum or unit-range valuations. Note that for the unit-range case, this bound in turn is asymptotically tight.	
\end{itemize}

\subsection{Related work} \label{sec:rw}   
One-sided matching scenarios are ubiquitous and have been well-studied, especially as the housing allocation or housing market problem both in economics (e.g., see \cite{shapley74,hyll79,roth77,abd98,abd99,son10}) and in theoretical computer science and computational economics (e.g., see \cite{abra04,irv04,abra06,filos14,aman20b}). Unlike in this paper where we assume that the agents have underlying cardinal utilities that are consistent with their ordinal preferences, most of this literature assumes that the agents only have ordinal preferences, and their goal is usually to find matchings that satisfy some notion of economic efficiency like, e.g., Pareto optimal matchings \cite{shapley74,abra04}, rank-maximal matchings \cite{irv04,irv06}, max-cardinality rank-maximal matchings \cite{melh05,abra06}, and fair matchings \cite{melh05,huang13}. 

As mentioned in the Introduction, part of the motivation for our work is derived from work on the school choice problem that talks about the loss in welfare due to not taking the preference intensities into account (e.g., \cite{abd11,abd15}). Our concern here is on similar lines, but we take a more computational approach to the problem which is reminiscent of the vast body of work on distortion (e.g., see \cite{pro06,bout15,ansh16,ansh17,goel17,abra18}). However, unlike this body of work which aims to calculate the worst-case loss in welfare due to only having ordinal preferences, we assume that, in addition to their ordinal preferences, it is also possible to obtain some information about the agents' cardinal utilities. This in turn is similar to the approach taken by  \citet{abra19} in the context of voting and more closely to the ones by \citet{aman20a,aman20b} in the context of voting and one-sided matching, respectively. In the work by \citet{abra19} it assumed that in addition to the ordinal preferences there is also some information regarding how many agents prefer candidate P over Q above a certain threshold, whereas in the work by \citet{aman20a,aman20b}, and as is the case in this paper, it is assumed that the one can use some specific type of query in order to get more information regarding the cardinal utilities. 

Finally, our work is also related to the work that studies the communication complexity of voting protocols \cite{mandal19,mandal20}, to the work on participatory budgeting which compares different elicitation methods based on the distortion achieved (e.g., see \cite{goel19,benade20}), and is more broadly in line with the growing body of work that explicitly aims to make mechanisms or algorithms more robust, by either making use of coarse preference information \cite{chiesa12,chiesa14,menon19} or by making sure that the algorithms designed produce solutions that work ``well" (in the approximation sense) even under slightly modified inputs \cite{shir13,bred17,menon18,mai18,chen19}.
	\section{Preliminaries} \label{sec:prelims}

For $k \in \mathbb{Z}^+$, let $[k]$ denote the set $\{1,\ldots,k\}$. We use $\mn$, where $|\mn| = n$, to denote the set of agents $\{a_1, \ldots, a_n\}$, and use $\mh$, where $|\mh| = n$, to denote the set of objects $\{h_1, \ldots, h_n\}$. We refer to $a_i$ as agent $i$ and $h_j$ as object $j$. Every agent $a_i$ is assumed to have a weak order $P_i$ over a subset of objects $A_i \subseteq \mh$. For an agent $a_i$, $A_i$ indicates the set of objects $a_i$ is willing to be matched to and we refer to $A_i$ as the \textit{acceptable set} of $a_i$ and assume that $|A_i| \geq 1$. We use $\mpp = (P_1, \ldots, P_n)$ to refer to the weak orders of all the agents in $\mn$ and refer to $\mpp$ as the \textit{preference profile} of the agents. For an agent $a_i$, and for two objects $h_j, h_k \in A_i$, we use $h_j \succ_i h_k$ to denote that $a_i$ strictly prefers $h_j$ over $h_k$, and use $h_j \succeq_i h_k$ to indicate that $h_j$ is either strictly preferred or considered to be equivalent to $h_k$. We refer to $\mi = (\mn, \mh, \mpp = (P_1, \ldots, P_n))$ as an \textit{instance}, which encodes all the information about the agents, objects, and the agents' preferences, and use $\mathbb{I}$ to denote the set of all possible instances.

Given an instance $\mi = (\mn, \mh, \mpp = (P_1, \ldots, P_n))$, we use $\mathbb{G}_{\mi} = (\mn \cup \mh, \me)$ to denote the bipartite graph where there as an edge $(a_i, h_j) \in \me$ if $h_j \in A_i$. We refer to  $\mathbb{G}_{\mi}$ as the graph induced by $\mi$ and refer to $e = (a_i, h_j) \in \me$ as a rank-$k$ edge if $|\mathcal{U}_{ij}| = k-1$, where $\mathcal{U}_{ij} = \{h_\ell \in A_i \mid h_\ell \succ_i h_j \}$. We also use $rank(a_i, h_j)$ to denote the $k$ such that $(a_i, h_j)$ is a rank-$k$ edge and refer to an object $h_j$ as $a_i$'s rank-$k$ (or $k$-th choice) object if $rank(a_i, h_j) = k$.

Although the model described thus far is the standard model in one-sided matching, here we additionally assume that each agent $a_i$ has a cardinal utility function $v_i\colon \mh \to [0, 1]$, which is consistent with the preference order $P_i$ (meaning, $h_1 \succeq_i h_2 \Leftrightarrow v_i(h_1) \geq v_i(h_2)$); we assume that if $h \notin A_i$, then $v_i(h) = 0$. In this work we consider two specific kinds of (normalized) valuation functions which are defined below, and use $v = (v_1,\ldots, v_n)$ to denote the valuation profile of agents and $\mathcal{V}_\mi$ to denote the set of all possible valuation profiles that are consistent with the given preference profile in $\mi$. 
\begin{enumerate}
	\item Unit-sum valuations: Agents are said to have unit-sum valuations if for each agent $i$, $v_i$ is such that $\sum_{h\in A_i} v_i(h) = 1$.   
	\item Unit-range valuations: Agents have unit-range valuations if for each agent $i$, there exists $h_j, h_k \in A_i$ such that $h_j \succ_i h_k$, and $\max_{h \in A_i} v_i(h) = 1$ and $\min_{h \in A_i} v_i(h) = 0$. In words, the most preferred objects have value 1, the least preferred objects have value 0, and every other acceptable object has value between 0 and 1. 
\end{enumerate}
Note that information about the cardinal utilities is not part of an instance $\mi$. Also, note that although the ``internal utilities" of the agents may not be normalized, we have to assume some normalization in order to only use threshold queries. Given this, both unit-sum and unit-range arise from two natural ways to normalize agents' ``internal utilities". More precisely, if $u_i\colon \mh \to \mathbb{R}_{\geq 0}$ is the ``internal utility" of an agent, then, for all $h \in _iA$, the corresponding unit-sum and unit-range valuation functions are, respectively, 
\[
v_i(h) = \frac{u_i(h)}{\sum_{h\in A_i} u_i(h)}
\qquad \text{and} \qquad
v_i(h) = \frac{u_i(h) - \min_{h \in A_i} u_i(h)}{\max_{h \in A_i} u_i(h) - \min_{h \in A_i} u_i(h)}.
\]

For an instance $\mi$, we are interested in matchings that assign agents to objects, and a \textit{matching} of agents to objects is a bijection $\mu\colon \mn \to \mh$ and, for $c \in \mn \cup \mh$, we refer to $\mu(c)$ as $c$'s \textit{partner} in $\mu$ or as $c$'s \textit{allocation} in $\mu$. Alternatively, a matching is also defined as a collection of edges $\mu$ in $\bbg_{\mi}$ such that each vertex is part of at most one edge in $\mu$. We use $\mm_{\bbg_{\mi}}$ to denote the set of all possible matchings in $\bbg_{\mi}$. Although for a given instance there are several possible matchings, we are interested in matchings that satisfy some notion of economic efficiency; these are defined next.

\subsection{Notions of economic efficiency} \label{sec:typesOfmatchings}
We consider the following well-studied notions: Pareto optimal matchings \cite{shapley74,abra04}, rank-maximal matchings \cite{irv04,irv06}, max-cardinality rank-maximal matchings \cite{melh05,abra06}, and fair matchings \cite{melh05,huang13}.  The latter three are different ways to strengthen Pareto optimality and are together referred in the rest of this paper as \textit{signature-based} matchings.

\begin{definition}[Pareto optimal matchings]
	Given an instance $\mi = (\mn, \mh, \allowbreak \mpp = (P_1, \allowbreak  \ldots, \allowbreak  P_n))$, a matching $\mu \in \mm_\mathbb{G_\mi}$ is Pareto optimal (PO) w.r.t.\ $\mi$ if  
	\begin{equation*}
		\forall \mu' \in \mm_\mathbb{G_\mi}: \; \left(\exists a_i \in \mn, \mu'(a_i) \succ_{i} \mu(a_i) \right) \Rightarrow \left(\exists a_j \in \mn, \mu'(a_j) \prec_{j} \mu(a_j)\right).  
	\end{equation*}
\end{definition}
In words, a matching $\mu$ is Pareto optimal if there is no other matching $\mu'$ such that
every agent weakly-prefers their allocation in $\mu'$ over their allocation in $\mu$ and at least one agent (strictly) prefers their allocation in $\mu'$ over their allocation in $\mu$. 

\begin{definition}[Signature-based matchings]
	Given an instance $\mi = (\mn, \mh, \mpp = (P_1, \ldots,\allowbreak  P_n))$, and a matching $\mu \in \mm_\mathbb{G_\mi}$, let $s_i$ denote the number of agents that are matched to a rank-$i$ edge in $\mu$. Then, $\mu$ is  
	\begin{itemize}
		\item \textbf{rank-maximal} if $\mu$ maximizes the number of agents who are matched to a rank-1 edge and, subject to that, it maximizes the number of agents who are matched to rank-2 edges, and so on. Formally, if we associate an $n$-tuple $(s_1, \ldots, s_n)$ with every matching in $\mm_\mathbb{G_\mi}$, then $\mu$ is the matching that has lexicographically the best $n$-tuple $(s_1, \ldots, s_n)$ associated with it. 
		
		\item \textbf{max-cardinality rank-maximal} if $\mu$ is a maximum cardinality matching and, subject to that, is also rank-maximal. Formally, if we associate an $(n+1)$-tuple $(\sum_{i=1}^{n} s_i, s_1, \ldots, s_n)$ with every matching in $\mm_\mathbb{G_\mi}$, then $\mu$ is the matching that has lexicographically the best $(n+1)$-tuple $(\sum_{i=1}^{n} s_i, s_1, \ldots, s_n)$ associated with it. 
		
		\item \textbf{fair} if $\mu$ is a maximum cardinality matching and, subject to that, minimizes the number of agents who are matched to a rank-$n$ edge and, subject to that, minimizes the number of agents who are matched to a rank-$(n-1)$ edge, and so on. Formally, if we associate an $(n+1)$-tuple $(\sum_{i=1}^{n} s_i, -s_n, -s_{n-1}, \ldots, -s_1)$ with every matching in $\mm_\mathbb{G_\mi}$, then $\mu$ is the matching that has lexicographically the best $(n+1)$-tuple $(\sum_{i=1}^{n} s_i, -s_n, -s_{n-1}, \ldots, -s_1)$ associated with it.	
	\end{itemize}
\end{definition}
For each type of signature-based matching defined above and a matching of that type, we refer to the corresponding tuple, as defined above, to be the matching's signature. That is, for instance, for a max-cardinality rank-maximal matching $\mu$, signature of $\mu$ refers to the $(n+1)$-tuple $(\sum_{i=1}^{n} s_i, s_1, \ldots, s_n)$. The example below illustrates the difference between Pareto optimal, rank-maximal, and fair matchings for an instance.
\begin{example}
	Consider an instance with $7$ agents and $7$ objects, where the preferences of the agents are as defined in the table below. Here each column corresponds to the strict preferences of an agent and if the column corresponding to agent, say, $a_i$ is $h_1, h_2, h_3$, then this implies that $a_i$ prefers $h_1$ the most, $h_2$ second most, and so on.   
	\begin{table}[ht]
		\centering 
		\begin{tabular}{c|c|c|c|c|c|c}
			{\color[HTML]{000000} $a_1$}                         & {\color[HTML]{000000} $a_2$}                         & {\color[HTML]{000000} $a_3$}                         & {\color[HTML]{000000} $a_4$}                         & {\color[HTML]{000000} $a_5$}                         & {\color[HTML]{000000} $a_6$}                         & {\color[HTML]{000000} $a_7$}                         \\ \hline
			{\color[HTML]{000000} $h_1$}                         & \cellcolor[HTML]{FD6864}{\color[HTML]{000000} $h_2$} & \cellcolor[HTML]{CBCEFB}{\color[HTML]{000000} $h_1$} & \cellcolor[HTML]{FD6864}{\color[HTML]{000000} $h_3$} & {\color[HTML]{000000} $h_1$}                         & \cellcolor[HTML]{FD6864}{\color[HTML]{000000} $h_1$} & \cellcolor[HTML]{FFC702}{\color[HTML]{000000} $h_1$} \\
			\rowcolor[HTML]{FFC702} 
			\cellcolor[HTML]{FD6864}{\color[HTML]{000000} $h_4$} & {\color[HTML]{000000} $h_5$}                         & {\color[HTML]{000000} $h_3$}                         & {\color[HTML]{000000} $h_6$}                         & {\color[HTML]{000000} $h_4$}                         & {\color[HTML]{000000} $h_2$}                         & \cellcolor[HTML]{CBCEFB}{\color[HTML]{000000} $h_2$} \\
			\cellcolor[HTML]{CBCEFB}{\color[HTML]{000000} $h_3$} & \cellcolor[HTML]{CBCEFB}{\color[HTML]{000000} $h_6$} & {\color[HTML]{000000} }                              & {\color[HTML]{000000} }                              & \cellcolor[HTML]{CBCEFB}{\color[HTML]{000000} $h_5$} & \cellcolor[HTML]{CBCEFB}{\color[HTML]{000000} $h_4$} & \cellcolor[HTML]{FD6864}{\color[HTML]{000000} $h_5$} \\
			\cellcolor[HTML]{FFCB2F}{\color[HTML]{000000} $h_7$} & {\color[HTML]{000000} }                              & {\color[HTML]{000000} }                              & {\color[HTML]{000000} }                              & {\color[HTML]{000000} }                              & {\color[HTML]{000000} }                              & {\color[HTML]{000000} }                              \\ \hline
		\end{tabular}
		\label{tab:my-table}
	\end{table}
	For this instance, the matching (of size 6) that corresponds to each agent being matched to the object (if any) coloured blue in its column is an example of a Pareto optimal matching. Similarly, the matching (of size 5, with signature $(3,1,1,0,0,0)$) where each agent is matched to the object (if any) coloured red is an example of a rank-maximal matching, and the matching (of size 7, with signature $(7,0,0,-1,0,-5, -1)$) where each is matched to the object (if any) coloured yellow is an example of a fair matching.
\end{example}
It is well-known (see \cite{irv04, irv06, melh05, huang13, michail07}) that signature-based matchings can be reduced to an instance of the following problem, which we refer to as priority-$\bmp$ matchings, for a given $\bmp = (p_1, \cdots, p_n)$.

\begin{definition} \label{def:priorityp}
	Given an instance $\mi = (\mn, \mh, \mpp = (P_1, \ldots, P_n))$ and a priority vector $\bmp = (p_1,\allowbreak\ldots, p_n)$, where $\forall i \in [n], p_i \in \mathbb{Z}_{\geq0}$ and $\exists j, k \in [n]$ such that $p_j \neq p_k$, a matching $\mu \in \mm_\mathbb{G_\mi}$ is said to be  a priority-$\bmp$ matching if $\mu$ is a matching of maximum weight in $\mm_\mathbb{G_\mi}$, where a rank-$r$ edge in $\bbg_\mi$ is assigned the weight $p_r$. 
\end{definition}

In particular, given an instance $\mi$, we can show that,\footnote{The proof of this can be found in \Cref{clm:sigTopriority} in \Cref{app:addClaims}. Note that such observations have also been made in previous works (e.g., see \cite{irv04, irv06, melh05, huang13, michail07}), although the value of $p_j$s used may be different.}  
\begin{itemize}
	\item when $p_j = n^{2(n-j+1)}$ for all $j \in [n]$, a matching is a priority-$\bmp$ matching if and only if it is rank-maximal matching w.r.t.\ $\mi$.
	\item when  $p_j = n^{2n} + n^{2(n-j)}$ for all $j \in [n]$, a matching is a priority-$\bmp$ matching if and only if it is a max-cardinality rank-maximal matching w.r.t.\ $\mi$.
	\item when  $p_j = 4n^{2n} - 2n^{j-1}$ for all $j \in [n]$, a matching is a priority-$\bmp$ matching if and only if it is a fair matching w.r.t.\ $\mi$.
\end{itemize}	

Although priorty-$\bmp$ matchings can potentially be defined for several values of $\bmp$, here we are interested in the three cases described above. Additionally, for ease of exposition, we also sometimes use priority-$\bmp$, where $p_i = 0$ for all $i \in [n]$ to refer to Pareto optimal matchings. Note that this is purely for notational convenience (priority-$\bmp$ matching as defined in \Cref{def:priorityp} is not defined when $p_i = p_j$ for all $i,j \in [n]$), since the algorithms we discuss in the context of Pareto optimal matchings are extensions to the ones for priority-$\bmp$ matchings. Throughout, we use $\mathbbm{P}$ to denote the priority vectors of interest. That is, $\mathbbm{P} = \{(n^{2n}, \dots, n^{2(n-j+1)}, \ldots, n^{2}), \allowbreak(n^{2n} + n^{2(n-1)}, \allowbreak \ldots, \allowbreak n^{2n} + n^{2(n-j)},\allowbreak\ldots, n^{2n} + n),\allowbreak (4n^{2n} - 2, \allowbreak \ldots, 4n^{2n} - 2n^{j-1},\allowbreak\ldots, 4n^{2n} - 2n^{n-1})\}$. 

\subsection{Going beyond completely ordinal or completely cardinal algorithms}
Given an instance $\mi$, we are interested in deterministic algorithms $\ma$ that always output a matching that satisfies one of the economic notions defined in the previous section. However, even when restricted to such matchings, as the example in the Introduction illustrates, there are potentially many choices, and there might be a loss in welfare due to not accounting for the cardinal utilities of the agents. Therefore, ideally we  want our algorithm to have small worst-case loss in welfare. Formally, for an instance $\mi$, consider the set of matchings $S \subseteq \mm_{\bbg_{\mi}}$ such that $S$ is the set of all Pareto optimal/rank-maximal/max-cardinality rank-maximal/fair matchings in $\bbg_{\mi}$. Next, for a matching $\mu \in S$, $v \in\mathcal{V}_{\mi}$, and for an edge $e = (a_i, h_j) \in \mu$, let $\val(e) = v_i(h_j)$ and $\sw(\mu \mid v) = \sum_{e \in \mu} \val(e)$, the social welfare of $\mu$ given the valuations $v$. (For notational convenience, when $v$ is clear from the context, we just write $\sw(\mu)$ instead of $\sw(\mu \mid v)$.) Given this, consider a deterministic algorithm $\ma$ where, for all $\mi \in \mathbb{I}$, $\ma(\mi) \in S$ and let $\ml(\ma)$, which we refer to as the \textit{worst-case welfare loss} of $\ma$, be defined as below.
\begin{align} \label{def:l(a)}
	\mathcal{L}(\ma) &= \max_{\mi \in \mathbb{I}} \ml(\ma, \mi), \text{ where }\mathcal{L}(\ma, \mi) \: = \: \sup_{v \in \mathcal{V}_\mi} \: \frac{\displaystyle\max_{\mu^* \in S} \: {\sw(\mu^* \mid v)}}{\sw(\ma(\mi) \mid v)}.
\end{align}  

As mentioned above, we want algorithms $\ma$ that have as small a value of $\ml(\ma)$ as possible. To achieve this, on the one extreme we have completely ordinal algorithms---which are algorithms that only consider the ordinal preferences. We argue below that that any deterministic ordinal algorithm has a very poor worst-case loss in welfare, in particular $\Omega(n^2)$ when agents have unit-sum valuations. The proof of this result appears in \Cref{app:sec:ordinal}. (The corresponding result showing a bound of $\Omega(n)$ for unit-range valuations can be found as \Cref{thm:LB-ordinal-unitRange} in \Cref{app:sec:ordinal}.) 
\begin{restatable}{theorem}{LBordinal}  \label{thm:LB-ordinal}
	Let $X$ denote one of the properties in the set $\{${Pareto-optimal, $\allowbreak$ rank-maximal, max-cardinality rank-maximal, and fair}$\}$. Let $\ma$ be a deterministic ordinal algorithm that always produces a matching that satisfies property $X$. If there are $n$ agents with unit-sum valuation functions, then $\ml(\ma) \in \Omega({n}^2)$. Moreover, this bound is asymptotically tight.
\end{restatable}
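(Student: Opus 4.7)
The plan is to establish the $\Omega(n^2)$ lower bound by an adversarial argument: for every deterministic ordinal algorithm $\ma$, I would exhibit an instance $\mi$ and a unit-sum valuation $v$ consistent with $\mi$ such that $\max_{\mu^*\in S}\sw(\mu^*\mid v)/\sw(\ma(\mi)\mid v)=\Omega(n^2)$. The ratio is bounded above by $n/(1/n)=n^2$ in general---since $\sw(\mu^*)\le n$ trivially and $\sw(\ma(\mi))\ge 1/n$ by the unit-sum pigeonhole $v_{i^*}(h_1)\ge 1/n$ for the agent matched to its top item---so the goal is to essentially saturate both endpoints simultaneously. This same pair of inequalities also yields the matching $O(n^2)$ upper bound, giving tightness.

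For the instance I would use the ``all-identical'' profile with $n$ agents and $n$ objects in which every agent has the strict order $h_1\succ h_2\succ\cdots\succ h_n$. Under this profile every perfect matching has the same signature tuple, so every perfect matching is simultaneously Pareto optimal, rank-maximal, max-cardinality rank-maximal, and fair, and therefore $S=\mm_{\bbg_\mi}$ in each of the four cases. Since $\ma$ is deterministic and its input is purely ordinal, $\mu=\ma(\mi)$ is a single fixed perfect matching $\mu(a_i)=h_{\sigma(i)}$; let $a_{i^*}$ denote the unique agent with $\sigma(i^*)=1$.

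Given $\mu$, I would calibrate $v$ adversarially so that $v_{i^*}$ is essentially uniform (so $v_{i^*}(h_1)\approx 1/n$, the minimum allowed by the unit-sum/decreasing constraints) while every other $v_i$ places near-zero mass on $h_{\sigma(i)}$ by concentrating its unit mass on a specific ``target'' item $h_{\tau(i)}$ with $\tau(i)<\sigma(i)$. The map $\tau$ is chosen to be injective on $\{i:i\ne i^*\}$ so that, together with $a_{i^*}$ taking the lone leftover item, the assignment $\mu^*(a_i)=h_{\tau(i)}$ is a valid perfect matching in $S$. Hall's theorem on the bipartite graph with edges $\{(a_i,h_j):j<\sigma(i)\}$ guarantees the existence of such a $\tau$. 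The welfare accounting then gives $\sw(\mu\mid v)=\Theta(1/n)$---only $a_{i^*}$ contributes nontrivially---and $\sw(\mu^*\mid v)=\Omega(n)$ by summing the $\Omega(1)$ contributions of the concentrated agents, yielding the desired $\Omega(n^2)$ ratio.

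The hardest step is the valuation design itself. The ordinal-consistency constraint forces each $v_i$ to have its maximum at $h_1$, and unit-sum together with strict decrease over $n$ items caps $v_i(h_k)\le 1/k$; a naive uniform-on-prefix choice of $v_i$ yields only $\sum_{i\ne i^*} 1/\tau(i)=\Theta(\log n)$ and hence an $\Omega(n\log n)$ ratio, falling short of $\Omega(n^2)$. To reach the full $n^2$ gap one must either shape each $v_i$ more carefully on its top-$\tau(i)$ prefix so that the contribution to $\sw(\mu^*)$ is $\Omega(1)$ per agent, or supplement the all-identical instance with a tailored non-identical variant whose signature-class is still large (so that $|S|$ contains an alternative matching $\mu^*$) but whose ordinal structure admits genuinely distinct concentration targets for different agents. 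Working out this trade-off while preserving the Hall-type feasibility of $\mu^*$ is the crux of the proof.
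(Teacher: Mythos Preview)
Your upper-bound half (the $O(n^2)$ tightness) is fine and matches the paper. The lower-bound half has a genuine, fatal obstruction that you have in fact bumped into but not diagnosed correctly.

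On the all-identical profile $h_1\succ h_2\succ\cdots\succ h_n$, the unit-sum and strict-order constraints force $v_i(h_k)\le 1/k$ for \emph{every} agent $i$ and every position $k$. Consequently, for \emph{any} perfect matching $\mu^*$ (your $\mu^*$ included) one has
\[
\sw(\mu^*\mid v)\;=\;\sum_i v_i\bigl(h_{\pi(i)}\bigr)\;\le\;\sum_i \frac{1}{\pi(i)}\;=\;\sum_{k=1}^n \frac{1}{k}\;=\;O(\log n),
\]
since $\pi$ is a permutation. Combined with $\sw(\ma(\mi))\ge 1/n$, the ratio on this instance is at most $O(n\log n)$ \emph{for every consistent unit-sum $v$}. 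So the all-identical instance simply cannot witness $\Omega(n^2)$; no amount of ``shaping each $v_i$ more carefully on its top-$\tau(i)$ prefix'' can beat the hard cap $v_i(h_{\tau(i)})\le 1/\tau(i)$, and injectivity of $\tau$ then collapses the numerator to a harmonic sum. Your first proposed fix is therefore impossible, not merely unfinished.

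The paper's construction (borrowed from \citet{aman20b}) avoids exactly this trap by making the agents' \emph{second} choices distinct: the agents are grouped into $\Theta(n)$ pairs, all sharing the same top object $a$, but pair $\ell$ has its own private second object $b_\ell$. Now many agents can simultaneously be matched to a rank-$2$ object, and for each of them $v_i(b_\ell)$ can be set to $\approx 1/2$ (only one object outranks it), yielding $\sw(\mu^*)=\Omega(n)$. On the other hand, whichever agent the deterministic $\ma$ assigns to $a$, the adversary can give that agent's pair near-uniform utilities so that the realised welfare is $O(1/n)$. One then checks that the ``good'' alternative matching is still Pareto optimal/rank-maximal/max-cardinality rank-maximal/fair, which is where the pairwise structure (rather than a single clone class) is used. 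Your vague second option (``a tailored non-identical variant'') is pointing in this direction, but the specific idea---break the symmetry at rank $2$, not rank $1$---is the missing ingredient.
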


At the other extreme, when we have access to all the cardinal utilities, we show (see \Cref{thm:computeOptPriority}) in the next section how, for all the notions considered here, computing the welfare-optimal matching reduces to the max-weight matching problem. Although this is ideal, as mentioned in the Introduction, asking agents for cardinal utilities might not be reasonable in many situations, as this is a cognitively involved task. Therefore, in this paper, we aim for a middle-ground between completely ordinal and completely cardinal algorithms. We do this by trying to get at least some information regarding the cardinal preferences by asking the agents certain queries. In particular, we are interested in the following type of query, which we refer to as \textit{binary threshold query}. 
\begin{definition}[binary threshold query]
	For an agent $a_i$, object $h_j$, and a real number $t_k \in [0, 1]$, a binary threshold query, denoted $\mq(a_i, h_j, t_k)$, asks agent $a_i$ to return 1 (alternatively, asks them to say ``Yes") if $v_i(h_j) \geq t_k$, and 0 (alternatively, asks them to say ``No") otherwise.
\end{definition}

Given an instance $\mi$ and answers to a certain number of binary threshold queries, our aim is to design deterministic algorithms $\ma$ that minimize the worst-case welfare loss $\ml(\ma)$ and, for all $\mi \in \mathbb{I}$, produces a matching in $S$ (i.e., $\ma(\mi) \in S$), where $S$ is the set of all Pareto optimal/rank-maximal/max-cardinality rank-maximal/fair matchings in $\bbg_{\mi}$.

\textbf{Remark:} Throughout this paper, we say that, for an $\alpha \geq 1$, an algorithm $\ma$ achieves an $\alpha$-approximation to the optimal social welfare among Pareto-optimal/rank-maximal/max-cardinality rank-maximal/fair matchings if $\mathcal{L}(\ma) \leq \alpha$. Also, note that although the ratio defined in (\ref{def:l(a)}) might seem very similar to the notion of \textit{distortion} that is widely used in computational social choice literature (e.g., see \cite{pro06,ansh16}), it is important to note that it is different. Here we are interested in algorithms that produce a matching with a certain property (like Pareto optimality, rank-maximality, etc.) and that has social welfare as close to the optimal welfare achievable with a matching that satisfies the same property of interest, whereas in the context of distortion there is no such restriction. That said, it is also worth noting that in the context of one-sided matching, an algorithm that achieves an $\alpha$-approximation to the optimal social welfare among Pareto-optimal matchings also has a distortion of $\alpha$ since a welfare-optimal matching is also Pareto optimal.

\subsection{Finding welfare-optimal priority-\texorpdfstring{$\bmp$}{p} matchings when utilities are known} \label{sec:prelims:computeOpt}
Before we consider the main question of this paper, a natural question that arises is on how to compute the welfare-optimal Pareto optimal or welfare-optimal priority-$\bmp$ matchings for the priority vectors of interest, when the agents' utilities are known. Given an instance $\mi = (\mathcal{N}, \mathcal{H}, \mpp)$ and valuation functions of the agents $v = (v_1, \ldots, v_n)$, where $v_i\colon \mh \to [0, 1]$, the welfare-optimal priority-$\bmp$ problem is to find a matching of maximum welfare among the set of priority-$\bmp$ matchings. Below, we show how for all the priority vectors of interest, computing this reduces to an instance of the max-weight matching problem on $\mathbb{G}_{\mi}$. 

To see this, first note that finding the welfare-optimal Pareto optimal matching directly reduces to the max-weight matching problem on $\mathbb{G}_{\mi}$, where the weight of an edge $(a_i, h_j)$ is $v_i(h_j)$. Therefore, below we show how even the question of computing the welfare-optimal priority-$\bmp$ matchings for $\bmp \in \mathbb{P}$ reduces to the max-weight matching problem on $\mathbb{G}_{\mi}$.

\begin{algorithm}[t]
	{\small
		\centering
		\noindent\fbox{%
			\begin{varwidth}{\dimexpr\linewidth-4\fboxsep-4\fboxrule\relax}
				\begin{algorithmic}[1]
					\small 
					\Input an instance $\mi = (\mn, \mh, \mpp = (P_1, \ldots, P_n))$, priorities $\bmp = (p_1, \ldots, p_n)$, where $\bmp \in \mathbbm{P}$, and $v = (v_1, \ldots, v_n)$, where $v_i\colon\mh \to [0, 1]$ 
					\Output returns a welfare-optimal priority-$\bmp$ matching w.r.t.\ $\mi$
					
					\State $\mathbb{G}_{\mi} = (\mn \cup \mh, \me) \gets$ graph induced by $\mi$
					\For{$e = (a_i, h_j) \in \me$} 
					\State $r \gets \text{rank}(a_i, h_j)$
					\State $w_e \gets p_r + v_i(h_j)$ \label{algo:line:weights}
					\EndFor
					\State $\mu \gets$ max-weight matching in $\mathbb{G}_{\mi}$ with weights $\{w_e\}_{e \in \me}$ \label{line:Gdef}
					\State \Return $\mu$			
				\end{algorithmic}
		\end{varwidth}}
		\caption{returns a welfare-optimal priority-$\bmp$ matching.}
		\label{algo:computeOptPriority}
	}
\end{algorithm}

\begin{theorem} \label{thm:computeOptPriority}
	Given an instance $\mi = (\mathcal{N}, \mathcal{H}, \mpp = (P_1, \ldots, P_n))$, a vector of priorities $\bmp = (p_1, \allowbreak \ldots, p_n)$, where $\bmp \in \mathbbm{P}$, and $v = (v_1, \ldots, v_n)$, where $v_i\colon \mh \to [0, 1]$, \Cref{algo:computeOptPriority} returns a welfare-optimal  priority-$\bmp$ matching w.r.t.\ $\mi$.
\end{theorem}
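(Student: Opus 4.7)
The plan is to decompose the edge weight $w_e = p_r + v_i(h_j)$ into a priority part $P(\mu) := \sum_{e=(a_i,h_j) \in \mu} p_{\text{rank}(a_i,h_j)}$ and a welfare part $V(\mu) := \sum_{e=(a_i,h_j) \in \mu} v_i(h_j)$, so that $W(\mu) := \sum_{e \in \mu} w_e = P(\mu) + V(\mu)$. Since $v_i(h_j) \in [0,1]$ and every matching has at most $n$ edges, $V(\mu) \in [0, n]$ for every matching $\mu$. I would then prove the theorem by establishing two facts: (a) every max-weight matching under $w_e$ also maximizes $P$, and hence is a priority-$\bmp$ matching by definition; and (b) on the set of priority-$\bmp$ matchings, $P$ is constant (equal to the priority-$\bmp$ optimum), so maximizing $W$ over this set is equivalent to maximizing $V$. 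Together, (a) and (b) imply that \Cref{algo:computeOptPriority} outputs a welfare-optimal priority-$\bmp$ matching. The Pareto-optimal case (where $p_r = 0$ for all $r$) is handled separately and is immediate: the algorithm reduces to finding a max-welfare matching, and any max-welfare matching is Pareto optimal, since a Pareto-dominating matching would strictly improve the total welfare.

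The crux is a \emph{scale-separation gap lemma} establishing (a): for every $\bmp \in \mathbbm{P}$ and every priority-$\bmp$ matching $\mu^*$, any matching $\mu'$ that is not priority-$\bmp$ satisfies $P(\mu^*) - P(\mu') > n$. From this, (a) follows: if $\mu'$ were max-weight but not priority-$\bmp$, then $W(\mu^*) - W(\mu') = (P(\mu^*) - P(\mu')) + (V(\mu^*) - V(\mu')) > n - n = 0$, a contradiction with $\mu'$ being max-weight. To prove the gap lemma I would invoke \Cref{clm:sigTopriority}, which for each of the three vectors in $\mathbbm{P}$ characterizes priority-$\bmp$ matchings by an appropriate lex-optimality condition on the signature $(s_1, \ldots, s_n)$ (or on the cardinality-prepended/reversed variant). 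Letting $j$ be the first coordinate of disagreement between the signatures of $\mu^*$ and $\mu'$ in the relevant lex order, the leading term contributes at least $p_j$ to the gap, while the tail $\sum_{k>j}(s^*_k - s'_k)p_k$ is bounded in magnitude by $2n \cdot p_{j+1}$ (using $\sum_k s^*_k, \sum_k s'_k \le n$). For each $\bmp \in \mathbbm{P}$, the key ratio $p_j / p_{j+1} \ge n^2$ holds in the residual (non-offset) part of the priorities, yielding $P(\mu^*) - P(\mu') \ge p_{j+1}(n^2 - 2n) = n(n-2)\,p_{j+1} > n$ for $n \ge 3$.

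The main obstacle I anticipate is the second and third vectors in $\mathbbm{P}$, where each $p_j$ carries a large common offset ($n^{2n}$ for max-cardinality rank-maximal and $4n^{2n}$ for fair) whose contribution to $P$ depends on the cardinality $|\mu|$, not just on the per-rank profile. For these vectors I would split the gap argument into two subcases. If $|\mu^*| > |\mu'|$, the offset alone contributes a gap of order $n^{2n}$, which dwarfs the $O(n \cdot n^{n-1})$ residual perturbation and certainly exceeds $n$. If $|\mu^*| = |\mu'|$, the offsets cancel and the comparison reduces to the lex argument on the residual terms (the $n^{2(n-j)}$ for max-cardinality rank-maximal, or $-2n^{j-1}$ for fair, with the ``fair'' lex direction reversed accordingly), for which the same scale-separation bound applies. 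Small $n$ (i.e., $n \le 2$) is essentially trivial and can be handled by direct inspection.
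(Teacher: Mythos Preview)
Your proposal is correct and follows essentially the same approach as the paper: both decompose the weight as $W(\mu) = P(\mu) + V(\mu)$ with $V(\mu) \le n$, argue that any max-weight matching must be priority-$\bmp$ via a gap bound $P(\mu^*) - P(\mu') > n$, and then conclude welfare-optimality from the fact that $P$ is constant on priority-$\bmp$ matchings. The paper's version is terser---it asserts the gap as $W'[\mu'] - W'[\mu] \ge n^2$ directly from ``$p_j \ge n^2$ for all $j$'' without the signature case analysis you carry out via \Cref{clm:sigTopriority}---so your treatment of the offset-bearing vectors (max-cardinality rank-maximal and fair) is actually more careful than the paper's, but the underlying idea is identical.
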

\begin{proof} 
	Let $\mu$ be the matching that is returned by \Cref{algo:computeOptPriority}. First, we argue that $\mu$ is a priority-$\bmp$ matching w.r.t.\ $\mi$. To do this, fix a $\bmp \in \mathbbm{P}$, and let us suppose that $\mu$ is not a priority-$\bmp$ matching, but is a max-weight matching in $\mathbb{G}_{\mi}$ with weights $\{w_e\}_{e \in \me}$, where for $e = (a_i, h_j)$ and $r(e) = rank(a_i, h_j)$, $w_e = p_{r(e)} + v_i(h_j)$ (see line~\ref{algo:line:weights} in \Cref{algo:computeOptPriority}). Let $\mu'$ be a priority-$\bmp$ matching w.r.t.\ $\mi$, which by definition means that it is the max-weight matching in $\mathbb{G}_{\mi}$ with weights $\{w'_e\}_{e \in \me}$, where $w'_e = p_{r(e)}$. Also, for a matching $\mu_1 \in \mm_{\mathbb{G}_\mi}$, let $W[\mu_1]$ and $W'[\mu_1]$ denote the sum of the edge weights in $\mu_1$ when using weights $\{w_e\}_{e \in \me}$ and $\{w'_e\}_{e \in \me}$, respectively. 
	
	Next, note that for any matching $\mu_1 \in \mm_{\mathbb{G}_\mi}$, $W'[\mu_1] \leq W[\mu_1]$. Additionally, note that, since $p_j \geq  n^2$ for all $j \in [n]$, $W'[\mu'] - W'[\mu] \geq n^2$. Finally, observe that, $W[\mu] = \sum_{e \in \mu} w(e) = \sum_{e \in \mu} p_{r(e)} + v_i(h) \leq W'[\mu] + n < W'[\mu'] \leq W[\mu']$, where the first inequality follows from the fact that $v_i(\cdot) \in [0, 1]$ and the second inequality follows since $W'[\mu'] - W'[\mu] \geq n^2$. However, note that $W[\mu] < W[\mu']$ is a contradiction since $\mu$ is the max-weight matching in $\mathbb{G}_{\mi}$ with weights $\{w_e\}_{e \in \me}$.
	
	Given that $\mu$ is a priority-$\bmp$ matching, the fact that it is welfare-optimal follows since we are computing the max-weight matching in $\mathbb{G}_{\mi}$ and any priority-$\bmp$ matching $\mu' \in \mm_{\mathbb{G}_{\mi}}$.
\end{proof}
	\section{Improving Welfare using Threshold Queries}
In this section we look at the main question considered in this paper, which is broadly: \textit{how can one improve social welfare in one-sided matching problems by asking only a small number of queries regarding cardinal utilities}. As mentioned previously, we are interested in binary threshold queries, $\mq(a_i, h_j, t_k)$, which asks an agent $a_i$ if a particular object $h_j$ is of value at least $t_k$. Towards this end, we begin by considering adaptive algorithms---i.e., algorithms that are allowed to change its queries based on the agents' responses to its previous queries---and show how, when considering each of the four notions (i.e., Pareto optimal, rank-maximal, max-cardinality rank-maximal, and fair matchings) of interest, one can obtain a $(1+\epsilon)$-approximation to the optimal welfare among all matchings that satisfy that notion. Following this, we look at non-adaptive algorithms, which we believe are the more interesting and practical ones for this setting. In particular, we restrict ourselves to algorithms that can ask at most one query per (agent, object) pair  and show upper and lower bounds on the approximation achievable. Unless explicitly specified, the results in this section work with respect to both unit-sum and unit-range valuations.

\subsection{Adaptive algorithm to achieve \texorpdfstring{$(1+\epsilon)$}{}-approximation} \label{sec:Adap}
The idea in \Cref{algo:adaptiveApproxMaxweight} is simple. For a specific choice of $c$, it associates a partition of objects with respect to every agent, where, for $k\in[c]$, an object is in $\me_{ik}$ if agent $i$'s value of for the object is within the interval $B_k = [t_k, t_{k-1})$, where $t_k = (\frac{2}{2+\epsilon})^k$, $t_{0} = 1$, and the right endpoint interval is closed when $k = 1$. Following this, for every edge $e = (a_i, h_j)$ in $\mathbb{G}_{\mi}$, it assigns a weight $w_e = p_r + t_k$, where $r = rank(a_i, h_j)$, $h_j \in \me_{ik}$, and $t_k$ is the left endpoint of the interval $B_k$, and computes the max-weight matching on the resulting weighted graph. Below we show for $c = \ceil{\frac{\log(n^2 \cdot 1/\epsilon)}{\log{(1+\epsilon/2)}}}$, this results in an $(1+\epsilon)$-approximation algorithm that uses $O(c \log n)$ queries per agent. In particular, this means that one can achieve a 2-approximation using $O(\log^2 n)$ queries per agent.
\begin{algorithm}[tb]
	{\small 
		\centering
		\noindent\fbox{%
			\begin{varwidth}{\dimexpr\linewidth-4\fboxsep-4\fboxrule\relax}
				\begin{algorithmic}[1]
					\small 
					\Input an $\epsilon > 0$, an instance $\mi = (\mn, \mh, \mpp)$, and a priority vector $\bmp = (p_1, \ldots, p_n)$ 
					\Output returns a PO matching when $p_i=0$ for all $i \in [n]$ and a priority-$\bmp$ matching when $\bmp \in \mathbb{P}$
					
					\State $\mathbb{G}_{\mi} = (\mn \cup \mh, \me) \gets$ graph induced by $\mi$
					\State $c \gets \left\lceil \frac{\log(n^2 \cdot 1/\epsilon)}{\log{(1+\epsilon/2)}} \right \rceil$
					\State $t_i \gets (\frac{2}{2+\epsilon})^i$, for $i \in [c]$ 
					
					\For{$a_i \in \mn$}
					\For{$k \in [c]$}						
					\State $\me_{ik} \gets \{(a_i, h_j) \in \me \mid \mq(a_i, h_j, t_k) = 1 \text{ and, if } k\geq 2, \mq(a_i, h_j, t_{k-1}) = 0\}$  \label{algo:line:eik}
					\For{$e = (a_i, h_j) \in \me_{ik}$} 
					\State $r \gets \text{rank}(a_i, h_j)$
					\State $\val'(e) \gets t_k$  \label{algo:line:val'}
					\State $w_e \gets p_r + \val'(e)$ \label{algo:assign-weight}
					\EndFor	
					\EndFor						
					\EndFor
					\State $\mu \gets$ max-weight matching in $\mathbb{G}_{\mi}$ with weights $\{w_e\}_{e \in \me}$ \label{algo:line:muGen}
					\If{$p_i = 0$ for all $i \in [n]$} 
					\State $\mu \gets$ run top-trading cycles (TTC) algorithm  with $\mu$ (from line~\ref{algo:line:muGen}) as the initial endowment		\label{algo:line:ttc}		
					\EndIf
					\State \Return $\mu$ 
				\end{algorithmic}
		\end{varwidth}}
		\caption{returns a PO matching that achieves a $(1+\epsilon)$-approximation to the optimal social welfare among PO matchings or a priority-$\bmp$ matching that achieves a $(1+\epsilon)$-approximation to the optimal social welfare among priority-$p$ matchings.}
		\label{algo:adaptiveApproxMaxweight}
	}
\end{algorithm}

\begin{theorem} \label{thm:UB-Adap}
	Given an $\epsilon > 0$, an instance $\mi = (\mn, \mh, \mpp)$, and a priority vector $\bmp = (p_1, \ldots, p_n)$, \Cref{algo:adaptiveApproxMaxweight} is an adaptive algorithm that asks $O(c \log n)$ queries per agent, where $c = \ceil{\frac{\log(n^2 \cdot 1/\epsilon)}{\log{(1+\epsilon/2)}}}$, and returns a
	\begin{enumerate}[label=\roman*),ref=\roman*]
		\item Pareto optimal matching $\mu$ that achieves a $(1+\epsilon)$-approximation to the optimal welfare among all Pareto optimal matchings when $p_i = 0$ for all $i \in [n]$.
		\item priority-$\bmp$ matching $\mu$ that achieves a $(1+\epsilon)$-approximation to the optimal welfare among all priority-$\bmp$ matchings when $\bmp \in \mathbb{P}$.
	\end{enumerate}   
\end{theorem}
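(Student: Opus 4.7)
The plan is to verify three things: (a) that the algorithm can be implemented using only $O(c\log n)$ queries per agent, (b) that $\mu$ has the required structural property (PO or priority-$\bmp$), and (c) the $(1+\epsilon)$ welfare approximation. For (a), observe that because each $v_i$ is consistent with $P_i$, the acceptable objects of $a_i$, when sorted in decreasing order of preference, have non-increasing values and hence non-decreasing bucket indices. Computing the partition $\{\me_{ik}\}_{k\in[c]}$ therefore reduces to locating, for each of the $c$ thresholds $t_k$, the right endpoint of the prefix $\{h\in A_i:v_i(h)\geq t_k\}$, which a single binary search can pinpoint with $O(\log n)$ queries; $c$ such searches per agent give the stated bound.

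For the welfare analysis, the central inequality is that for every edge $e=(a_i,h_j)$,
\[
\val'(e)\;\leq\;\val(e)\;\leq\;(1+\epsilon/2)\,\val'(e)+t_c,
\]
since either $e\in\me_{ik}$ and $t_k=\val'(e)\leq v_i(h_j)<t_{k-1}=(1+\epsilon/2)t_k$, or $v_i(h_j)<t_c$ while $\val'(e)=0$, and the choice of $c$ guarantees $t_c\leq\epsilon/n^2$. For part (ii), I first argue that $\mu$ is itself a priority-$\bmp$ matching by replaying the proof of \Cref{thm:computeOptPriority}: since $\val'(e)\in[0,1]$, the $\val'$ contribution is at most $n$, whereas any non-priority-$\bmp$ matching loses at least $n^2$ in the priority-only weight for each $\bmp\in\mathbb{P}$, and this gap dominates. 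Hence $\mu$ and any welfare-optimal priority-$\bmp$ matching $\mu^*$ carry identical priority weight, so the max-weight condition of line~\ref{algo:line:muGen} gives $\sum_{e\in\mu}\val'(e)\geq\sum_{e\in\mu^*}\val'(e)$. Combining this with the two-sided bound above and $nt_c\leq\epsilon/n$ yields
\[
\sw(\mu^*)\;\leq\;(1+\epsilon/2)\sum_{e\in\mu^*}\val'(e)+nt_c\;\leq\;(1+\epsilon/2)\,\sw(\mu)+\epsilon/n,
\]
and folding the $\epsilon/n$ slack into the multiplicative factor---using that $\sw(\mu^*)$ is at least $\Omega(1/n)$ in any non-trivial instance---delivers the desired $(1+\epsilon)$-approximation.

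Part (i) follows from the same max-weight argument with priorities set to zero, producing an initial matching $\mu_0$ whose welfare approximates that of the welfare-optimal PO matching (which by \Cref{thm:computeOptPriority} is itself a true-value max-weight matching on $\bbg_\mi$). The subtlety is that a max-weight matching with \emph{approximate} weights need not be Pareto-optimal for the true preferences, which is precisely why line~\ref{algo:line:ttc} runs TTC from $\mu_0$: TTC outputs a Pareto-optimal matching in which every agent weakly prefers her new object to $\mu_0(a_i)$, so $\sw(\mu)\geq\sw(\mu_0)$ and the approximation survives. The main obstacle I anticipate is this last step---converting the additive slack $nt_c$ into the clean multiplicative $(1+\epsilon)$ guarantee---since doing so requires either a mild lower bound on $\sw(\mu^*)$ (which a pigeonhole argument on unit-sum/unit-range valuations will supply) or a separate treatment of degenerate instances in which the optimal welfare is already $O(\epsilon/n)$.
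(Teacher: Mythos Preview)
Your proposal is correct and follows essentially the same route as the paper: binary search per threshold for the query bound, the \Cref{thm:computeOptPriority}-style argument that the $\val'$-perturbation (at most $n$ in total) cannot overturn the $\geq n^2$ gap in priority weight, the observation that equal priority weight lets the max-weight condition of line~\ref{algo:line:muGen} reduce to $\sum_{e\in\mu}\val'(e)\geq\sum_{e\in\mu^*}\val'(e)$, and TTC's individual rationality to pass from the intermediate matching to $\mu$ in the PO case. The paper packages the approximation step slightly differently---it partitions $\mu^*$ into high-value edges $H=\{e:\val(e)\geq\epsilon/(2n^2)\}$ and low-value edges $L$, bounding $\sw(H)\leq(1+\epsilon/2)\,\sw(\mu')$ and $\sw(L)\leq\epsilon/(2n)$ separately---but this is just the case split hidden inside your unified inequality $\val(e)\leq(1+\epsilon/2)\val'(e)+t_c$.

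One small slip: to absorb the additive $\epsilon/n$ term into the multiplicative factor you need a lower bound on $\sw(\mu)$, not on $\sw(\mu^*)$. The paper obtains this by observing that $\mu$, being Pareto optimal or priority-$\bmp$, matches at least one agent to a rank-$1$ object (their \Cref{clm:boundB}), whence $\sw(\mu)\geq 1/n$ under either unit-sum or unit-range valuations. Your pigeonhole intuition is exactly right; just apply it to $\mu$ rather than to $\mu^*$.
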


\begin{proof}We first argue that \Cref{algo:adaptiveApproxMaxweight} adaptively asks $O(c \log n)$ queries per agent. To see this, note that for each agent $a_i$ and any $k \in [c]$, creating the set $\me_{ik}$ takes $O(\log |A_i|)$ queries. This is so because we know $a_i$'s weak order over $A_i$, the set of acceptable objects of $a_i$, and because we are using an adaptive algorithm, we can see the results of one query before we ask the next. Therefore, we can perform binary search to determine the least preferred object  $h_{j_k}$ such that $v_i(h_{j_k}) \in [t_k, t_{k-1})$, where $t_i = (\frac{2}{2+\epsilon})^i$, $t_{0} = 1$, and the right endpoint interval is closed when $k = 1$ (note that there could multiple such objects since we allow ties, but we can still do this since we have the weak order). Additionally, since, for all $k \in [c]$, $t_k < t_{k-1}$, it is easy to see that  we can start from $k=1$ and proceed to $k=c$ and form the sets $\me_{ik} = \{h \in \mh \mid h \succeq h_{j_k} \text{ and, if } k\geq 2, h \prec h_{j_{k-1}}\}$. Therefore, since we create $c$ sets per agent, \Cref{algo:adaptiveApproxMaxweight}  asks $O(c \log n)$ queries per agent.
	
	Next, we need to argue that the matching returned is Pareto optimal when $p_i = 0$ for all $i \in [n]$ and is a priority-$\bmp$ matching when $\bmp \in \mathbb{P}$. To see this, note that in the former case, the matching returned is one that is returned by the top-trading cycles (TTC) algorithm (see line~\ref{algo:line:ttc} in \Cref{algo:adaptiveApproxMaxweight}). Therefore, we know that it is PO \cite{shapley74}. As for the case when $\bmp \in \mathbb{P}$, note that the matching returned is the max-weight matching in $\mathbb{G}_{\mi}$ with weights $\{w_e\}_{e \in \me}$, where $w_e = p_r + \val'(e)$ (see line~\ref{algo:assign-weight}). Now, since for any $e \in \me$, $\val'(e) \in [0, 1]$, we know from \Cref{thm:computeOptPriority} that such a matching is a priority-$\bmp$ matching.
	
	Finally, in order to show that the returned matching achieves a $(1+\epsilon)$-approximation to optimal welfare among all Pareto optimal/priority-$\bmp$ matchings, let $\mu^*$ be the matching that maximizes the total welfare when true edge-weights (which in turn are the true utilities of the agents) are known. That is, $\mu^*$ is the max-weight matching in $\mathbb{G}_{\mi}$ with weights $(w^*_e)_{e \in \me}$, where, for $e = (a_i, h_j) \in \me$, $w^*_e = p_r + v_i(h_j)$. By \Cref{thm:computeOptPriority} we know that $\mu^*$ is the welfare-optimal priority-$\bmp$ matching when $\bmp \in \mathbb{P}$ and it is the welfare optimal Pareto optimal matching when $p_i = 0$ for all $i\in[n]$. So, now, if $\mu$ is the matching computed by \Cref{algo:adaptiveApproxMaxweight}, then we need to show that $\sw(\mu^*) \leq (1+\epsilon) \cdot \sw(\mu)$. 
	
	To see this, let us partition the edges in $\mu^*$ into sets $H$ and $L$ such that $H = \{(a_i, h_j) \in \mu^* \mid v_i(h_j) \geq \frac{\epsilon}{2n^2}\}$ and $L = \mu^* \setminus H$. Next, note that we have chosen the value of $c$ such that $t_c \geq \frac{\epsilon}{2n^2}$. This in turn implies that for any $k \in [c]$ and $e \in \me_{ik}$, the value assigned to $e = (a_i, h_j) \in H$ in the algorithm (i.e, $\val'(e)$ assigned in line~\ref{algo:line:val'} in \Cref{algo:adaptiveApproxMaxweight}) is at least $\frac{t_{k}}{t_{k-1}} \cdot \val(e)$ (and is at most $\val(e)$). So, using this, we have that $\sw(H) = \sum_{e \in H}\val(e) \leq \frac{2+\epsilon}{2}\sum_{e \in H}\val'(e) \leq \frac{2+\epsilon}{2}\cdot \sw(\mu')$, where $\mu'$ is the matching computed in line~\ref{algo:line:muGen} and the last inequality follows from the fact that $H \subseteq \me$, and $\mu'$ is the max-weight matching in $\mathbb{G}_{\mi}$ (see line~\ref{algo:line:muGen}).
	%
	
	Finally, in order to get the approximation bound, let us first bound the ratio $\frac{\sw(\mu^*)}{\sw(\mu')}$. Note that since $|L| \leq n$ and for each $e = (a_i, h_j) \in L$, $v_i(h_j) \leq \frac{\epsilon}{2n^2}$, we have,	  
	\begin{align} \label{eqn:adap}
		\frac{\sw(\mu^*)}{\sw(\mu')} = \frac{\sw(H) + \sw(L)}{\sw(\mu')}
		\leq \frac{\frac{2+\epsilon}{2} \cdot \sw(\mu') + n \cdot \frac{\epsilon}{2n^2}}{\sw(\mu')}.
	\end{align}
	
	Next, note that for the case when $\bmp \in \mathbb{P}$, $\mu' = \mu$, and since $\mu'$ is a priority-$\bmp$ matching, we have that $\sw(\mu') = \sw(\mu) \geq \frac{1}{n}$, where the inequality follows since in a priority-$\bmp$ matching at least one of the agents is matched to a rank-1 object (see Claim~\ref{clm:boundB} in \Cref{app:addClaims} for a proof) and the valuations are either unit-sum or unit-range. Therefore, using this along with (\ref{eqn:adap}), we have that, 
	\begin{align*}
		\frac{\sw(\mu^*)}{\sw(\mu')}
		\leq \frac{\frac{2+\epsilon}{2} \cdot \sw(\mu') + n \cdot \frac{\epsilon}{2n^2}}{\sw(\mu')}
		&\leq 1 + \epsilon.
	\end{align*}
	
	For the case when $p_i = 0$ for all $i\in[n]$, note that $\mu$ is the matching that is returned by TTC on initial endowments $\mu'$. Therefore, since TTC is individually rational and returns a matching that is PO \cite{shapley74}, we have that $\sw(\mu) \geq \sw(\mu')$. Hence, using this along with (\ref{eqn:adap}), we have that,
	\begin{align*}
		\frac{\sw(\mu^*)}{\sw(\mu)} \leq \frac{\frac{2+\epsilon}{2} \cdot \sw(\mu') + n \cdot \frac{\epsilon}{2n^2}}{\sw(\mu)}
		&\leq 1 + \epsilon.
	\end{align*}
	where the last inequality follows from the fact that $\sw(\mu) \geq \frac{1}{n}$, since in a Pareto optimal matching at least one of the agents is matched to a rank-1 object (see Claim~\ref{clm:boundB} in \Cref{app:addClaims} for a proof) and the valuations are either unit-sum or unit-range.
\end{proof}

\textbf{Remark:} As stated previously, an algorithm that produces a PO matching and also achieves an $\alpha$-approximation to the optimal welfare among all PO matchings has a distortion of $\alpha$, since a welfare-optimal matching is also PO. This in turn implies that, although the objectives here and in the paper by \citet{aman20b} are different, the distortion guarantees implied by the algorithm above is similar to the one provided by the $\lambda$-ThresholdStepFunction algorithm in their paper (which, though, even works for unrestricted valuations, unlike the case here). 
Moreover, while the algorithms share some similarities, we use a weaker query model and, most importantly, for an appropriate choice of $\bmp$, our algorithm produces a Pareto optimal or a priority-$\bmp$ matching.

\subsection{Non-adaptive algorithms: asking one query per (agent, object) pair} \label{sec:nonAdap}
In this section we turn our attention to non-adaptive algorithms, in particular looking at algorithms that can only ask one query per (agent, object) pair and cannot change these queries depending on the responses to previous ones. As mentioned in the Introduction, we believe that, at least in some contexts, this is the more interesting and practical setting to consider, since such an algorithm does not have to wait for the agents to respond and also does not require an agent to answer multiple queries with respect to the same object---doing which would in turn entail that the agent is somewhat sure about their cardinal utilities.

Below, we present two algorithms for when agents have unit-sum valuations, first in the context of priority-$\bmp$ matchings and second for Pareto optimal matchings. The latter is an extension of the former and since their proofs are similar, we present only the proof for Pareto optimal matchings in the main body, relegating the one for priority-$\bmp$ matchings to \Cref{app:missingProofsnonAdap}. Informally, the main idea in the algorithm for Pareto optimal matchings is to first carefully choose a set of values $\{t_k\}_ k \in [n]$ and then ask every agent $a_i$ if an object $h_j$ of value at least $t_r$, where $r = rank(a_i, h_j)$. In particular, we set $t_1 =  \frac{1}{n^{1/3}}$ and $t_i = \frac{1}{\min\{i, n^{1/3}\}\cdot{n}^{2/3}}$, for all $i \in \{2,\ldots,n\}$. Next, we draw a bipartite graph where there is an edge between an agent $a_i$ and object $h_j$, if the agent answered ``Yes" w.r.t.\ this object; this edge is assigned a weight $t_r$. Following this, we find the max-weight matching in this graph. Note that not all the agents/objects may have been matched. Therefore, we compute what we refer to as an auxiliary matching, where the auxiliary matching is a matching that maximizes the number of agents who are matched with an edge of rank at most $\floor{\sqrt[3]{n}/2}$ in the case when the max-weight matching in the previous step has no edge of rank-1 (which happens when every agent responds ``NO"  to the query w.r.t.\ their rank-1 object), and is the matching that maximizes the number of agents who are matched with a rank-$1$ edge otherwise. Finally, we combine the max-weight matching computed in the first step with the auxiliary matching (by adding unmatched pairs from the auxiliary one to the max-weight one), arbitrarily match any leftover pairs, and run TTC with the resulting matching as initial endowment. In the theorem below, we show that this achieves an $O(n^{2/3})$-approximation to the optimal social welfare among PO matchings. 
\begin{algorithm}[tb]
	{\small \centering
		\noindent\fbox{%
			\begin{varwidth}{\dimexpr\linewidth-4\fboxsep-4\fboxrule\relax}
				\begin{algorithmic}[1]
					\small 
					\Input an instance $\mi =(\mn, \mh, \mpp = (P_1, \ldots, P_n))$ and priorities $\bmp = (p_1, \ldots, p_n)$
					\Output a priority-$\bmp$ matching when $\bmp \in \mathbb{P}$
					
					\State $\mathbb{G}_{\mi} = (\mn \cup \mh, \me) \gets$ graph induced by $\mi$
					\State $t_1 \gets \frac{1}{n^{1/3}}$
					\State $t_i \gets \frac{1}{\min\{i, n^{1/3}\}\cdot{n}^{2/3}}$, for all $i \in \{2,\ldots,n\}$								
					\For{$e = (a_i, h_j) \in \mn \times \mh$}
					\State $r \gets \text{rank}(a_i, h_j)$
					\If{$\mq(a_i, h_j, t_r)$} 
					\State $w_e \gets p_r + t_r$
					\Else 
					\State $w_e \gets p_r$
					\EndIf
					\EndFor	
					\State $\mu \gets$ max-weight matching in $\mathbb{G}_{\mi}$, where weights are $(w_e)_{e \in \me}$
					\label{algo:line:muMM}		
					\State\Return $\mu$    			
				\end{algorithmic}
		\end{varwidth}}
		\caption{returns a priority-$\bmp$ matching that achieves an $O(n^{2/3})$-approximation to the optimal social welfare among priority-$p$ matchings for the case when the agents have unit-sum valuations.}
		\label{algo:UB-nonAdap-Priority}
	}
\end{algorithm}

\begin{algorithm}[tb]
	{\small \centering
		\noindent\fbox{%
			\begin{varwidth}{\dimexpr\linewidth-4\fboxsep-4\fboxrule\relax}
				\begin{algorithmic}[1]
					\small 
					\Input an instance $\mi = (\mn, \mh, \mpp = (P_1, \ldots, P_n))$
					\Output a Pareto optimal matching
					
					\State $\mu'_{MM} \gets$ matching returned by \Cref{algo:UB-nonAdap-Priority} on $\mi$ and $\bmp = (0, \dots, 0)$		
					\State $\mu_{MM} \gets \mu'_{MM} \setminus \{e \in \mu_{MM} \mid w_e = 0\}$  \Comment{{\footnotesize remove edges with weight 0 from $\mu'_{MM}$}} \label{algo:line:muMM-PO}
					\State $\mu^1_{MM} = \{(a_i, h_j) \mid (a_i, h_j) \in \mu_{MM} \text{ and } rank(a_i, h_j) = 1\}$
					\If{$|\mu^1_{MM}| == 0$}
					\State $\mu'_{aux} \gets$ matching in $\mathbb{G}_{\mi}$ that maximizes the number of agents who are matched with an edge of rank at most $\floor{\sqrt[3]{n}/2}$ \label{line:algo:rankk}					
					\Else
					\State $\mu'_{aux} \gets$ matching in $\mathbb{G}_{\mi}$ where as many agents as possible to a rank-$1$ edge \label{line:algo:rank1}			
					\EndIf
					\State $\mu_{aux} \gets \mu'_{aux} \setminus \{(a, o) \mid (a, o) \in \mu'_{aux} \text{ and either $a$ or $o$ is matched in $\mu_{MM}$}\}$
					
					\State $\mu_{\text{rest}} \gets$ arbitrarily match the acceptable (agent, object) pairs that are not matched in $\mu_{MM} \cup \mu_{aux}$
					\State $\mu \gets $ run $\ttc$ with $\mu_{MM} \cup \mu_{aux} \cup \mu_{\text{rest}}$ as initial endowments and return the resulting matching. 		
					
					\State\Return $\mu$    			
				\end{algorithmic}
		\end{varwidth}}
		\caption{returns a PO matching achieves an $O(n^{2/3})$-approximation to the optimal social welfare among PO matchings for the case when the agents have unit-sum valuations.}
		\label{algo:UB-nonAdap-PO}
	}
\end{algorithm}

\begin{theorem} \label{thm:UB-nonAdap-PO}
	Given an instance $\mi = (\mn, \mh, \mpp = (P_1, \ldots, P_n))$, Algorithm~\ref{algo:UB-nonAdap-PO} asks one non-adaptive query per (agent, object) pair and returns a Pareto optimal matching that achieves an $O(n^{2/3})$-approximation to the optimal welfare among all Pareto optimal matchings for the case when agents have unit-sum valuations. 
\end{theorem}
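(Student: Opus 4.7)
The plan is to verify three properties of Algorithm~\ref{algo:UB-nonAdap-PO}: (i) it issues at most one non-adaptive threshold query per (agent, object) pair; (ii) its output $\mu$ is Pareto optimal; and (iii) $\sw(\mu^*) \leq O(n^{2/3}) \sw(\mu)$ for any welfare-optimal PO matching $\mu^*$. Items (i) and (ii) are immediate: Algorithm~\ref{algo:UB-nonAdap-PO} calls Algorithm~\ref{algo:UB-nonAdap-Priority} with $\bmp = 0$, which for each pair $(a_i, h_j)$ issues the single rank-dependent query $\mq(a_i, h_j, t_r)$ with $r = \text{rank}(a_i, h_j)$ (non-adaptive because the thresholds $t_r$ depend only on ordinal information), and the final step is TTC, which always yields a Pareto optimal matching.

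For (iii), I would partition $\mu^*$ into \emph{Yes edges} $H = \{(a_i,h_j) \in \mu^* : v_i(h_j) \geq t_{\text{rank}(a_i,h_j)}\}$ and \emph{No edges} $L = \mu^* \setminus H$ and bound each piece. The rank-$1$ No edges have value $< 1/n^{1/3}$ and contribute less than $n^{2/3}$ in total; the rank-$\geq 2$ No edges each have value $< t_r \leq 1/(2n^{2/3})$ and contribute less than $n^{1/3}/2$; hence $\sw(L) = O(n^{2/3})$. For Yes edges, the key observation is that the thresholds are chosen so that $v_e / t_r \leq n^{2/3}$ uniformly for all $e \in H$: one checks the three sub-ranges $r = 1$ (ratio $\leq n^{1/3}$), $2 \leq r \leq n^{1/3}$ (ratio $\leq n^{2/3}$), and $r > n^{1/3}$ (ratio $\leq n^{2/3}$), using the unit-sum bound $v_e \leq 1/r$. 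Since $H$ is a matching of Yes edges, the max-weight property of $\mu_{MM}$ (weights $t_r$ for Yes, $0$ for No) yields $\sum_{e \in H} t_r \leq W_{\mu_{MM}} \leq \sw(\mu_{MM})$, the last step because every edge of $\mu_{MM}$ is a Yes edge (so $v_e \geq t_r$). Combined with TTC's individual rationality $\sw(\mu) \geq \sw(\mu_{MM})$, this gives $\sw(H) \leq n^{2/3} \sw(\mu)$ and therefore $\sw(\mu^*) \leq n^{2/3} \sw(\mu) + O(n^{2/3})$.

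The remaining---and what I expect to be the main---technical obstacle is absorbing the additive $O(n^{2/3})$ into the multiplicative factor by showing $\sw(\mu) \geq \Omega(1)$, and for this I would lean on the algorithm's case split. In Case 1 ($|\mu^1_{MM}| \geq 1$), a sub-split on $|Y_1|$ suffices: if $|Y_1| \geq n^{1/3}$ then $W_{\mu_{MM}} \geq |Y_1| t_1 \geq 1$ already; otherwise $|Y_1| < n^{1/3}$ and the rank-$1$ part of $\mu^*$ already contributes at most $|Y_1| + |N_1|/n^{1/3} = O(n^{2/3})$, and a constant lower bound on $\sw(\mu)$ is extracted from $\mu_{aux}$ (chosen to extend a maximum matching of Yes rank-$1$ edges), the arbitrary $\mu_{\text{rest}}$ completion, and TTC's individual rationality. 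In Case 2 ($|\mu^1_{MM}| = 0$), every agent's top value is strictly below $1/n^{1/3}$, so unit-sum forces $|A_i| > n^{1/3}$, and the tail estimate $v_i^{(k)} \geq (1 - k \cdot v_i^{(1)})/(|A_i| - k)$ at $k = \lfloor n^{1/3}/2 \rfloor$ lower bounds the $k$-th value by $\Omega(1/n)$; combining the rank-$\leq k$ matches packed by $\mu_{aux}$ with the $\mu_{\text{rest}}$ completion and TTC then yields $\sw(\mu) = \Omega(1)$, exploiting the denseness implied by $|A_i| > n^{1/3}$ for every agent. The delicate part is ensuring that these pieces jointly yield a constant lower bound rather than the weaker $1/n^{1/3}$ or $1/n^{2/3}$ bounds that $\mu_{MM}$ alone would give.
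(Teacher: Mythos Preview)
Your high-level decomposition matches the paper's Lemma~\ref{lemma:highVals-PO}: the bound $\sw(\mu^*_H)\le n^{2/3}\sw(\mu_{MM})\le n^{2/3}\sw(\mu)$ is exactly right, and your observation that $\sw(\mu^*_L)=O(n^{2/3})$ is also correct. The gap is in the last step, where you try to absorb the additive $O(n^{2/3})$ by arguing $\sw(\mu)=\Omega(1)$. That claim is \emph{false}. Take all agents sharing the same rank-$1$ object $h_1$; let agent $a_1$ have $v_1(h_1)=t_1=n^{-1/3}$ (with the remaining mass spread so thinly over $h_2,\dots,h_n$ that every rank-$\ge 2$ answer is No), and for $i\ge 2$ let $A_i=\{h_1,h_i\}$ with $v_i(h_1)=1-\delta$, $v_i(h_i)=\delta$ for tiny $\delta$. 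Every rank-$1$ Yes edge is incident to $h_1$, so the max-weight matching has weight exactly $t_1$ and, under adversarial tie-breaking, selects $(a_1,h_1)$. Then $\mu_{aux}=\emptyset$, $\mu_{\text{rest}}$ assigns each $a_i$ ($i\ge 2$) to $h_i$, and TTC changes nothing; hence $\sw(\mu)=n^{-1/3}+(n-1)\delta=\Theta(n^{-1/3})$. Your sub-case sketch (``a constant lower bound on $\sw(\mu)$ is extracted from $\mu_{aux}$\ldots'') simply does not go through here, and with $\sw(\mu)=\Theta(n^{-1/3})$ your inequality $\sw(\mu^*)\le n^{2/3}\sw(\mu)+O(n^{2/3})$ only yields an $O(n)$ ratio.

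The paper avoids this by \emph{not} bounding $\sw(\mu^*_L)$ uniformly. In Case~1 it writes $\sw(\mu^*_L)\le |L_1|\,t_1+n\,t_2$ and then uses $|L_1|\le |X'_1|$, where $|X'_1|$ is the size of a maximum rank-$1$ matching. The companion lower bound is Claim~\ref{clm:clm1}: $\sw(\mu_{MM})+\sw(\mu_{aux})\ge \max\{|X'_1|/(2n),\,t_1\}$. The point is that the \emph{same} instance-dependent quantity $|X'_1|$ appears on both sides, so the ratio $|X'_1|t_1\big/\bigl(|X'_1|/(2n)\bigr)=2n^{2/3}$ is bounded even when $|X'_1|$ (and hence $\sw(\mu)$) is tiny. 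The term $n\,t_2$ is paired with the $t_1$ lower bound, giving $n\,t_2/t_1\le n^{2/3}$. Case~2 is analogous with $|X'|$ (the size of a maximum rank-$\le k$ matching) playing the role of $|X'_1|$. This term-by-term pairing of numerator and denominator is the idea your proposal is missing; trying to push everything through a single global lower bound on $\sw(\mu)$ cannot succeed.
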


Before we prove this, we introduce the following notations and terminologies which will be useful. Let $\mu^{*}$ denote the matching that achieves optimal welfare among all Pareto optimal matchings when the agents have unit-sum valuations. Let $H_i$ denote the set of agents who are matched to their $i$-th choice in $\mu^{*}$ and have value at least $t_i$ for their partner in $\mu^{*}$. Similarly, let $L_i$ denote the set of agents who are matched to their $i$-th choice in $\mu^{*}$ and have value less than $t_i$ for their partner in $\mu^{*}$. Additionally, let $H = \cup_{i=1}^n H_i$, $L = \cup_{i=1}^{n} L_i$, and, for some $S \subseteq \mn$, $\mu^{*}_S \subseteq \mu^{*}$ be the set of edges $(a_i, h_j)$ such that $a_i \in S$ and $(a_i, h_j) \in \mu^{*}$. Now, if $\sw(\mu_S)$ denotes the sum of values of the edges in $\mu_S$ (calculated based on the true utilities of the agents), then note that
$\sw({\mu^*}) = \sw(\mu^{*}_H) +  \sw(\mu^{*}_L) = \sum_{(a_i,h_j) \in \mu^{*}_H} v_i(h_j) + \sum_{(a_i,h_j) \in \mu^{*}_L} v_i(h_j)$.

Next, we prove the following lemma.

\begin{lemma} \label{lemma:highVals-PO}
	Let $\mu_{MM}$ be the matching that is computed in line~\ref{algo:line:muMM-PO} in \Cref{algo:UB-nonAdap-PO}. Then, $\sw(\mu^{*}_H) \leq n^{2/3} \cdot \sw(\mu_{MM})$.
\end{lemma}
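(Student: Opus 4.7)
The plan is to upper bound $\sw(\mu^*_H)$ edge-by-edge using the threshold weights assigned by the algorithm, and then invoke the max-weight property of $\mu_{MM}$ to transfer this bound to $\sw(\mu_{MM})$ with a loss factor of $n^{2/3}$.

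First I would exploit unit-sum valuations to get a clean per-edge upper bound: for any $(a_i, h_j) \in \mu^*_{H_r}$, the object $h_j$ is $a_i$'s $r$-th choice, so $a_i$'s top $r$ objects all have value at least $v_i(h_j)$ and sum to at most $1$, giving $v_i(h_j) \leq 1/r$. Next, I would verify by a short case analysis that the thresholds are chosen precisely so that $1/r \leq n^{2/3} \cdot t_r$ for every $r \in [n]$: for $r=1$ we have $n^{2/3} t_1 = n^{1/3} \geq 1$; for $r \geq 2$ we have $n^{2/3} t_r = 1/\min\{r, n^{1/3}\} \geq 1/r$. Combining the two observations, $v_i(h_j) \leq n^{2/3} \cdot t_r$ for every $(a_i, h_j) \in \mu^*_{H_r}$.

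I would then note that $\mu^*_H$ is itself a valid matching in $\mathbb{G}_{\mi}$: each edge in it satisfies $v_i(h_j) \geq t_r > 0$, so $h_j \in A_i$. Moreover, since $v_i(h_j) \geq t_r$ the binary threshold query $\mq(a_i, h_j, t_r)$ returns $1$ on every such edge, and hence the algorithm assigns weight $w_e = t_r$ (recall $\bmp = 0$ in Algorithm~\ref{algo:UB-nonAdap-PO}'s call). Summing over $\mu^*_H$ yields
\[
\sw(\mu^*_H) \;=\; \sum_{(a_i,h_j) \in \mu^*_H} v_i(h_j) \;\leq\; n^{2/3} \sum_{(a_i,h_j) \in \mu^*_H} t_{rank(a_i,h_j)} \;=\; n^{2/3} \cdot w(\mu^*_H).
\]

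To finish I would invoke the max-weight property. Since $\mu'_{MM}$ is a max-weight matching in $\mathbb{G}_{\mi}$ and dropping zero-weight edges in line~\ref{algo:line:muMM-PO} preserves the total weight, $w(\mu_{MM}) = w(\mu'_{MM}) \geq w(\mu^*_H)$. Finally, every remaining edge $(a_i, h_j) \in \mu_{MM}$ has $w_e = t_r > 0$, which means the agent answered ``Yes'' and so $v_i(h_j) \geq t_r = w_e$; summing gives $\sw(\mu_{MM}) \geq w(\mu_{MM})$. Chaining the three inequalities yields the desired $\sw(\mu^*_H) \leq n^{2/3} \cdot \sw(\mu_{MM})$. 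The only delicate step is really the verification that the peculiar definition of $t_r$ makes the bound $1/r \leq n^{2/3} t_r$ hold uniformly; the rest of the argument is a standard max-weight sandwich.
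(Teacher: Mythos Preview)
Your proposal is correct and follows essentially the same approach as the paper: bound each edge in $\mu^*_H$ via the unit-sum inequality $v_i(h_j) \le 1/r \le n^{2/3} t_r$, then use that $\mu_{MM}$ is max-weight together with $t_r \le v_i(h_j)$ on its surviving edges to transfer the bound. Your write-up is in fact more careful than the paper's (you explicitly separate $\mu'_{MM}$ from $\mu_{MM}$ and justify $\sw(\mu_{MM}) \ge w(\mu_{MM})$), but the underlying argument is identical.
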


\begin{proof}
	Note that $\mu^{*}_H \subseteq \me$ and for every $e = (a_i, h_j^\ell) \in \mu^{*}_H$, where $h^\ell_j$ is a $\ell$-th choice of agent $a_i$, the weight $t_\ell$ that is assigned to this edge in $\mathbb{G}_{\mi}$ is at least $\frac{1}{{n}^{2/3}}\cdot v_i(h_j^\ell)$ (and is at most $v_i(h_j^\ell)$). This is so because $t_1 =\frac{1}{n^{1/3}}$, $t_i = \frac{1}{\min\{i, n^{1/3}\}\cdot{n}^{2/3}}$ for $i \geq 2$, and $v_i(h_j^\ell) \leq \frac{1}{\ell}$, since the valuations are unit-sum and $h_j^\ell$ is in the $\ell$-th choice of agent $i$. Combining these two observations, we have that
	$\sw(\mu^{*}_H) = \sum_{(a_i,h_j) \in \mu^{*}_H} v_i(h_j) \leq n^{2/3} \cdot \sw(\mu_{MM})$, 
	where the last inequality follows by using the fact that $\mu_{MM}$ (computed in line~\ref{algo:line:muMM-PO}) is the max-weight matching on the graph $\mathbb{G}_{\mi}$, and as discussed above $\mu^{*}_H \subseteq \me$ and the edge weights in $\mathbb{G}_{\mi}$ are off by a factor of at most $n^{2/3}$.
\end{proof}

Equipped with this lemma, we can now prove our theorem.

\begin{proof}[Proof of \Cref{thm:UB-nonAdap-PO}] 
	First, it is easy to see that \Cref{algo:UB-nonAdap-PO} returns a matching $\mu$ that is Pareto optimal (since it is the matching returned by running the TTC algorithm on some initial endowments). Next, let $\mu_{MM}$ be the matching computed in line~\ref{algo:line:muMM-PO} in \Cref{algo:UB-nonAdap-PO}, and let $\mu_{MM}^1 \subseteq \mu_{MM}$ be set of edges of rank-1 in $\mu_{MM}$. We will proceed by considering the following two cases and show that in each case $\frac{\sw({\mu^*})}{\sw({\mu})} \in O(n^{2/3})$.
	
	\proofcase{Case 1: $|\mu_{MM}^1| \geq 1$.} For this case, let $\mu'_{aux}$ be the matching in $\mathbb{G}_{\mi}$ where as many agents as possible to a rank-$1$ edge (see line~\ref{line:algo:rank1} in \Cref{algo:UB-nonAdap-PO}) and let $\mu_{aux} = \mu'_{aux} \setminus \{(a, o) \mid (a, o) \in \mu'_{aux} \text{ and either $a$ or $o$ is matched in }\allowbreak \mu_{MM}\}$. Below, we will argue that $\sw(\mu_{MM}) + \sw(\mu_{aux}) \geq \max\left\{\frac{|X'_1|}{2n}, t_1\right\}$, where ${X'_i}$ is the set of edges of rank-$i$ in $\mu'_{aux}$.
	
	\begin{claim} \label{clm:clm1}
		If $|\mu^1_{MM}| \geq 1$, then $\sw(\mu_{MM}) + \sw(\mu_{aux}) \geq \max\left\{\frac{|X'_1|}{2n}, t_1\right\}$.
	\end{claim}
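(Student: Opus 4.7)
The plan is to bound the two quantities inside the maximum separately, using the fact that after the pruning in line~\ref{algo:line:muMM-PO}, every edge remaining in $\mu_{MM}$ carries a positive weight and hence records a ``Yes'' response from its agent.

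\emph{The $t_1$ lower bound.} Since $|\mu^1_{MM}|\ge 1$, fix any rank-$1$ edge $(a_i,h_j)\in\mu_{MM}$. Because $\bmp=(0,\dots,0)$ in the invocation of Algorithm~\ref{algo:UB-nonAdap-Priority}, edges have weight $w_e\in\{0,t_r\}$, and the pruning step ensures $w_{(a_i,h_j)}=t_1>0$. This is only assigned when the agent answered ``Yes'' to $\mq(a_i,h_j,t_1)$, so $v_i(h_j)\ge t_1$, and hence $\sw(\mu_{MM})\ge t_1$.

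\emph{The $|X'_1|/(2n)$ lower bound.} First I would observe that the threshold values satisfy $t_r\ge 1/n$ for every $r\in[n]$ (direct from the definitions of $t_1=n^{-1/3}$ and $t_i=1/(\min\{i,n^{1/3}\}\cdot n^{2/3})$). As above, every surviving edge in $\mu_{MM}$ corresponds to a ``Yes'' answer at its rank, so its true utility is at least $t_r\ge 1/n$; this gives $\sw(\mu_{MM})\ge |\mu_{MM}|/n$. Next, let $X_1=X'_1\cap\mu_{aux}$ be the rank-$1$ edges of $\mu'_{aux}$ that actually survive in $\mu_{aux}$. For any $(a_i,h_j)\in X_1$, $h_j$ is $a_i$'s top-choice object, so by unit-sum normalization $v_i(h_j)\ge 1/|A_i|\ge 1/n$; hence $\sw(\mu_{aux})\ge |X_1|/n$.

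The key combinatorial step is then to relate $|X_1|$ back to $|X'_1|$. Each edge $(a,o)\in\mu_{MM}$ can cause the deletion from $\mu'_{aux}$ of at most two edges---namely, the (unique) edge of $\mu'_{aux}$ using $a$ and the (unique) edge of $\mu'_{aux}$ using $o$---so $|X'_1|-|X_1|\le 2|\mu_{MM}|$. A short case split on whether $|\mu_{MM}|\le |X'_1|/2$ then shows $|\mu_{MM}|+|X_1|\ge |X'_1|/2$: in the first case the displayed inequality yields $|X_1|\ge |X'_1|-2|\mu_{MM}|$, so $|\mu_{MM}|+|X_1|\ge |X'_1|-|\mu_{MM}|\ge |X'_1|/2$; in the second case $|\mu_{MM}|+|X_1|\ge |\mu_{MM}|>|X'_1|/2$. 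Combining with the two per-edge value bounds gives $\sw(\mu_{MM})+\sw(\mu_{aux})\ge (|\mu_{MM}|+|X_1|)/n\ge |X'_1|/(2n)$, and together with the first part of the proof this establishes the claim.

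The only delicate point---and the one I expect to be the main obstacle in writing this up carefully---is the counting step $|X'_1|-|X_1|\le 2|\mu_{MM}|$: it requires keeping track of the fact that $\mu'_{aux}$ is itself a matching (so each vertex is touched at most once) and that a removed edge of $\mu'_{aux}$ must share at least one vertex with some edge of $\mu_{MM}$, contributing at most two blocking opportunities per $\mu_{MM}$-edge. Everything else is a direct application of the algorithm's construction, the threshold definitions, and the unit-sum normalization.
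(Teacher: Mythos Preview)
Your proposal is correct and follows essentially the same approach as the paper's proof: both lower-bound the contribution of $\mu_{MM}$-edges by $t_n=1/n$ per edge (after noting pruning forces a ``Yes'' answer), lower-bound each surviving rank-$1$ edge of $\mu_{aux}$ by $1/n$ via unit-sum, and use the same combinatorial fact that each $\mu_{MM}$-edge blocks at most two edges of the matching $\mu'_{aux}$. The paper phrases the counting as $|\mu_{MM}|\ge \tfrac{1}{2}|X'_1\setminus X_{aux}|$ and then adds $\tfrac{1}{2}|X'_1\setminus X_{aux}|\cdot\tfrac{1}{n}+|X_{aux}|\cdot\tfrac{1}{n}\ge \tfrac{|X'_1|}{2n}$ directly, whereas you phrase it as $|X'_1|-|X_1|\le 2|\mu_{MM}|$ and do a (strictly unnecessary) case split on $|\mu_{MM}|\lessgtr|X'_1|/2$; these are equivalent and your version is arguably more explicit.
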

	\begin{proof*}
		Consider the set of edges in $X'_1$, and let $X_{aux}$ be the edges in $X'_1$ that belong to $\mu_{aux}$ (i.e., $X_{aux} = X'_1 \cap \mu_{aux}$). By our definition of $\mu_{aux}$, for each $(a_i, h_j) \in X'_1 \setminus X_{aux}$, at least one of $a_i$ or $h_j$ is matched in $\mu_{MM}$. Therefore,
		\begin{align*}
			\sw(\mu_{MM}) + \sw(\mu_{aux}) &\geq \frac{|X'_1 \setminus X_{aux}|}{2}\cdot t_n + \sw({X_{aux}}) \nonumber \\
			&\geq  \frac{|X'_1 \setminus X_{aux}|}{2}\cdot t_n + |X_{aux}|\cdot\frac{1}{n} \nonumber \\
			&\geq \frac{|X'_1|}{2}\cdot\frac{1}{n}. 
		\end{align*}
		In the set of inequalities above, the first inequality follows from the fact that there are at least $\frac{|X'_1\setminus X_{aux}|}{2}$ unique edges in $\mu_{MM}$ and each of them have weight at least $t_n$; the second inequality follows because each edge in $X_{aux}$ is of value at least $\frac{1}{n}$ (since $X_{aux} = X'_1 \cap \mu_{aux}$, the valuations are unit-sum, and the agents are matched to their first choice in $\mu_{aux}$ and so have value at least $\frac{1}{n}$).
		
		Additionally, note that since $|\mu^1_{MM}| \geq 1$, $\sw(\mu_{MM}) \geq t_1$. Hence, combining the two observations above, we have that $\sw(\mu_{MM}) + \sw(\mu_{aux}) \geq \max\{\frac{|X'_1|}{2n}, t_1\}$.	
	\end{proof*}
	
	Equipped with the claim, next, consider $\mu^{*}_L$ and note that,
	\begin{align}
		\sw(\mu^{*}_L) &= \sw(\mu^{*}_{L_{1}}) + \sum_{i=2}^{n} \sw(\mu^{*}_{L_{i}}) \nonumber \\
		&\leq |L_1|\cdot t_1 + (n-|L_1|)\cdot t_{2}\nonumber \\
		&\leq |X'_1|\cdot t_1 + n\cdot t_{2}, \label{eqn:clm2}
	\end{align}
	where the first inequality uses the facts that, by the definition of $L_i$, every agent in $\mu^{*}_{L_i}$ has value less than $t_i$ for their partner and that $t_i\geq t_j$ for $i\leq j$ and $i, j \in [n]$, and the final inequality follows since $\mu'_{aux}$ is a matching in $\mathbb{G}_{\mi}$ where as many agents as possible is matched to a rank-$1$ edge.
	
	Given the above, if $\mu' = \mu_{MM} \cup \mu_{aux} \cup \mu_{rest}$, then we have that,
	\begin{align} \label{eqn:main1}
		\frac{\sw({\mu^*})}{\sw({\mu'})} &= \frac{\sw(\mu^{*}_H) +  \sw(\mu^{*}_L)}{\sw(\mu_{MM}) + \sw(\mu_{aux}) + \sw(\mu_{rest})} \nonumber \\
		&\leq \frac{n^{2/3} \cdot \sw(\mu_{MM}) + |X'_1|\cdot t_1 + n\cdot t_{2}}{\sw(\mu_{MM}) + \sw(\mu_{aux})} 	 \nonumber \\
		&\leq \frac{n^{2/3} \cdot \sw(\mu_{MM})}{\sw(\mu_{MM})} + \frac{|X'_1|\cdot t_1}{|X'_1|/(2n)} + \frac{n\cdot t_{2}}{t_1} \nonumber \ \\
		&\leq n^{2/3} + 2{n}^{2/3} + n^{2/3}. 		
	\end{align}
	In the set of inequalities above, the first one follows from using \Cref{lemma:highVals-PO} and (\ref{eqn:clm2}); the second inequality follows from \Cref{clm:clm1}; the last inequality follows from our choice of $t_1$ and $t_2$.
	
	\proofcase{Case 2: $|\mu^1_{MM}| = 0$.} For this case, let $k = \floor{\sqrt[3]{n}/2}$ and $\mu'_{aux}$ be the matching in $\mathbb{G}_{\mi}$ where as many agents as possible to an edge of rank at most $k$ (see line~\ref{line:algo:rankk} in \Cref{algo:UB-nonAdap-PO}) and let $\mu_{aux} = \mu'_{aux} \setminus \{(a, o) \mid (a, o) \in \mu'_{aux} \text{ and either $a$ or $o$ is matched in }\allowbreak \mu_{MM}\}$. Below, we will argue that $\sw(\mu_{MM}) + \sw(\mu_{aux}) \geq \frac{|X'|}{2n}$, where $X' = \cup_{i=1}^{k} {X'_i}$ and ${X'_i}$ is the set of edges of rank-$i$ in $\mu'_{aux}$.
	
	\begin{claim} \label{clm:clm2}
		If $|\mu^1_{MM}| = 0$, then $\sw(\mu_{MM}) + \sw(\mu_{aux}) \geq \frac{|X'|}{2{n}}$.
	\end{claim}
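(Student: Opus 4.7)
The plan will mirror the argument for Claim~\ref{clm:clm1}: I will split $X'$ into $X_{aux} := X' \cap \mu_{aux}$ and $X' \setminus X_{aux}$ and then lower-bound the two contributions to $\sw(\mu_{MM}) + \sw(\mu_{aux})$ separately. Every edge of $X' \setminus X_{aux}$ is missing from $\mu_{aux}$ only because one of its endpoints is already matched in $\mu_{MM}$, so at least $|X' \setminus X_{aux}|/2$ distinct edges of $\mu_{MM}$ can be charged against it. A direct check of the definitions shows that $t_r \geq t_n = 1/n$ for every rank $r \in [n]$, and each surviving edge of $\mu_{MM}$ was answered ``Yes'' to its threshold query, so its true value is at least its weight $t_r \geq 1/n$. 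This immediately yields $\sw(\mu_{MM}) \geq |X' \setminus X_{aux}|/(2n)$.

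The hard part will be the matching lower bound $\sw(\mu_{aux}) \geq |X_{aux}|/(2n)$, because the edges of $X_{aux}$ can now have any rank in $\{1,\ldots,k\}$ rather than only rank~$1$ as in Claim~\ref{clm:clm1}. The plan is to first exploit the Case-2 hypothesis $|\mu^1_{MM}| = 0$ to conclude that \emph{no} agent answered ``Yes'' to the rank-$1$ query. I will argue this by a short exchange argument: if some rank-$1$ edge $(a_i, h^*)$ carried weight $t_1$, then, since $\mu_{MM}$ contains no rank-$1$ edge, removing the (at most two) rank-$\geq 2$ edges of $\mu_{MM}$ incident to $a_i$ and $h^*$---each of weight at most $t_2 = 1/(2n^{2/3})$---and inserting $(a_i, h^*)$ would yield a matching of strictly larger weight, since the net change $t_1 - 2t_2 = 1/n^{1/3} - 1/n^{2/3}$ is positive for $n \geq 2$. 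This contradicts the max-weight property of $\mu_{MM}$, and therefore $v_i(h_j^1) < 1/n^{1/3}$ for every agent $a_i$.

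Once that is in place, a short averaging argument based on unit-sum finishes the proof. Because every value of $a_i$ is bounded above by $v_i(h_j^1) < 1/n^{1/3}$, one needs more than $n^{1/3}$ acceptable objects to sum to $1$, so $|A_i| > n^{1/3} \geq 2k$ and in particular $v_i(h_j^{k+1})$ exists. The top $k$ items contribute at most $k/n^{1/3} \leq 1/2$ to the unit sum, so the items ranked beyond $k$ contribute at least $1/2$, forcing $v_i(h_j^{k+1}) \geq 1/(2(|A_i|-k)) \geq 1/(2n)$. Since every edge in $X_{aux}$ has rank at most $k$, its true value is at least $v_i(h_j^{k+1}) \geq 1/(2n)$, giving $\sw(\mu_{aux}) \geq |X_{aux}|/(2n)$. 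Adding the two bounds yields $\sw(\mu_{MM}) + \sw(\mu_{aux}) \geq (|X' \setminus X_{aux}| + |X_{aux}|)/(2n) = |X'|/(2n)$, as required.
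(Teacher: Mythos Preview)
Your proof is correct and follows essentially the same route as the paper's: split $X'$ into $X_{aux}$ and its complement, charge the complement to $\mu_{MM}$ via the $t_n = 1/n$ threshold, and bound the true value of each $X_{aux}$ edge below by $1/(2n)$ using unit-sum and the Case-2 hypothesis. The two places where you add detail are (i) the explicit exchange argument showing that $|\mu^1_{MM}| = 0$ forces every rank-$1$ value below $t_1$ (the paper just asserts this parenthetically), and (ii) a self-contained averaging argument in place of the paper's appeal to \Cref{clm:unitsumVal}; both are fine and match the paper's logic.
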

	\begin{proof*}
		Since $|\mu^1_{MM}| = 0$, every agent values their first choice object at a value less than $t_1$ (since otherwise $|\mu^1_{MM}| \geq 1$). This implies that, since their valuations are unit-sum, for $j \in [k]$, their value for a rank-$j$ object is at least $\frac{1}{2n}$ (see \Cref{clm:unitsumVal} in \Cref{app:addClaims} for a proof). Next, like in the proof of \Cref{clm:clm1}, consider $X_{aux} = \mu_{aux} \cap X'$. Since for every edge $(a_i, h_j) \in X' \setminus X_{aux}$, at least one of $a_i$ or $h_j$ is matched in $\mu_{MM}$, and every edge in $\mu_{MM}$ has weight at least $t_n$, we have, 
		\begin{align*}
			\sw(\mu_{MM}) + \sw(\mu_{aux}) &\geq \frac{|X' \setminus X_{aux}|}{2}\cdot t_n + \sw({X_{aux}}) \nonumber \\
			&\geq  \frac{|X' \setminus X_{aux}|}{2}\cdot t_n + |X_{aux}|\cdot\frac{1}{2n} \nonumber \\
			&\geq \frac{|X'|}{2n}.  \qedhere
		\end{align*}	
	\end{proof*}
	Given the claim, next, note that, 
	\begin{align}
		\sw(\mu^{*}_L) &= \sum_{i=1}^{k}  \sw(\mu^{*}_{L_{i}}) + \sum_{i=k+1}^{n} \sw(\mu^{*}_{L_{i}}) \nonumber \\
		&\leq \left(\sum_{i=1}^{k} |L_i| \right) \cdot t_1 + \left(\sum_{i=k+1}^{n} |L_i| \right)\cdot t_{k+1} \nonumber \\
		&\leq \left(\sum_{i=1}^{k} |X'_i| \right)\cdot t_1 + \left(\sum_{i=k+1}^{n} |L_i| \right)\cdot t_{k+1} \nonumber \\
		&= \abs{X'}\cdot t_1 + \left(\sum_{i=k+1}^{n} |L_i| \right)\cdot t_{k+1}.
		\label{eqn:clm5}
	\end{align}
	
	Equipped with the above, if $\mu' = \mu_{MM} \cup \mu_{aux} \cup \mu_{rest}$, then for this case we have that,
	\begin{align} \label{eqn:main2}
		\frac{\sw({\mu^*})}{\sw({\mu'})} &= \frac{\sw(\mu^{*}_H) +  \sw(\mu^{*}_L)}{\sw(\mu_{MM}) + \sw(\mu_{aux}) + \sw(\mu_{rest})} \nonumber \\
		&\leq \frac{n^{2/3} \cdot \sw(\mu_{MM}) + \abs{X'}\cdot t_1 + \left(\sum_{i=k+1}^{n} |L_i| \right)\cdot t_{k+1}}{\sw(\mu_{MM}) + \sw(\mu_{aux})}  \nonumber \\
		&\leq \frac{n^{2/3} \cdot \sw(\mu_{MM})}{\sw(\mu_{MM})} + \frac{|X'|\cdot t_1}{|X'|/(2n)} + \frac{\left(\sum_{i=k+1}^{n} |L_i| \right)\cdot t_{k+1}}{|X'|/(2n)} \nonumber  \\
		&\leq n^{2/3} + 2{n}^{2/3} + 8{n}^{2/3}. 
	\end{align}
	In the set of inequalities above, the first one follows from using \Cref{lemma:highVals-PO} and (\ref{eqn:clm5}); the second inequality follows \Cref{clm:clm2}; the last inequality follows since $\abs{X'} \geq \min\left\{k, \sum_{i=k+1}^{n} |L_i|\right\}$ (see \Cref{clm:boundX} in \Cref{app:addClaims} for a proof).
	
	Finally, combining (\ref{eqn:main1}) and (\ref{eqn:main2}), and using the fact that the TTC algorithm is individually rational (which in turn implies $SW(\mu) \geq SW(\mu')$), gives us our theorem.
\end{proof}

Next, we state the following result for priority-$\bmp$ matchings, whose proof, as mentioned previously, is similar the proof above and hence appears in \Cref{app:missing:UB-nonAdap-priority}.

\begin{restatable}{theorem}{nonAdapPriority} \label{thm:UB-nonAdap-priority}
	Given an instance $\mi = (\mn, \mh, \mpp = (P_1, \ldots, P_n))$ and a vector of priorities $\bmp = (p_1, \ldots, p_n)$, where $\bmp \in \mathbbm{P}$, Algorithm~\ref{algo:UB-nonAdap-Priority} asks one non-adaptive query per (agent, object)  and returns a priority-$\bmp$ matching that achieves an $O(n^{2/3})$-approximation to the optimal welfare among all priority-$\bmp$ matchings for the case when agents have unit-sum valuations. 
\end{restatable}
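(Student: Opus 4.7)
The plan is to mirror the proof of Theorem~\ref{thm:UB-nonAdap-PO} almost step for step. The crucial structural simplification is that the output $\mu$ of Algorithm~\ref{algo:UB-nonAdap-Priority} is itself a priority-$\bmp$ matching, so the auxiliary matching and the subsequent TTC step of Algorithm~\ref{algo:UB-nonAdap-PO} are no longer needed; the compensating fact is that $\mu$ then shares the signature $(s_1,\ldots,s_n)$ of any welfare-optimal priority-$\bmp$ matching $\mu^*$.

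First I would verify that $\mu$ is a priority-$\bmp$ matching. Since the weights $w_e = p_{r(e)} + q_e$ satisfy $q_e \in [0,1]$ while every $p_j \ge n^2$ for $\bmp \in \mathbbm{P}$, the scale-separation argument from the proof of Theorem~\ref{thm:computeOptPriority} applies verbatim and forces the max-weight matching to maximize $\sum_e p_{r(e)}$, hence to be priority-$\bmp$. Because all priority-$\bmp$ matchings share the same $p$-sum, $\mu$ additionally maximizes $\sum_e q_e$ among priority-$\bmp$ matchings.

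Next, fix $\mu^*$ to be welfare-optimal among priority-$\bmp$ matchings and partition its edges into $H=\bigcup_i H_i$ and $L=\bigcup_i L_i$ exactly as in the PO proof. The direct analog of Lemma~\ref{lemma:highVals-PO} yields $\sw(\mu^*_H) \le n^{2/3}\,\sw(\mu)$: the threshold estimate $t_{r(e)} \ge v_e/n^{2/3}$ on $H$-edges, whose proof uses only the unit-sum bound $v_e \le 1/r$, is unchanged, and combining it with $\sw(\mu) \ge \sum_{e \in \mu} q_e \ge \sum_{e \in \mu^*} q_e = \sum_{e \in \mu^*_H} t_{r(e)}$ gives the claim. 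For $\sw(\mu^*_L)$ I split into the rank-$1$ part, at most $|L_1| t_1 \le s_1 n^{-1/3}$, and the rank-$\ge 2$ part, at most $\sum_{i \ge 2}|L_i|t_i \le n\,t_2 = n^{1/3}/2$. Since $\mu$ contains $s_1$ rank-$1$ edges each of unit-sum value at least $1/n$, we have $\sw(\mu) \ge s_1/n$, giving rank-$1$ ratio $n\,t_1 = n^{2/3}$ immediately. For the rank-$\ge 2$ ratio I would mirror the $|\mu_{MM}^1|$-based case split of the PO proof: either $\mu$ contains a rank-$1$ edge that answered yes to its threshold query (so $\sw(\mu) \ge t_1 = n^{-1/3}$ and the rank-$\ge 2$ ratio is $(n^{1/3}/2)/n^{-1/3} = n^{2/3}/2$), or the maximality of $\sum_e q_e$ over priority-$\bmp$ matchings combined with the unit-sum constraint implies either that $\mu$ collects an $\Omega(n^{-1/3})$ contribution from its rank-$\ge 2$ yes edges or that $\sw(\mu^*)$ itself is small enough that the ratio still stays $O(n^{2/3})$. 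Summing the $\mu^*_H$, rank-$1$ and rank-$\ge 2$ contributions yields the theorem.

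The main obstacle I expect is this final subcase, where neither $\mu$ nor $\mu^*$ has a rank-$1$ yes edge and $s_1$ is small: the rank-$1$ lower bound $s_1/n$ on $\sw(\mu)$ is too weak by itself, and one has to exploit both that $\mu$ inherits the signature of $\mu^*$ and that the maximality of $\sum_e q_e$ transfers any substantial rank-$\ge 2$ yes contribution in $\mu^*$ over to $\mu$, in order to certify an overall $O(n^{2/3})$ bound in place of the $|X'_1|/(2n)$ bound that Claim~\ref{clm:clm1} supplied via the auxiliary matching in the PO setting.
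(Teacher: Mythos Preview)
Your outline tracks the paper's proof closely and the parts you spell out are correct: the priority-$\bmp$ verification, the analogue of Lemma~\ref{lemma:highVals-PO} via maximality of $\sum_e q_e$ over priority-$\bmp$ matchings, and the rank-$1$ bound $s_1 t_1/(s_1/n)=n^{2/3}$ all go through. The case where $\mu$ contains a rank-$1$ ``yes'' edge is also handled exactly as in the paper.

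The gap is precisely where you flag it, but the tools you propose for the final subcase do not close it. When no agent answers ``yes'' to any query whatsoever, every $q_e=0$, so maximality of $\sum_e q_e$ gives you nothing, and your coarse bound $\sum_{i\ge 2}\sw(\mu^*_{L_i})\le n\,t_2=n^{1/3}/2$ against the only lower bound you have, $\sw(\mu)\ge s_1/n$, yields a ratio of order $n^{4/3}/s_1$, which is useless when $s_1$ is small. The alternative ``$\sw(\mu^*)$ is small enough'' does not save you either: with $s_1=1$ and the remaining mass at rank~$2$, $\sw(\mu^*)$ can genuinely be of order $n^{1/3}$ while your lower bound on $\sw(\mu)$ is only $1/n$.

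What the paper actually does here is different from anything you suggest. Set $k=\lfloor n^{1/3}/2\rfloor$ and let $B=\bigcup_{i\le k}B_i$, $B'=\bigcup_{i>k}B_i$, where $B_i$ is the set of agents at rank~$i$ in $\mu$ (equivalently in $\mu^*$, by signature equality). The two missing ingredients are: (i) since every agent's rank-$1$ value is below $t_1=n^{-1/3}$, unit-sum forces every rank-$j$ value for $j\le k$ to be at least $1/(2n)$ (Claim~\ref{clm:unitsumVal}), so $\sw(\mu)\ge |B|/(2n)$; and (ii) one splits $\sw(\mu^*_L)$ at rank~$k$, not at rank~$2$, obtaining $\sw(\mu^*_L)\le |B|\,t_1+|B'|\,t_{k+1}$ with $t_{k+1}\le 2/n$. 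Combined with $|B|\ge\min\{k,|\mu|\}$ (Claim~\ref{clm:boundB}), the two pieces give ratios $2n\,t_1=2n^{2/3}$ and $4|B'|/|B|\le 8n^{2/3}$. The point is that the replacement for the auxiliary-matching bound $|X'|/(2n)$ in the PO proof is exactly $|B|/(2n)$, obtained directly from $\mu$ via the unit-sum structure rather than via any $q_e$-maximality argument.
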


Finally, we also consider the case when agents have unit-range valuations and show how one can obtain an $O(\sqrt{n})$-approximation to the optimal social welfare among Pareto optimal and priority-$\bmp$ matchings. Since the algorithms and analyses for these are somewhat similar to the ones above, we present these results in Appendix~\ref{app:sec:unit-range-UB}. 	
	\section{Lower Bounds}
Here we turn our attention to lower bounds for the case when an algorithm can ask at most one query per (agent, object) pair---i.e., for the setting considered in \Cref{sec:nonAdap}. We show that, for the unit-sum and unit-range cases, any deterministic algorithm $\ma$ that asks at most one query per (agent, object) pair and produces a Pareto-optimal/rank-maximal/max-cardinality rank-maximal/fair matching has a worst-case welfare loss of $\Omega(\sqrt{n})$, i.e., $\ml(\ma) \in \Omega(\sqrt{n})$.

\begin{theorem} \label{thm:LB-nonAdap}
	Let $X$ denote one of the properties in the set $\{${Pareto-optimal, $\allowbreak$ rank-maximal, max-cardinality rank-maximal, and fair}$\}$. Let $\ma$ be a non-adaptive deterministic algorithm that always produces a matching that satisfies property $X$ and asks at most one query per (agent, object) pair. If there are $n$ agents with unit-sum valuation functions, then $\ml(\ma) \in \Omega(\sqrt{n})$.
\end{theorem}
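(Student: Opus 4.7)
I will prove the lower bound via an adversarial indistinguishability-plus-counting argument. First, exhibit an instance $\mi$ in which the ``satisfies property $X$'' constraint becomes vacuous---every bijection on $\bbg_\mi$ is simultaneously Pareto optimal, rank-maximal, max-cardinality rank-maximal, and fair---so that $\ma$ reduces to a deterministic function from its $n^2$-bit response vector to bijections. Second, given the thresholds $\ma$ commits to on $\mi$, I will build a family of unit-sum valuations that produce identical query responses but whose welfare-optimal matchings differ by $\Omega(\sqrt n)$. A counting argument over the family then delivers the $\Omega(\sqrt n)$ bound.

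\textbf{Instance.} I will use $\mi$ on $n$ agents and $n$ objects where every agent shares the identical strict preference $h_1\succ h_2\succ\cdots\succ h_n$ on the complete acceptability graph. Every bijection $\mu$ then has rank signature $(1,1,\ldots,1)$, which is lexicographically maximal under rank-maximality, max-cardinality rank-maximality, and fairness; and $\mu$ is Pareto optimal because any putative Pareto-improver $\mu'$ would need $\mu'(a_i)\succeq_i\mu(a_i)$ for all $i$ with strict improvement somewhere, impossible since both assignments give exactly one agent to each rank. So on $\mi$, the output of $\ma$ reduces to a deterministic function $f\colon\{0,1\}^{n^2}\to\mm_{\bbg_\mi}$ of the responses to a committed threshold vector $\{t_{ij}\}_{i,j\in[n]}$.

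\textbf{Valuation family.} Fix these $\{t_{ij}\}$. For each agent $i$, pigeonhole on the $n$ thresholds $t_{i,1},\ldots,t_{i,n}$ produces an interval $(\beta_i,\alpha_i)\subseteq[0,1]$ of length at least $1/(n+1)$ that contains no $t_{i,r}$. I will index a family of unit-sum valuations by permutations $\sigma\in S_k$ with $k=\lceil\sqrt n\rceil$: each $v^\sigma_i$ is a two-tier valuation (perturbed by a tiny $\epsilon$ to enforce strict decrease) taking value $\alpha_i$ on a ``high'' prefix $\{h_1,\ldots,h_{m_i^\sigma}\}$ and value $\beta_i$ elsewhere, normalized by $m_i^\sigma\alpha_i+(n-m_i^\sigma)\beta_i=1$. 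Because the interval $(\beta_i,\alpha_i)$ dodges every $t_{i,r}$, the binary response $\mathbf{1}[v^\sigma_i(h_r)\ge t_{i,r}]$ is determined entirely by which side of $(\beta_i,\alpha_i)$ the threshold $t_{i,r}$ lies on and by whether $r\le m_i^\sigma$; by arranging the multiset $\{m_i^\sigma\}_i$ to be invariant under $\sigma$, every one of the $n^2$ query responses is identical across all $\sigma\in S_k$. Thus $\ma$ outputs the same bijection $\mu$ on every $v^\sigma$ in the family, while the $\sigma$-aligned optimum $\mu^{\ast,\sigma}$ achieves welfare $\Omega(\sqrt n)$ through the $k$ special $\alpha_i$-contributions. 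Averaging over $S_k$ exhibits some $\sigma^\ast$ under which $\mu$ aligns with zero special edges, forcing $\sw(\mu\mid v^{\sigma^\ast})=O(1)$ and hence $\ml(\ma)\ge\sw(\mu^{\ast,\sigma^\ast}\mid v^{\sigma^\ast})/\sw(\mu\mid v^{\sigma^\ast})=\Omega(\sqrt n)$.

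\textbf{Main obstacle.} The technical difficulty is juggling the three requirements of the family simultaneously: (i) the gap $(\beta_i,\alpha_i)$ must avoid every threshold on agent $i$ (query invariance); (ii) the parameters $(m_i^\sigma,\alpha_i,\beta_i)$ must respect unit-sum and strict preferences for every $\sigma\in S_k$; and (iii) the adversary's permutation of $m_i^\sigma$ among the $k$ specials must induce an $\Omega(\sqrt n)$ welfare spread rather than merely $O(\log n)$. The unit-sum coupling $m_i^\sigma\alpha_i+(n-m_i^\sigma)\beta_i=1$ caps the per-agent contribution at $\alpha_i\le 1/m_i^\sigma$, so producing $\sqrt n$ worth of summed welfare across the specials calls for high-set sizes around $m_i^\sigma=\Theta(\sqrt n)$ with $\alpha_i=\Theta(1/\sqrt n)$, and these must simultaneously fit inside the pigeonholed gap. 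Since identical preferences force every high set to be a prefix of the common preference order, producing genuinely distinct optimal matchings across $\sigma$ will likely require enriching $\mi$ with mild tie structure (two tiers of objects tied at rank $1$ and rank $2$, say) so that the adversary has room to permute the identity of the high-value objects while keeping every bijection a valid property-$X$ matching. Once the family is in place, the counting/averaging conclusion is routine.
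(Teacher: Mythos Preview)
Your proposal is a plan with an explicitly acknowledged gap, and unfortunately that gap is fatal to the approach as written rather than a detail to be filled in.

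The central inconsistency is between the pigeonhole step and the unit-sum constraint. Once you commit to two-tier valuations with levels $\alpha_i,\beta_i$ and prefix cutoff $m_i$, unit-sum reads $m_i(\alpha_i-\beta_i)=1-n\beta_i$; this \emph{pins down} $m_i$ from $(\alpha_i,\beta_i)$. But $(\alpha_i,\beta_i)$ is determined by where agent~$i$'s threshold-free gap sits, and that location is chosen by the algorithm, not by you. In particular, if $\ma$ places agent~$i$'s thresholds at $0,\tfrac1n,\tfrac2n,\ldots,\tfrac{n-1}{n}$, every gap except $(0,\tfrac1n)$ has $\beta_i\ge\tfrac1n$ and hence $1-n\beta_i\le0$, while the gap $(0,\tfrac1n)$ forces $\alpha_i<\tfrac1n$ and hence $m_i>n$; no two-tier unit-sum valuation exists at all. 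More mildly, if $\ma$ uses the \emph{same} threshold vector for every agent, then every $(\alpha_i,\beta_i,m_i)$ is identical, and on your identical-preference instance every bijection has exactly the same welfare, so $\ml(\ma,\mi)=1$. Either way your family collapses. The phrase ``arranging the multiset $\{m_i^\sigma\}_i$ to be invariant under $\sigma$'' presupposes you can vary $m_i$ with $\sigma$ while holding $(\alpha_i,\beta_i)$ fixed, which unit-sum forbids.

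Your own ``Main obstacle'' paragraph already senses this and proposes patching $\mi$ with ties to create room for permutation, but that is where all the work lies, and it is not clear ties alone help: the obstruction is not the preference order but the unit-sum coupling. The paper avoids the whole difficulty by abandoning identical preferences and full-vector pigeonholing. It gives each block of five agents a \emph{distinct} second-choice object $h_{i+1}$, so that the second-choice slot in each block can be contested independently. It then pigeonholes only on the pair $(T_{i1},T_{i2})$ into four coarse types (thresholds for $T_{i1}$ at $\tfrac12$ and $\tfrac{1}{\sqrt n}$, for $T_{i2}$ at $\tfrac{1}{\sqrt n}$), finds two same-type agents per block, and hand-builds four explicit pairs $(u_{2j-1},u_{2j})$ of unit-sum valuations that are query-indistinguishable for Type-$j$ agents yet differ by a factor of roughly $\sqrt n$ on the second-choice value. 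The adversary's freedom is then simply which of the two indistinguishable agents in each block gets which function, and this is exercised \emph{after} seeing $\ma$'s output. This sidesteps both your gap-location problem and your $m_i$ rigidity.
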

\newcommand\perBlock{5}
\newcommand\ntypes{4}
\begin{proof}Let $n \geq 18$ and $n = \perBlock k + r$, where $k = \floor{\frac{n}{5}}$ and $0\leq r \leq \ntypes$. Next, let us construct an instance $\mi$ with the set of agents $\mn$, where $|\mn| = n$ and the set of object $\mh$, where $|\mh| = n$. We partition the set of agents into sets $B_1,\ldots, B_{k+1}$ such that $|B_{k+1}| = r$ and $|B_i |= \perBlock$ for all $i \in [k]$. We will refer to each $B_i$, where $i \in [k]$, as a \textit{block}. Given this, the preferences of the agents are as defined below, where $h_j \succ \mh \setminus \{h_{j}\}$ implies that all the objects in the set $\mh \setminus \{h_{j}\}$ are less preferred than $h_j$ and are preferred in some arbitrary linear order (which is the same for all the agents).
	\begin{align*}
		\forall i \in [k], \text{ agents in $B_i$ have the following preferences}&: h_1 \succ h_{i+1} \succ \mh \setminus \{h_1, h_{i+1}\}\\
		\text{ agents in $B_{k+1}$ have the following preferences}&: h_1 \succ h_{n-1} \succ \mh \setminus \{h_1, h_{n-1}\}
	\end{align*}
	Next, since $\ma$ is a non-adaptive deterministic algorithm that asks at most one query per (agent, object) pair, we can think of $\ma$ to consist of two components---an outcome function $f$ which for an instance $\mi \in \mathbb{I}$ outputs a matching in $\mm_{\bbg_{\mi}}$, and a $|\mn| \times |\mh|$ matrix $T$, where $T_{ij} \in [0, 1]$ represents the threshold asked to agent $a_i$ w.r.t.\ the object $h$ such that $rank(a_i, h) = j$ (i.e., if $\ma$ asks the query $\mq(a_i, h_i^j, t)$, where $rank(a_i, h_i^j) = j$, then $T_{ij} = t$). Note that, for all $i \in [n]$, depending on the values of $T_{i1}$ and $T_{i2}$, we can classify $a_i$ as belonging to one of the following four types.
	\begin{align*}
		&\text{Type-1}: T_{i1} \in [0, \frac{1}{2}), \; T_{i2} \in [0, \frac{1}{\sqrt{n}}]  
		&&\text{Type-2}: T_{i1} \in [\frac{1}{2}, 1], \; T_{i2} \in [0, \frac{1}{\sqrt{n}}]\\ 
		&\text{Type-3}: T_{i1} \in [0, \frac{1}{\sqrt{n}}), \; T_{i2} \in (\frac{1}{\sqrt{n}}, 1] 
		&&\text{Type-4}: T_{i1} \in [\frac{1}{\sqrt{n}}, 1], \; T_{i2} \in (\frac{1}{\sqrt{n}}, 1] 
	\end{align*}
	
	Also, note that, for $i \in [k]$, since each block $B_i$ has $\perBlock$ agents, at least two of them will be classified as the same type (where types are as defined above). For each block $B_i$ and $j \in [\ntypes]$, we say that $B_i$ is a \textit{Type-$j$ block} if there are at least two agents such that both are of Type-$j$ and for every $1\leq k < j$, there is at most one agent of Type-$k$. We use $n_j$ to denote the number of Type-$j$ blocks, and hence have that $\sum_{j\in[4]} n_j = k$. Additionally, for a Type-$j$ block $B_i$, we let $a_{2i-1}$ and $a_{2i}$ to denote the two agents who are of Type-$j$ in $B_i$ (if there are more than two such agents, consider any two of them and denote them as $a_{2i-1}$ and $a_{2i}$) and say that $a_{2i-1}$ and $a_{2i}$ are the \textit{special agents} in $B_i$.
	
	As a final step before proving the theorem, we need to define a set of valuation profiles that are consistent with the preferences mentioned above. Our objective will be to show that there is at least one valuation profile in this set which achieves the desired bound on $\ml(\ma)$. So, to do this, let us first define the following unit-sum utility functions,\footnote{Note that in each of these valuations, objects 3 to $n$ have been given the same value. This is purely for ease of exposition. To ensure that no two objects have the same value, we can just perturb the values slightly.}
	\begin{align*}
		&u_0 = (c_1^2 + \epsilon/2, c_1^2, c_1^2 - c_2 \epsilon,\ldots, c_1^2 - c_2 \epsilon)\\	 
		&u_1 = (1-c_1, c_1, 0, \ldots, 0) &&u_2 = (1/2+\epsilon, 1/2-\epsilon, 0, \ldots, 0) \\
		&u_3 = (1/2 - c_1 - \epsilon, c_1 + \epsilon, c_2, \ldots, c_2) &&u_4 = (1/4+\epsilon, 1/4-\epsilon, c_2, \ldots, c_2)\\
		&u_5 = (1-c_1^2, c_1^2, 0 \ldots, 0) &&u_6 = (1-c_1 + \epsilon, c_1 - \epsilon, 0, \ldots, 0)\\ 
		&u_7 = (c_1-c_1^2, c_1^2, c_3, \ldots, c_3) &&u_8 = (3c_1/4, c_1/4, c_3, \ldots, c_3),
	\end{align*} 
	where $u_j = (x_1,x_2, x_3\ldots, x_3)$ implies that they value their most preferred object at $x_1$, second most preferred object at $x_2$, and every other object at $x_3$, $\epsilon > 0$ is a very small real number ($\epsilon < 1/n^4$ will suffice), $c_1 = \frac{1}{\sqrt{n}}$, $c_2 = \frac{1}{2(n-2)}$, and $c_3 = \frac{1-c_1}{n-2}$. Given this, consider the following set of valuation profiles $\mathcal{V}_\mi$, all of which are consistent with the preferences in $\mi$ and where each of them is defined the following way.
	\begin{enumerate}[label=\roman*),ref=\roman*]
		\item for $i \in [k]$, $j\in [\ntypes]$, and special agents $a_{2i-1}$ and $a_{2i}$ in $B_i$, where $B_i$ is a Type-$j$ block, let one of the agents have utility function $u_{2j-1}$ and the other have $u_{2j}$.
		\item let the agents in $\mn \setminus \{a_1,\ldots,a_{2k}\}$ have utility function $u_0$.
	\end{enumerate} 
	Note that the only difference between valuation profiles in $\mathcal{V}_\mi$ is w.r.t.\ the utility functions of the special agents. Additionally, note that for each $v \in \mathcal{V}_{\mi}$ and for each Type-$j$ block $B_i$, the utility functions, $u_{2j-1}$ and $u_{2j}$, of the special agents have been defined in such a way that they will respond identically to the queries in $T$. This can be seen by considering \Cref{tab:response} which shows how for all $j \in [4]$, the special agents of a Type-$j$ block will respond to queries $\mq(\cdot , \cdot, T_{i1})$ or $\mq(\cdot , \cdot, T_{i2})$, and by observing that utility functions $u_{2j-1}$ and $u_{2j}$ have the same values for all objects except the two most preferred ones. 
	
	\begin{table}[t]
		\centering 
		{\small
			\begin{tabular}{c|c|c|}
				\cline{2-3}
				& Response to $\mq(\cdot , \cdot, T_{i1})$ & Response to $\mq(\cdot , \cdot, T_{i2})$ \\ \hline
				\multicolumn{1}{|c|}{\begin{tabular}[c]{@{}c@{}}special agents in a Type-1 block\\ (assigned $u_{1}$ and $u_{2}$)\end{tabular}} & Yes                                      & Yes                                      \\ \hline
				\multicolumn{1}{|c|}{\begin{tabular}[c]{@{}c@{}}special agents in a Type-2 block\\ (assigned $u_{3}$ and $u_{4}$)\end{tabular}} & No                                       & Yes                                      \\ \hline
				\multicolumn{1}{|c|}{\begin{tabular}[c]{@{}c@{}}special agents in a Type-3 block\\ (assigned $u_{5}$ and $u_{6}$)\end{tabular}} & Yes                                      & No                                       \\ \hline
				\multicolumn{1}{|c|}{\begin{tabular}[c]{@{}c@{}}special agents in a Type-4 block\\ (assigned $u_{7}$ and $u_{8}$)\end{tabular}}  & No                                       & No                                       \\ \hline
			\end{tabular}
		}
		\caption{Responses of the special agents in a Type-$j$ block to queries $\mq(\cdot , \cdot, T_{i1})$ or $\mq(\cdot , \cdot, T_{i2})$}
		\label{tab:response}
	\end{table}
	
	Now, equipped with all of the above, let us first argue about the welfare-optimal matching that satisfies property $X$. To do this, consider any $v \in \mathcal{V}_{\mi}$ and consider the matching $\mu^*$ of size $n$ where in each Type-$j$ block $B_i$, the special agent with utility function $u_{2j}$ gets $h_{i+1}$ (so they all get their second most preferred object), if $|B_{k+1}| > 0$, then one of the agents in $B_{k+1}$ gets $h_{n-1}$, and the rest of the objects are allocated arbitrarily. From the way the preferences of the agents are defined, it is easy to verify that such a matching satisfies property $X$. Also, if $Z = n-2\sum_{j \in [\ntypes]}n_j$, then from the definition of utility functions $u_{2j}$ for all $j \in [4]$, and since the utility of matching $h_1$ and all other objects to some agent is at least $(c_1^2 + \epsilon/2)$ and $(c_1^2 - c_2\epsilon)$, respectively, we have that, 
	\begin{align}
		\sw(\mu^*) &\geq (c_1^2 + \epsilon/2) +  (1/2-\epsilon)n_1 + (1/4-\epsilon)n_2 + ({c_1-\epsilon})n_3 + \frac{c_1}{4}n_4 + (c_1^2 - c_2\epsilon) Z \label{eqn:lb1} \\
		&\geq (c_1^2 + \epsilon/2) + (n_1 + n_2 + n_3 + n_4)\frac{c_1}{4} + (c_1^2 - c_2\epsilon) Z \nonumber \\
		&\geq \frac{\sqrt{n}}{28}, \label{eqn:lb2}
	\end{align}
	where the last inequality follows since $n = \perBlock k+r$, where $0\leq r \leq \ntypes$, implies $k = \sum_{j \in [\ntypes]}n_j = \floor{n/5}$, and $n\geq 18$.
	
	Finally, consider $\ma$ and let $\mu$ be the matching returned by $\ma$ for the instance $\mi$ when agents are asked queries in $T$. Next, we adversarially pick a $v \in \mathcal{V}_{\mi}$ the following way:
	\begin{enumerate}[label=\roman*),ref=\roman*]
		\item for $i \in [k]$, if $B_i$ is a Type-$j$ block for some $j \in [4]$ and if $\mu(a_{2i-1}) = h_{i+1}$, then consider valuation profile in $\mathcal{V}_{\mi}$ where $a_{2i-1}$ has utility function $u_{2j-1}$ 
		\item for $i \in [k]$, if $B_i$ is a Type-$j$ block for some $j \in [4]$ and if $\mu(a_{2i}) = h_{i+1}$, then consider valuation profile in $\mathcal{V}_{\mi}$ where $a_{2i}$ has utility function $u_{2j-1}$ 
	\end{enumerate} 
	Given such a $v \in \mathcal{V}_{\mi}$, we can now calculate $\sw(\mu)$. And for this, observe that from the definition of utility functions $u_{2j-1}$ for $j \in [4]$, and since the utility of matching $h_1$ and all other objects to some agent is at most $1$ and $c_1^2$, respectively, we have that,
	\begin{align} \label{eqn:lb3}
		\sw(\mu) &\leq 1 + c_1n_1 + (c_1 +\epsilon)n_2 + c_1^2n_3 + {c_1^2}n_4 + c_1^2Z \nonumber\\
		&\leq 2 + \frac{8}{\sqrt{n}}\left(\frac{1}{2}n_1 + (1/4-\epsilon)n_2 + \frac{c_1}{4}n_3 + ({c_1-\epsilon})n_4\right) &&\text{{\small(since $Z \leq n$)}}	  \nonumber \\
		&\leq \frac{56}{\sqrt{n}}\sw(\mu^*) + \frac{8}{\sqrt{n}}\sw(\mu^*). &&\text{{\small(using (\ref{eqn:lb1}) and (\ref{eqn:lb2}))}}
	\end{align} 
	
	Rearranging (\ref{eqn:lb3}) we have that $\frac{\sw(\mu^*)}{\sw(\mu)} \geq \frac{\sqrt{n}}{64}$, or in other words that, $\ml(\ma) \in \Omega(\sqrt{n})$.
\end{proof}

Finally, we also consider the unit-range case and show that any deterministic algorithm that asks at most one query per (agent, object) pair and produces a Pareto-optimal/rank-maximal/max-cardinality rank-maximal/fair matching has $\ml(\ma) \in \Omega(\sqrt{n})$. The proof of this appears in~\ref{app:sec:unit-range-LB} and is almost identical to the proof of \Cref{thm:LB-nonAdap}, with the main difference being in the way utility functions $u_0,u_1,\ldots,u_8$ in the proof are defined.	
	\section{Discussion}
The focus of this paper was on one-sided matching problems. While the usual assumption in such problems is that agents only submit ordinal preferences, it is not hard to imagine scenarios where agents might have some cardinal preferences which, for instance, indicate that they like object $h_1$ much more than object $h_2$. Although ignoring this information can lead to a lose in welfare, asking the agents for their cardinal utilities is not ideal, since determining their exact utilities can be a cognitively-involved task. Therefore, in this paper we investigated the benefit  of eliciting a small amount of extra information about agents' cardinal utilities. In particular, we designed algorithms that used simple threshold queries and returned a matching satisfying some desirable matching property, while also achieving a good approximation to the optimal welfare among all matchings satisfying that property.  Overall, our results show how even asking  agents for just one bit of extra information per object can improve welfare. 

There are a number of future research directions that this work can take. First, the model in this paper assumes that each agent needs to be matched to at most one object and that each object can be matched to at most one agent. However, there are several situations where more than one agent can be matched to the same object, like when assigning students to courses or schools. While our results do not directly hold when each object $h_j$ has a capacity constraint $c_j$, only minimal modifications are needed. In particular, every time we construct a graph in any of the algorithms, all that needs to be done is to create $c_j$ copies for the node that corresponds to object $h_j$. Other open algorithmic problems include addressing the gap between the upper and lower bounds for the non-adaptive algorithms,  expanding the set of properties of interest to include, for example, popular matchings~\cite{abra07}, or asking similar questions in the context of two-sided matching problems. 

More broadly, a particularly interesting direction is to better understand the implications of deploying such an approach in practice.
As mentioned in the introduction, we believe that in many settings non-adaptive algorithms that only ask the agents for a few number of queries with respect to an object might be the most practical approach to pursue, since it involves minimal communication overhead. Moreover, deploying something like that seems easier since the only change that needs to be made to the existing system which asks for ordinal preferences is to add checkboxes with respect to an object and the corresponding threshold queries. Nevertheless, there are still challenges to make this truly useful. A careful reader would have noticed that the thresholds used in our algorithms are very specific values (like $1/n^{1/3}$), which may not be easy to answer. While one potential way to mitigate some of this difficulty is by multiplying all the threshold values by a large enough constant so as to make them easier to comprehend, it's not clear if that would be enough. Therefore, it might be useful to have studies to better understand the kinds of queries that are easier to answer and the types of interface-design that can best support queries, as well as better understand what matching properties are deemed to be most important by users and designers of systems.

\textbf{Acknowledgments. } We thank Hong Zhou for useful discussions.

\printbibliography

\appendix

\section{Additional Claims} \label{app:addClaims}

\begin{claim} \label{clm:unitsumVal}
	For $i \in [n]$, let $a_i \in \mn$ and let $v_i$ be the valuation function of $a_i$, where $v_i$ is a unit-sum valuation function and for all $h \in A_i$ such that $rank(a_i, h) = 1$, $v_i(h) < \frac{1}{n^{1/3}}$. Then, for all $j \in [k]$, where $k = \floor{\sqrt[3]{n}/2}$, and $h \in A_i$ such that $rank(a_i, h) = j$, $v_i(h) \geq \frac{1}{2n}$ 
\end{claim}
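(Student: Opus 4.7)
The plan is to exploit the unit-sum normalization together with the hypothesis $v_i(h) < 1/n^{1/3}$ at rank $1$, first to show that the top of agent $i$'s valuation carries only a small fraction of the total mass, and then to convert the remaining tail mass into a pointwise lower bound via monotonicity. I would enumerate the objects in $A_i$ in decreasing order of preference as $g_1 \succ_i g_2 \succ_i \cdots \succ_i g_{|A_i|}$, so that by consistency of $v_i$ with $P_i$ we have $v_i(g_1) \geq v_i(g_2) \geq \cdots \geq v_i(g_{|A_i|})$; the claim then amounts to $v_i(g_j) \geq 1/(2n)$ for every $j \in [k]$.

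First I would bound the head mass. Since $v_i(g_\ell) \leq v_i(g_1) < 1/n^{1/3}$ for every $\ell \in [j-1]$, and $j \leq k = \floor{\sqrt[3]{n}/2}$, summing gives
\[
\sum_{\ell=1}^{j-1} v_i(g_\ell) \;<\; \frac{j-1}{n^{1/3}} \;\leq\; \frac{k}{n^{1/3}} \;\leq\; \frac{1}{2}.
\]
Combining this with the unit-sum identity $\sum_{\ell=1}^{|A_i|} v_i(g_\ell) = 1$ forces $\sum_{\ell=j}^{|A_i|} v_i(g_\ell) > 1/2$.

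Next I would convert this tail mass into the desired pointwise estimate via monotonicity. Since $v_i(g_j) \geq v_i(g_\ell)$ for every $\ell \geq j$,
\[
(|A_i| - j + 1)\cdot v_i(g_j) \;\geq\; \sum_{\ell=j}^{|A_i|} v_i(g_\ell) \;>\; \frac{1}{2},
\]
and using $|A_i| - j + 1 \leq n$ then yields $v_i(g_j) > 1/(2n)$, which is the claimed bound.

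There is no real obstacle here; the argument is essentially the pigeonhole slogan ``a light head forces a heavy tail, whose leading entry must dominate.'' The only delicate point to verify is the calibration of constants: the choice $k = \floor{\sqrt[3]{n}/2}$ is tuned precisely so that $k/n^{1/3} \leq 1/2$, which is exactly what keeps the head mass below half of the unit-sum budget and allows the monotonicity step to close the gap. In particular, note that the argument also implicitly confirms that such an object of rank $j$ exists in $A_i$: if $|A_i| < j$, the head estimate above would bound the full unit sum by $1/2$, contradicting unit-sum.
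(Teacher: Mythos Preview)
Your proof is correct and follows essentially the same head–tail decomposition as the paper: bound the mass of the top $j-1$ objects by $(j-1)/n^{1/3}\le 1/2$, deduce that the remaining mass exceeds $1/2$, and use monotonicity plus $|A_i|\le n$ to get the pointwise bound. The paper wraps this in a contradiction to first establish the slightly sharper rank-dependent bound $v_i(h)\ge \frac{1-(j-1)t_1}{n-(j-1)}$ before specializing to $1/(2n)$, whereas you go there directly; one minor caveat is that since $P_i$ is a weak order you should enumerate with $\succeq_i$ (or simply by nonincreasing value) rather than $\succ_i$, but your argument only uses the nonstrict chain $v_i(g_1)\ge v_i(g_2)\ge\cdots$, so this is purely notational.
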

\begin{proof}
	Let $t_1 = \frac{1}{n^{1/3}}$. For any $j \in [k]$ and $h \in A_i$ such that $rank(a_i, h) = j$, we will show that $v_i(h) \geq \frac{1-(j-1)t_1}{n-(j-1)}$. To see this, consider the smallest ${j}^*$ such that for $h^* \in A_i$ with $rank(a_i, h^*) = {j}^*$ and $v_i(h^*) < \frac{1-({j}^*-1)t_1}{n-({j}^*-1)}$. Note that ${j}^* \geq 2$, since $v_i$ is a unit-sum valuation function and hence $v_i(h) \geq \frac{1}{n}$ for all $h \in A_i$ such that $rank(a_i, h) = 1$. 
	
	Next, note that, for any $\ell \in \{2, \ldots, n\}$ and $U_{\ell} = \{h \in A_i \mid rank(a_i, h) \leq \ell-1\}$, if there exists an $h \in A_i$ such that $rank(a_i, h) = \ell$, then it follows from the definition of rank that $|U_{\ell}| = \ell-1$ . Therefore, using this below, we have that, 
	\begin{align*}
		\sum_{h\in A_i} {v_i(h)} &= v_i(h^*) + \sum_{\substack{h \in U_{j^*}}}v_i(h) + \sum_{A_i\setminus (U_{j^*} \cup \{h^*\})}v_i(h) \\
		&\leq v_i(h^*) + (j^*-1)\cdot t_1 + (n-j^*)\cdot v_i(h^*)\\
		&<(j^*-1)\cdot t_1 + (n-(j^*-1)) \cdot  \frac{1-(j^*-1)t_1}{n-(j^*-1)}\\
		&=1.
	\end{align*}
	However, this in turn contradicts the fact that $v_i$ is a unit-sum valuation function. 
	
	Therefore, since $t_1 \leq \frac{1}{2k}$, for all $j \in [k]$ and $h \in A_i$ such that $rank(a_i, h) = j$, we have that \[v_i(h) \geq \frac{1-(j-1)t_1}{n-(j-1)} \geq \frac{1-k\cdot t_1}{n-(j-1)} \geq \frac{1-k\cdot \frac{1}{2k}}{n-(j-1)} \geq \frac{1}{2n}.\]	
\end{proof}

\begin{claim} \label{clm:sigTopriority}
	Given an instance $\mi = (\mn, \mh, \mpp = (P_1, \allowbreak \ldots, P_n))$, a matching $\mu$ is a
	\begin{enumerate}[label=\roman*),ref=\roman*]
		\item rank-maximal matching w.r.t.\ $\mi$ if and only if it is a priority-$\bmp$ matching, where $\bmp = (p_1, \allowbreak \ldots, p_n)$ and, for $j \in [n]$, $p_j= n^{2(n-j+1)}$.
		\item max-cardinality rank-maximal matching w.r.t.\ $\mi$ if and only if it is a priority-$\bmp$ matching, where $\bmp = (p_1, \ldots, p_n)$ and, for $j \in [n]$, $p_j= n^{2n} + n^{2(n-j)}$.
		\item fair matching w.r.t.\ $\mi$ if and only if it is a priority-$\bmp$ matching, where $\bmp = (p_1, \ldots, p_n)$ and, for $j \in [n]$, $p_j= 4n^{2n} - 2n^{(j-1)}$.
	\end{enumerate}
\end{claim}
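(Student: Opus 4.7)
I plan to prove all three items by a single template: for each priority vector $\bmp$, write the weight $W(\mu) = \sum_{j=1}^n s_j p_j$ of a matching $\mu$ as a positional expansion of the relevant signature tuple and show that max-weight matchings coincide with the lexicographic optima defining the matching type. Both directions of each ``iff'' then follow by contrapositive from a single strict positional-dominance statement. Throughout I will use only the bounds $0 \leq s_j \leq n$ and $\sum_j s_j \leq n$, which always hold since a matching has at most $n$ edges, together with the fact that the chosen $p_j$'s are tuned so that high-order digits strictly swamp the aggregated contribution of lower-order digits under these bounds.

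For part (i), $W(\mu) = \sum_j s_j (n^2)^{n-j+1}$ is a base-$n^2$ expansion of $(s_1,\ldots,s_n)$ read from $s_n$ back to $s_1$. Since $s_j \leq n < n^2$, the standard carry-free positional argument applies: if two signatures first differ at index $k$ with $s_k > s'_k$, the contribution at that position is at least $(n^2)^{n-k+1}$, while the combined absolute contribution of all smaller positions is at most $n \sum_{j>k} (n^2)^{n-j+1}$, a geometric-series bound strictly less than $(n^2)^{n-k+1}$ for $n \geq 2$. For part (ii) I would split $W(\mu) = n^{2n} \sum_j s_j + \sum_j s_j (n^2)^{n-j}$: the first summand encodes cardinality and the second is a shifted base-$n^2$ rank-maximal encoding. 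A cardinality gap of at least $1$ yields a first-summand gap of at least $n^{2n}$, which dominates the whole second summand (bounded by $n \sum_{j=1}^n (n^2)^{n-j} < n^{2n}$); at equal cardinality the argument of (i) carries over verbatim.

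Part (iii) is the main obstacle. The weight splits as $W(\mu) = 4n^{2n} \sum_j s_j - 2 \sum_j s_j n^{j-1}$. The cardinality-dominance step is analogous to (ii), using $2 \sum_j s_j n^{j-1} \leq 4 n^n \ll 4 n^{2n}$, so the task reduces to showing that among signatures with a fixed cardinality $C \leq n$, minimizing $\sum_j s_j n^{j-1}$ coincides with lex-minimizing $(s_n, s_{n-1}, \ldots, s_1)$. What makes this delicate is that the relevant base is $n$ rather than $n^2$, so in principle a digit $s_j = n$ can carry and create distinct signatures with the same value of $\sum_j s_j n^{j-1}$. The rescue is the cardinality constraint, which rules out every such collision. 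In detail, if $s$ is lex-smaller than $s'$ at the largest differing index $j$ (so $s_j < s'_j$), the weight gap equals $(s'_j - s_j) n^{j-1} + \sum_{k<j} (s'_k - s_k) n^{k-1}$; lower-bounding the second term by $-\sum_{k<j} s_k n^{k-1} \geq -C \cdot n^{j-2} \geq -n^{j-1}$ gives only a non-strict inequality. I would push to strict inequality by analyzing the equality case: equality would force $s_{j-1} = n$ with all other $s_k = 0$, and then the identity $\sum s' = C \leq n$ together with $s'_j \geq 1$ would force $C = 1$, leaving only the trivial $n = 1$ case (where all three notions coincide) to handle separately. Once this strict positional-dominance lemma is established, each of the three equivalences follows immediately.
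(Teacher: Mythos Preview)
Your proposal is correct and follows the same positional-encoding strategy as the paper: write $W(\mu)=\sum_j s_j p_j$ and argue that weight order coincides with the relevant lexicographic order on signatures. For parts (i) and (ii) your argument and the paper's are essentially identical; the paper phrases each direction as a short contradiction, but the content (the base-$n^2$ digit bound $s_j\le n<n^2$ for rank-maximality, and the cardinality term $n^{2n}$ dominating the residual for max-cardinality) is the same.

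Where your write-up genuinely adds value is part (iii). The paper simply says ``this case can be handled as in (ii)'', but the direct analogue of the (i)/(ii) step gives only $n^{j-1}\ge n\cdot n^{j-2}$, i.e.\ a non-strict bound, because the fair-matching weights use base $n$ rather than base $n^2$. You correctly isolate this issue and resolve it via the equal-cardinality constraint: the equality case forces $s_{j-1}=n$ with all other $s_k=0$ (so $C=n$) while simultaneously forcing $s'_k=0$ for $k\neq j$ and $s'_j=s_j+1=1$ (so $C=1$), hence $n=1$. Your outline of this step is slightly terse---it would be worth stating explicitly that tightness of the chain requires \emph{all} of $s'_j-s_j=1$, $s'_k=0$ for $k<j$, $s_k=0$ for $k\neq j-1$, and $C=n$ simultaneously---but the argument is sound and fills a gap the paper's proof leaves implicit.
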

\begin{proof}
	\proofcase{i).} ($\Rightarrow$) Suppose $\mu$ is a priority-$\bmp$ matching (which in turn implies that it is the max-weight matching in $\mathbb{G}_{\mi}$ with weights $\{w_e\}_{e \in \me}$, where, for an edge $e = (a_i, h_j)$ and $r = rank(a_i, h_j)$, $w_e = p_r$), but is not rank-maximal w.r.t.\ $\mi$. Let $\mu'$ be a rank-maximal matching w.r.t.\ $\mi$, and the let the signatures of $\mu$ and $\mu'$ be $(s_1, \ldots, s_n)$ and $(s'_1, \ldots, s'_n)$, respectively. Since $\mu$ is not rank-maximal, consider the smallest $j \in [n]$ such that $s_i = s'_i$ for all $i < j$ and $s_j < s'_j$.	Next, note that both $\mu$ and $\mu'$ are matchings in $\mathbb{G}_{\mi}$, and that the weight of a $j$-th ranked edge is  $n^{2(n-j+1)}$. This in turn implies that weight of $\mu'$ is greater than weight of $\mu$, since $n^{2(n-j+1)} > n \cdot n^{2(n-j)}$, or in other words that taking a $j$-th ranked edge in $\mathbb{G}_{\mi}$ with weights $\{w_e\}_{e \in \me}$ is more beneficial than taking any number of edges of lower rank. However, this contradicts the fact that $\mu$ is the max-weight matching in $\mathbb{G}_{\mi}$.
	
	($\Leftarrow$) Suppose $\mu$ is a rank-maximal matching, but is not a priority-$\bmp$ matching. This implies there is another matching $\mu'$ such that $\mu'$ has a higher weight than $\mu$ in $\mathbb{G}_{\mi}$ with weights $\{w_e\}_{e \in \me}$, where, for an edge $e = (a_i, h_j)$ and $r = rank(a_i, h_j)$, $w_e = p_r$. Also note that $\mu'$ cannot be rank-maximal, since all rank-maximal matchings will have the same weight in $\mathbb{G}_{\mi}$ with weights $\{w_e\}_{e \in \me}$. Given this, it is easy to see that we have a contradiction given our choice of $p_j$s and the fact that $\mu$ is rank-maximal.	
	
	\proofcase{ii).} ($\Rightarrow$) Suppose $\mu$ is a priority-$\bmp$ matching. First we will argue that $\mu$ is a max-cardinality matching in $\mathbb{G}_{\mi}$. To see this, suppose not. This implies there is another matching $\mu'$ such that $|\mu'| > |\mu|$. Next, note that if $W[\mu]$ denotes the weight of $\mu$ in $\mathbb{G}_{\mi}$ with weights $\{w_e\}_{e \in \me}$, then $W[\mu] \leq |\mu| \cdot p_1$. Also, note that we have, $W[\mu'] \geq |\mu'|\cdot n^{2n} > |\mu| \cdot p_1 \geq W[\mu]$, where the second inequality follows from the fact that $|\mu'| - |\mu| \geq 1$ and $|\mu| \leq n$. However, this in turn contradicts the fact that $\mu$ is a priority-$\bmp$ matching.
	
	Now that we have the established that $\mu$ is a max-cardinality matching, we need to show that it is rank-maximal (among max-cardinality matchings). And the proof of this can be obtained by proceeding as in the corresponding case in i) above.
	
	($\Leftarrow$) The proof of this can be obtained by proceeding as in the corresponding case in i) above. 
	
	\proofcase{iii).} This case can be handled as in ii) above, with the only difference being that the signature in this case is the $(n+1)$-tuple $(\sum_{i=1}^n s_i, -s_n, -s_{n-1}, \ldots, -s_1)$.		
\end{proof}

\begin{claim} \label{clm:boundB}
	Given an instance $\mi = (\mn, \mh, \mpp)$ and priorities $\bmp = (p_1, \ldots, p_n)$, where $\bmp \in \mathbb{P}$, let $\mu$ denote a Pareto optimal matching or a priority-$\bmp$ matching w.r.t.\ $\mi$. If $B_i$ denotes the set of agents matched to a rank-$i$ edge in $\mu$, $k$ is a positive integer that is at most $\floor{n/2}$, and $B = \cup_{i=1}^{k} B_i$, then $|B| \geq \min\left\{k, |\mu|\right\}$.  
\end{claim}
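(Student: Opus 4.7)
The plan is a contradiction argument via cycle rotation that simultaneously refutes both characterizations of $\mu$. Suppose $|B| < \min\{k, |\mu|\}$ and let $B' := \{a \in \mn : a \text{ is matched in } \mu\} \setminus B$ be the matched agents whose $\mu$-partner has rank strictly greater than $k$; then $|B| < k$ and $B' \neq \emptyset$. For every $a \in B'$, the hypothesis $rank(a, \mu(a)) > k$ implies that $a$ has at least $k$ acceptable objects strictly preferred to $\mu(a)$, so $|A_a| \geq k + 1$ and the top-$k$ set $T_a := \{h \in A_a : rank(a, h) \leq k\}$ has $|T_a| \geq k$ (the top $k$ elements of any linear extension of $a$'s weak order each have at most $k-1$ strictly-preferred acceptable objects and hence lie in $T_a$).

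I would then establish two structural facts under the contradiction hypothesis. First, for every $a \in B'$ and every $h \in T_a$, the object $h$ must already be matched in $\mu$; otherwise, reassigning $a$'s partner from $\mu(a)$ to $h$ strictly improves $a$'s allocation and leaves every other agent unchanged. In the Pareto case this is a direct Pareto improvement, contradicting Pareto optimality; in the priority-$\bmp$ case, since $p_r$ is strictly decreasing in $r$ for every $\bmp \in \mathbb{P}$, the same swap strictly increases the total $p$-weight, contradicting the max-weight characterization from \Cref{thm:computeOptPriority}. Second, for every $a \in B'$, the set $T_a$ meets $\mu(B')$; otherwise, combining with the first fact (which places $T_a$ inside the matched objects $\mu(B) \cup \mu(B')$) would force $T_a \subseteq \mu(B)$, giving $k \leq |T_a| \leq |B| < k$, a contradiction. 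Since $\mu(a) \notin T_a$, this in fact forces $T_a \cap \mu(B' \setminus \{a\}) \neq \emptyset$.

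Using the second fact, define a fixed-point-free map $\sigma : B' \to B'$ by choosing, for each $a \in B'$, some $\sigma(a) \in B' \setminus \{a\}$ with $\mu(\sigma(a)) \in T_a$. Iterating $\sigma$ on the finite set $B'$ from any starting point must produce a cycle $a_1 \to a_2 \to \cdots \to a_t \to a_1$ of length $t \geq 2$ (since $\sigma$ has no fixed point), with $\mu(a_{i+1}) \in T_{a_i}$ for every $i \in [t]$ (using $a_{t+1} = a_1$). Rotating $\mu$ along this cycle---reassigning each $a_i$ to $\mu(a_{i+1})$ while keeping all non-cycle edges intact---is a valid matching in which each $a_i$ moves from a rank-$(>k)$ partner to a rank-$(\leq k)$ partner, a strict improvement for every $a_i$. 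This is at once a Pareto improvement and, by the strict monotonicity of $p_r$ in $r$, a strict increase in the total $p$-weight, contradicting Pareto optimality or the priority-$\bmp$ property of $\mu$ respectively.

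The main obstacle is arranging that the rotation cycle stays entirely inside $B'$: a cycle that visited some $b \in B$ would reassign $b$ away from its rank-$(\leq k)$ $\mu$-partner, rendering the swap ambiguous rather than a strict improvement. The hypothesis $|B| < k$ is used precisely in the second structural fact to force $T_a$ to overshoot $\mu(B)$ and reach into $\mu(B')$, which in turn constrains $\sigma$ to map $B'$ into itself; given this, the cycle extraction from iterating $\sigma$ and the simultaneous Pareto and weight contradictions from the rotation follow routinely.
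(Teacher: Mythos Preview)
Your proof is correct and follows essentially the same approach as the paper's: both argue that each agent in $B'$ has all of its top-$k$ objects matched and at least one of them held by another $B'$ agent, then extract a cycle inside $B'$ and rotate to obtain a Pareto-dominating matching of the same size. Your presentation is slightly more streamlined---you collapse the paper's two cases ($k \geq |\mu|$ and $k < |\mu|$) into a single contradiction hypothesis $|B| < \min\{k,|\mu|\}$, and you spell out the priority-$\bmp$ contradiction directly via the strict monotonicity of $p_r$ rather than leaving it implicit in the Pareto-dominance statement---but the underlying argument is the same.
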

\begin{proof}
	Let $B' = \cup_{i=k+1}^{n} B_i$ be the set of agents who are matched to an object of rank at least $k+1$ in $\mu$. In order to prove our claim, let us first consider the case when $k \geq |\mu|$. For this case we will show that $|B'| = 0$, thus implying that $|B| = \mu$. To see this, suppose $|B'| \geq 1$. Without loss of generality, let $a_1 \in |B'|$ and let $O_1^k = \{h \in A_1 \mid rank(a_1, h) \leq k\}$. Since $\mu$ is a Pareto optimal or priority-$\bmp$ matching, all the objects in $O_1^k$ are matched to some agent in $\mn\setminus\{a_1\}$, because if there exists an unmatched, say, $h_1 \in O_1^k$, then $\mu \setminus \{a_1, \mu(a_1)\} \cup \{a_1, h_1\}$ Pareto dominates $\mu$ and has the same size as $\mu$. However, this in turn implies $|\mu| \geq (|O_1^k| + 1) = k+1$, a contradiction. 
	
	Next, let us consider the case when $k < |\mu|$. To prove our claim, let us assume for the sake of contradiction that $|B| < k$, which in turn, along with the fact that $k < |\mu|$, implies that $|B'| \geq 1$. Let $Y = |B'|$ and w.l.o.g.\ let us assume that $\{a_1, \dots a_{Y}\}$ are the agents in $B'$. Note that, for $i \in [Y]$, $a_i$ finds at least $k+1$ objects acceptable since they are matched to an object of rank at least $k+1$. 
	
	Now, consider any $i \in [Y]$ and  let $O_i^k = \{h \in A_i \mid rank(a_i, h) \leq k\}$. As argued above for the case when $k \geq |\mu|$, since $\mu$ is a Pareto optimal matching or a priority-$\bmp$ matching, all the objects in $O_i^k$ are matched to some agent. Additionally, also note that, when combined with the previous observation, there is at least one object, say, $h^s_i \in O_i^k$ such that it is matched to an agent, say, $a_p \in B'$ such that $rank(a_p, h_i^s) \geq k+1$ (because otherwise, $|B| \geq k$, a contradiction); we will refer to $h^s_i$ as $a_i$'s special object. Given all the observations above, we will now argue that if $\mu$ is such that $|B| < k$, then there exists a matching $\mu'$ such that $\mu'$ Pareto dominates $\mu$ and has the same size as $\mu$.
	
	To see this, consider the graph $\mathbb{G'} = (\{a_1, \cdots, a_{Y}\}, \me')$, where there is a directed edge from $a_i$ to $a_j$ if $\mu(a_j) = h^s_i$ (i.e., if $a_j$ is matched to $a_i$'s special object). Since, for every $i \in [Y]$, $a_i$ has a special object $h_i^s$ and $h_i^s$ is matched to one of the agents in $\{a_1, \cdots, a_{Y}\}\setminus\{a_i\}$, note that this graph has to have a cycle. So, now, consider this cycle and implement the trade indicated by this cycle---meaning, if $(a_i, a_j)$ is an edge in this cycle, then allocate $\mu(a_j)$ to $a_i$. Note that the resulting matching, say, $\mu'$ has the same size as $\mu$ and also Pareto dominates $\mu$ (since every agent $a_i$ in this cycle gets an object that is in $O_i^k$), which in turn contradicts the fact that $\mu$ is a Pareto optimal or priority-$\bmp$ matching. 
\end{proof}

\begin{claim} \label{clm:boundX}
	Given an instance $\mi = (\mn, \mh, \mpp)$, let $\mu_1$ be any arbitrary matching in $\mathbb{G}_{\mi}$, and $\mu_2$ denote the matching in $\mathbb{G}_{\mi}$ that matches as many agents as possible with an edge of rank at most $k$, where $k = \floor{\sqrt[3]{n}/2}$. If $L_i$ and $X'_i$ denote the set of agents matched to a rank-$i$ edge in $\mu_1$ and $\mu_2$, respectively, and $X' = \cup_{i=1}^{k} X'_i $, then $|X'| \geq \min\left\{k, \left(\sum_{i=k+1}^{n} |L_i| \right)\right\}$. 
\end{claim}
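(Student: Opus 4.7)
The plan is to reduce the claim to an application of the deficiency version of Hall's theorem on a restricted bipartite graph. Let $Y := \sum_{i=k+1}^{n}|L_i|$ and let $L_{>k} \subseteq \mathcal{N}$ denote the set of $Y$ agents who are matched in $\mu_1$ to an edge of rank at least $k+1$. My goal is to exhibit a matching in $\mathbb{G}_{\mi}$ of size at least $\min\{k, Y\}$ in which every edge has rank at most $k$. Since $\mu_2$ is, by definition, a matching that maximizes the number of agents matched by an edge of rank at most $k$, the desired bound $|X'| \geq \min\{k, Y\}$ will follow at once.

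The first step is the (easy but crucial) observation that for every $a \in L_{>k}$, the set $T(a)$ of $a$'s top-$k$ acceptable objects has size at least $k$. Indeed, $a$ is matched in $\mu_1$ to an acceptable object of rank $\geq k+1$, so there exist at least $k$ further acceptable objects that $a$ strictly prefers, and together with $\mu_1(a)$ these give $|A_a| \geq k+1$, hence $|T(a)| \geq k$.

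Next, I would form the bipartite graph $H = (L_{>k}, \mathcal{H}, E_H)$ with $E_H = \{(a,h) : a \in L_{>k},\ h \in T(a)\}$. By construction every matching of $H$ uses only rank-$\leq k$ edges of $\mathbb{G}_\mi$. Applying the deficiency form of Hall's theorem, the maximum matching of $H$ has size exactly
\[
Y - \max_{S \subseteq L_{>k}}\bigl(|S| - |N_H(S)|\bigr).
\]
For $S = \emptyset$ the bracket is $0$. For any nonempty $S$, choosing any $a \in S$ yields $|N_H(S)| \geq |T(a)| \geq k$, so $|S| - |N_H(S)| \leq |S| - k \leq Y - k$. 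Therefore the maximum deficiency is at most $\max\{0, Y-k\}$, and $H$ admits a matching of size at least $Y - \max\{0, Y-k\} = \min\{k, Y\}$. Since this is a matching in $\mathbb{G}_\mi$ all of whose edges have rank $\leq k$, and $\mu_2$ maximizes that count, we obtain $|X'| \geq \min\{k, Y\}$.

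I do not anticipate any serious obstacle: the only delicate point is the uniform treatment of the two regimes $Y \leq k$ and $Y > k$, and the deficiency formulation handles both simultaneously. Everything else (the $|T(a)| \geq k$ bound, the fact that $\mu_2 \geq |M^*|$) is immediate from the setup, so the proof should be short.
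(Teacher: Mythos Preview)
Your proposal is correct and follows the same overall strategy as the paper: observe that every agent in $L_{>k}$ has at least $k$ acceptable objects of rank at most $k$, deduce that a rank-$\le k$ matching of size $\min\{k,Y\}$ exists, and then invoke the maximality of $\mu_2$. The paper's proof simply asserts the existence of that matching in one sentence, whereas you justify it explicitly via the deficiency form of Hall's theorem; so your argument is essentially the paper's, but with the key existence step made rigorous rather than left implicit.
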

\begin{proof}
	Let $Y = \sum_{i=k+1}^{n} |L_i|$, i.e., the number of agents who are matched to an object of rank at least $k+1$ in $\mu_1$. Since all the agents in $\cup_{i=k+1}^{n} L_i$ find at least $k+1$ objects acceptable, this implies that there is a matching of size $z = \min\{k, Y\}$ in $\mathbb{G}_{\mi}$ where all the agents are matched to a rank-$i$ edge, where $i \in [k]$. Now, since $\mu_2$ is the matching in $\mathbb{G}_{\mi}$ that matches as many agents as possible with an edge of rank at most $k$, $X' \geq z$.     
\end{proof}

\section{Missing Proofs from \texorpdfstring{\Cref{sec:nonAdap}}{}} \label{app:missingProofsnonAdap}

\subsection{Proof of \texorpdfstring{\Cref{thm:UB-nonAdap-priority}}{}} \label{app:missing:UB-nonAdap-priority}
\nonAdapPriority*

To prove this, we use the same notations and terminologies that were introduced for the proof of \Cref{thm:UB-nonAdap-PO}, except that now these are defined with respect to priority-$\bmp$ matchings. Next, we prove the following lemma, which is almost identical to \Cref{lemma:highVals-PO}.
%

\begin{lemma} \label{lemma:highVals-Priority}
	Let $\mu$ be the matching that is computed in line~\ref{algo:line:muMM} in \Cref{algo:UB-nonAdap-Priority}. Then, $\sw(\mu^{*}_H) \leq n^{2/3} \cdot \sw(\mu)$.
\end{lemma}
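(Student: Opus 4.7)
The plan is to mirror the argument used for \Cref{lemma:highVals-PO}, but with one extra step needed to separate the priority contribution from the proxy utility contribution in the edge weights $w_e = p_r + t_r$ (or $w_e = p_r$) produced by \Cref{algo:UB-nonAdap-Priority}. Concretely, for each edge $e = (a_i, h_j) \in \me$ with $r = \text{rank}(a_i, h_j)$, let $\val'(e) = t_r$ if $\mq(a_i,h_j,t_r) = 1$ and $\val'(e) = 0$ otherwise, and write $\sw'(\mu_1) = \sum_{e \in \mu_1}\val'(e)$ for any matching $\mu_1$. The high-level chain I want to establish is
\[
\sw(\mu^{*}_H) \;\leq\; n^{2/3}\cdot\sw'(\mu^{*}) \;\leq\; n^{2/3}\cdot \sw'(\mu) \;\leq\; n^{2/3}\cdot \sw(\mu),
\]
where the three inequalities come, respectively, from the scaling between $t_r$ and the true utility on edges in $\mu^{*}_H$, an optimality argument that $\mu$ carries at least as much proxy welfare as $\mu^{*}$, and the fact that $\val'(e) \leq v_i(h_j)$ edge-by-edge.

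For the first inequality, I would reuse the bookkeeping from \Cref{lemma:highVals-PO}: for each $e = (a_i, h_j^{\ell}) \in \mu^{*}_H$ the query at threshold $t_\ell$ returns $1$ (by definition of $H$), so $\val'(e) = t_\ell$, and the three cases $\ell=1$, $2 \leq \ell \leq n^{1/3}$, and $\ell > n^{1/3}$ together with the unit-sum bound $v_i(h_j^\ell) \leq 1/\ell$ give $t_\ell \geq v_i(h_j^\ell)/n^{2/3}$. Summing yields $\sw(\mu^{*}_H) \leq n^{2/3}\cdot \sw'(\mu^{*})$. For the last inequality, I note that if a query returned $1$ then $v_i(h_j) \geq t_r = \val'(e)$, and if it returned $0$ then $\val'(e) = 0 \leq v_i(h_j)$, so $\sw'(\mu) \leq \sw(\mu)$ trivially.

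The main obstacle, and the one genuinely new step relative to the PO version, is the middle inequality $\sw'(\mu^{*}) \leq \sw'(\mu)$: one cannot simply invoke ``$\mu$ is max-weight'' because the proxy $\sw'$ is only part of the weight the algorithm maximizes. The plan here is to use \Cref{thm:computeOptPriority} twice. Since $\val'(e) \in [0,1]$ for every edge, \Cref{thm:computeOptPriority} applied to the proxy valuations tells us that $\mu$ (computed in line~\ref{algo:line:muMM}) is in fact a priority-$\bmp$ matching; the same is true of $\mu^{*}$ by assumption. Consequently, both $\mu$ and $\mu^{*}$ maximize $\sum_{e} p_{r(e)}$ and hence attain the same total priority weight. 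Combined with the fact that $\mu$ is max-weight under the combined weights $p_{r(e)} + \val'(e)$, this forces $\sw'(\mu) \geq \sw'(\mu^{*})$. Chaining the three inequalities then gives the lemma.
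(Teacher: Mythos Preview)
Your proposal is correct and follows the same route as the paper, which simply says to ``proceed exactly as in the proof of \Cref{lemma:highVals-PO}.'' The one wrinkle you spell out---that $\mu$ and $\mu^{*}$ are both priority-$\bmp$ matchings and hence carry the same total priority weight, so maximality of $\mu$ under $p_{r}+\val'$ forces $\sw'(\mu)\geq\sw'(\mu^{*})$---is exactly the detail the paper elides by referring back to the PO case, and it is needed for the argument to go through here.
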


\begin{proof}
	This can be proved by proceeding exactly as in the proof of \Cref{lemma:highVals-PO}.
\end{proof}


\begin{proof}[Proof of \Cref{thm:UB-nonAdap-priority}] First, it is easy to see that for the given priority vector $\bmp \in \mathbb{P}$ the matching returned by \Cref{algo:UB-nonAdap-Priority} is a priority-$\bmp$ matching. Next, let $S$ be the set of agents who answered ``Yes" w.r.t.\ $t_1$ (i.e., all these agent have a value of at least $t_1$ for (one of) their first choice object(s)).	Also, let $B_i$ be the number of agents who are matched to (one of) their $i$-th choice object(s) in $\mu$. Note that since $\mu$ is a priority-$\bmp$ matching, we know that $|B_i| = |H_i| + |L_i|$ for all $i \in [n]$ (since all priority-$\bmp$ matchings have the same signature). Additionally, we also know that $\sw(\mu) \geq |B_1| \cdot \frac{1}{n}$, since the agents have unit-sum valuations.	
	
	Now, if $|S| \geq 1 $, then we have that, 
	\begin{align} \label{lemm:p-eqn1}
		\frac{\sw({\mu^*})}{\sw({\mu})} &= \frac{\sw(\mu^{*}_H) +  \sw(\mu^{*}_L)}{\sw(\mu)} \nonumber \\
		&= \frac{\sw(\mu^{*}_H) +  \sw(\mu^{*}_{L_{1}}) + \sum_{i=2}^{n} \sw(\mu^{*}_{L_{i}})}{\sw(\mu)} \nonumber\\
		&\leq \frac{n^{2/3} \cdot \sw(\mu) + |B_1|\cdot t_1 + n\cdot t_{2}}{\sw(\mu) } \nonumber\\ 
		&\leq \frac{n^{2/3} \cdot \sw(\mu)}{\sw(\mu)} + \frac{|B_1|\cdot t_1}{|B_1|\cdot \frac{1}{n}} + \frac{n\cdot t_{2}}{t_1} \nonumber\\ 
		&\leq n^{2/3} + {n}^{2/3} + n^{2/3}, 
	\end{align}
	where the first inequality by using \Cref{lemma:highVals-Priority} and the fact that $L_1 \leq B_1$, second inequality follows from the fact that $\sw(\mu) \geq t_1$, since $|S| \geq 1 $. and the final inequality follows from our choice of $t_1$ and $t_2$. 
	
	On the other hand, if $|S| = 0$, then, first, let $k = \floor{\sqrt[3]{n}/2}$. Next, note that every agent values their first choice at a value less than $t_1$. This in turn implies that, since their valuations are unit-sum, for $j \in [k]$, their value for a rank-$j$ object is at least $\frac{1}{2n}$ (see \Cref{clm:unitsumVal} in \Cref{app:addClaims} for a proof). Additionally, let $B = \cup_{i=1}^{k} {B_i}$ and $B' = \cup_{i=k+1}^{n} {B_i}$; from \Cref{clm:boundB} we know that $|B| \geq \min\{k, |\mu|\}$. Given this, we have, 
	\begin{align} \label{lemm:p-eqn2}
		\frac{\sw({\mu^*})}{\sw({\mu})} &= \frac{\sw(\mu^{*}_H) +  \sw(\mu^{*}_L)}{\sw(\mu)} \\
		&=\frac{\sw(\mu^{*}_H) +  \sum_{i=1}^{k}  \sw(\mu^{*}_{L_{i}}) + \sum_{i=k+1}^{n} \sw(\mu^{*}_{L_{i}})}{\sw(\mu)} \nonumber\\
		&\leq \frac{\sw(\mu^{*}_H) + \left(\sum_{i=1}^{k} |B_i| \right)\cdot t_1 + \left(\sum_{i=k+1}^{n} |B_i| \right)\cdot t_{k+1}}{\sw(\mu)} \nonumber \\ 
		&=\frac{n^{2/3} \cdot \sw(\mu) + |B|\cdot t_1 + |B'|\cdot t_{k+1}}{\sw(\mu)} \nonumber \\
		&\leq \frac{n^{2/3} \cdot \sw(\mu)}{\sw(\mu)} + \frac{|B|\cdot t_1}{|B|/(2n)} + \frac{|B'|\cdot t_{k+1}}{|B|/(2n)} \nonumber \\ 
		&\leq n^{2/3} + 2{n}^{2/3} + 8{n}^{2/3}, 
	\end{align}
	where the first inequality follows from using \Cref{lemma:highVals-Priority} and the fact that $L_i \leq B_i$ for all $i \in [n]$, the second inequality follows from \Cref{clm:clm2}, and the final inequality follows since $|B| \geq \min\{k, |\mu|\}$ and $t_{{k}+1} \leq 2/n$.
	
	Finally, combining (\ref{lemm:p-eqn1}) and (\ref{lemm:p-eqn2}) gives us our theorem.
\end{proof}

\section{Additional Discussions} 

\subsection{Power of ordinal algorithms} \label{app:sec:ordinal}

In this section we look at the power of ordinal algorithms---meaning we want to understand the worst-case loss in welfare experienced by an ordinal algorithm (i.e., an algorithm which only uses the ordinal information given by the agents). We argue that when agents have unit-sum valuations functions, any deterministic algorithm $\ma$ is such that $\ml(\ma) \in \Omega({n}^2)$. The proof of this is similar to the proof of result by \citeauthor{aman20b} where they show that the distortion of any deterministic ordinal algorithm is $\Omega({n}^2)$ \cite[Thm.\ 1]{aman20b}.  Also, just like in \citet[Thm.\ 1]{aman20b}, this bound is asymptotically tight.

\LBordinal*
\begin{proof}[Proof (sketch)]
	The proof here follows almost directly from the proof of Theorem 1 in the paper by \citet{aman20b} where they show that the distortion of any deterministic ordinal algorithm is $\Omega({n}^2)$. 
	
	The main observation to note is that for the instance they construct if there are $j$ pairs of agents and if an agent in the $i$-th pair gets the top-choice object $a$, then any matching $\mm$ that matches the other agent in the $i$-th pair to their second choice $b_i$, one of the agents in pair $\ell \in [j]\setminus \{i\}$ to their second choice $b_{\ell}$, and the remaining agents to their highest possible choice (i.e., for pair $\ell \in [j]\setminus \{i\}$, allocate object $c_{\ell-1}$ to the agent who has not received their second choice) is Pareto-optimal/rank-maximal/max-cardinality rank-maximal/fair. 
	
	Given this, one can construct valuation functions as described in their proof and see that an ordinal algorithm cannot distinguish between matchings that have welfare of at least $n/4$ and ones which have a welfare of at most $1/n$, thus resulting in a lower bound of $\Omega(n^2)$.
	
	Finally, to see that the bound is tight, first note that when agents have unit-sum valuation functions, as long as there is at least one agent who is matched to their top-choice object, $\ml(\ma) \in O({n}^2)$, since the maximum social welfare achievable is $n$, and the agent who gets their top-choice has a value of at least $1/n$ for their top-choice object. Now, in order to achieve this, one can run any Pareto-optimal/rank-maximal/max-cardinality rank-maximal/fair algorithm and the matching returned by any such algorithm has at least one agent matched to a rank-1 object (see \Cref{clm:boundB} in \Cref{app:addClaims} for a proof).
\end{proof}

One can derive a similar lower bound for unit-range valuations and show that it is $\Omega(n)$. This can be done by making similar observations as in the proof of \Cref{thm:LB-ordinal}. The only part that needs to be modified is in the way the valuation functions are defined; all we need to do here is to define the value of each top-choice object to be 1 and the least preferred object to be 0.  Moreover, this is again asymptotically tight because of the exact same reason mentioned above for unit-sum valuations. The only difference is that when agent have unit-range valuations, the value of the top-choice object is 1, thus resulting in an $O(n)$ bound. 

\begin{theorem} \label{thm:LB-ordinal-unitRange}
	Let $X$ denote one of the properties in the set $\{${Pareto-optimal, $\allowbreak$ rank-maximal, max-cardinality rank-maximal, and fair}$\}$. Let $\ma$ be a deterministic ordinal algorithm that always produces a matching with property $X$. If there are $n$ agents with unit-range valuation functions, then $\ml(\ma) \in \Omega({n})$. Moreover, this bound is asymptotically tight.
\end{theorem}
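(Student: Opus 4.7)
The plan is to mirror the proof of Theorem~\ref{thm:LB-ordinal}, adapting only the normalization. I would use the same instance construction from \cite{aman20b}: group the $n$ agents into pairs, and design the ordinal preferences so that for each pair $i$, the two agents both rank a common object $a$ first, then object $b_i$ second, followed by a chain of objects $c_1, c_2, \ldots$ in a pattern that forces any matching satisfying property $X$ to look qualitatively the same across all ``symmetric'' relabellings of the pairs. Concretely, the key structural fact to reuse is that, for any choice of pair $i$ that receives object $a$, the matching which gives the other agent of pair $i$ the object $b_i$, one arbitrary other pair its $b$-object, and all remaining agents their next-available $c$-object simultaneously lies in the set of Pareto-optimal, rank-maximal, max-cardinality rank-maximal, and fair matchings.

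Next I would define unit-range valuations rather than unit-sum. Specifically, every agent has value $1$ for its top-ranked object, value $0$ for its least-preferred object, and an intermediate value on the ``second-choice'' $b$-object that is only $\Theta(1/n)$; all other intermediate objects are assigned suitably small values in $[0,1]$ consistent with the ordinal preference. This is the only substantive modification relative to the unit-sum construction: under unit-sum, each top object had value $\Theta(1/n)$ giving the $\Omega(n^2)$ gap, whereas under unit-range the top object now has value $1$, so the gap becomes $\Omega(n)$ instead of $\Omega(n^2)$.

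The adversarial argument then proceeds exactly as before. An ordinal algorithm $\ma$ sees only the preference lists and hence cannot distinguish between the different valuation profiles corresponding to different ``special pairs''. Regardless of which pair $i^*$ the algorithm effectively chooses to allocate $a$ to, the adversary picks the cardinal-utility instance in which some other pair $j \neq i^*$ should have received $a$ to maximize welfare; in the algorithm's output, that pair $j$ is then matched to a very-low-valued object, while in the welfare-optimal matching that satisfies $X$ the corresponding agent receives their top object $a$ (worth $1$). A direct accounting yields $\sw(\mu^*) \geq \Omega(1)$ and $\sw(\ma(\mi)) \leq O(1/n)$ (or, more precisely, their ratio is $\Omega(n)$), giving $\ml(\ma) \in \Omega(n)$.

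For asymptotic tightness, I would observe that since valuations are unit-range, every edge has value at most $1$, so $\sw(\mu^*) \leq n$ for any matching $\mu^*$. On the other hand, any Pareto-optimal/rank-maximal/max-cardinality rank-maximal/fair matching necessarily matches at least one agent to a rank-$1$ object (this is exactly Claim~\ref{clm:boundB} with $k=1$), and such an agent contributes welfare $1$ under unit-range valuations. Therefore any algorithm $\ma$ that returns a matching with property $X$ already achieves $\sw(\ma(\mi)) \geq 1$, giving $\ml(\ma) \leq n = O(n)$, which matches the lower bound. The main obstacle is simply being careful with the intermediate $c$-object values so that the full unit-range constraints (value $1$ on top, value $0$ on bottom, values in $[0,1]$ in between) are met while preserving the ordinal preferences and the counting argument; once that bookkeeping is done, the rest transfers verbatim from Theorem~\ref{thm:LB-ordinal}.
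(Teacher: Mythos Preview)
Your overall plan---reuse the \cite{aman20b} instance from Theorem~\ref{thm:LB-ordinal} and only change the normalization---is exactly what the paper does, and your tightness argument via Claim~\ref{clm:boundB} is correct. However, the specific adaptation you describe does not yield an $\Omega(n)$ gap.

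The problem is that under unit-range valuations, \emph{every} agent values its top object at $1$, so whichever agent receives $a$ in $\ma(\mi)$ already contributes welfare $1$. Hence $\sw(\ma(\mi)) \geq 1$ always, and your claim that $\sw(\ma(\mi)) \le O(1/n)$ is impossible. To force the ratio to be $\Omega(n)$ you therefore need $\sw(\mu^*) \ge \Omega(n)$, which means $\Omega(n)$ agents must each receive an object worth $\Omega(1)$ in $\mu^*$. Since only one agent can get $a$, this forces the second-choice $b$-values to be $\Theta(1)$ for the ``right'' agent in each pair, not $\Theta(1/n)$ as you wrote. With all second choices worth $\Theta(1/n)$, both the optimal and the algorithm's matching have welfare $\Theta(1)$ and the ratio is $O(1)$.

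Relatedly, the indistinguishability you exploit is the wrong one. Which pair receives $a$ changes welfare by at most a constant; what actually drives the $\Omega(n)$ gap is that, \emph{within every pair} $\ell$, the algorithm cannot tell which of the two ordinally-identical agents has value close to $1$ for $b_\ell$ and which has value close to $0$. The adversary then makes the algorithm give $b_\ell$ to the ``wrong'' agent in every pair, so $\sw(\ma(\mi)) \approx 1$ while $\sw(\mu^*) \approx n/2$. Once you fix the $b$-values to be close to $1$ (for one agent per pair) and close to $0$ (for the other), the rest of your write-up goes through.
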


\subsection{The unit-range case} \label{app:sec:unit-range}
In this section we discuss the case when agents have unit-range valuations. Note that the adaptive algorithm presented in \Cref{sec:Adap} works for both the unit-sum and unit-range case. Therefore, here we look at the non-adaptive case, in particular focussing on the case when an algorithm is allowed to ask at most one query per (agent, object) pair.

\subsubsection{Improving welfare when asking one query per (agent, object) pair} \label{app:sec:unit-range-UB}
Here we show an algorithm that achieves an $O(\sqrt{n})$-approximation to the optimal welfare among all Pareto optimal or priority-$\bmp$ matchings when it is allowed to ask at most one query per (agent, object) pair. In the next section, we will show that this is asymptotically optimal. At a high-level, the algorithm is very similar to \Cref{algo:UB-nonAdap-Priority}, with the main difference being that instead of $t_i$s defined there, where we used $t_1 = \frac{1}{n^{1/3}}$ and $t_i = \frac{1}{\min\{i, n^{1/3}\}\cdot{n}^{2/3}}$ for all $i \in \{2,\ldots,n\}$, here the values are more uniform and we use $t_1 = 1$ and $t_i = \frac{1}{\sqrt{n}}$ for all $i \in \{2,\ldots,n\}$. Note that the query $\mq(a_i, h_j,t_1)$, where $h_j$ is such that $rank(a_i, h_j) = 1$, is not really useful since all the agents will answer ``Yes" to this because of the fact that they have unit-range valuations. Nevertheless, we still use it in \Cref{algo:UB-nonAdap-unitRange} to make it clear that it is very similar to \Cref{algo:UB-nonAdap-Priority}.

\begin{algorithm}[tb]
	{\small \centering
		\noindent\fbox{%
			\begin{varwidth}{\dimexpr\linewidth-4\fboxsep-4\fboxrule\relax}
				\begin{algorithmic}[1]
					\small 
					\Input an instance $\mi =(\mn, \mh, \mpp = (P_1, \ldots, P_n))$ and priorities $\bmp = (p_1, \ldots, p_n)$
					\Output returns a Pareto optimal matching when $p_i=0$ for all $i \in [n]$,  priority-$\bmp$ matching otherwise
					
					\State $\mathbb{G}_{\mi} = (\mn \cup \mh, \me) \gets$ graph induced by $\mi$
					\State $t_1 \gets 1$
					\State $t_i \gets \frac{1}{\sqrt{n}}$, for all $i \in \{2,\ldots,n\}$								
					\For{$e = (a_i, h_r) \in \mn \times \mh$}
					\State $r \gets \text{rank}(a_i, h_j)$
					\If{$\mq(a_i, h_j, t_r)$} \label{algo:unitRange:line:query}
					\State $w_e \gets p_r + t_r$
					\Else 
					\State $w_e \gets p_r$
					\EndIf
					\EndFor	
					\State $\mu \gets$ max-weight matching in $\mathbb{G}_{\mi}$, where weights are $(w_e)_{e \in \me}$
					\label{algo:unitRange:line:muMM}
					\If{$p_i = 0$ for all $i \in [n]$} 
					\State $\mu \gets$ run TTC with $\mu$ (from line~\ref{algo:unitRange:line:muMM}) as the initial endowment		\label{algo:unitRange:line:ttc}		
					\EndIf		
					\State\Return $\mu$    			
				\end{algorithmic}
		\end{varwidth}}
		\caption{An $O(\sqrt{n})$-approximation algorithm for finding the optimal social welfare among Pareto optimal or priority-$p$ matchings for the case when the agents have unit-range valuations.}
		\label{algo:UB-nonAdap-unitRange}
	}
\end{algorithm}

\begin{theorem}\label{thm:UB-nonAdap-unitRange}
	Given an instance $\mi = (\mn, \mh, \mpp = (P_1, \ldots, P_n))$, and a vector of priorities $\bmp = (p_1, \ldots, p_n)$, where $\bmp \in \mathbb{P} \cup \{(0,\ldots,0)\}$, \Cref{algo:UB-nonAdap-unitRange} asks one non-adaptive query per (agent, object) pair and for the case when agents have unit-range valuations returns a
	\begin{enumerate}[label=\roman*),ref=\roman*]
		\item Pareto optimal matching that achieves an $O(\sqrt{n})$-approximation to the optimal welfare among all Pareto optimal matchings when $p_i = 0$ for all $i \in [n]$.
		\item priority-$\bmp$ matching that achieves an $O(\sqrt{n})$-approximation to the optimal welfare among all priority-$\bmp$ matchings when $\bmp \in \mathbb{P}$.
	\end{enumerate}   
\end{theorem}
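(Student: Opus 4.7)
The strategy mirrors the two-step structure of the proofs of Theorems~\ref{thm:UB-nonAdap-PO} and~\ref{thm:UB-nonAdap-priority}: first confirm that the output $\mu$ has the required property, then partition the edges of an optimal matching by value and bound each piece separately. For correctness, note that \Cref{algo:UB-nonAdap-unitRange} computes a max-weight matching $\mu_{MM}$ on $\mathbb{G}_{\mi}$ with weights $w_e = p_r + \val'(e)$, where $\val'(e) \in \{0, 1/\sqrt{n}, 1\} \subseteq [0,1]$. When $\bmp \in \mathbb{P}$, \Cref{thm:computeOptPriority} directly yields that $\mu_{MM} = \mu$ is a priority-$\bmp$ matching. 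When $p_i = 0$ for all $i$, the subsequent TTC step guarantees Pareto optimality of $\mu$ \cite{shapley74}.

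For the approximation bound, let $\mu^*$ denote a welfare-optimal matching of the required type, and partition its edges into a ``high'' set $H = \{(a_i,h_j) \in \mu^* \mid v_i(h_j) \geq 1/\sqrt{n}\}$ and a ``low'' set $L = \mu^* \setminus H$. The key observation for $H$ is that for any $e = (a_i, h_j) \in H$ the threshold query at \text{line}~\ref{algo:unitRange:line:query} returns ``Yes'' (since $t_r \leq 1/\sqrt{n} \leq v_i(h_j)$ for $r \geq 2$, and $t_1 = v_i(h_j) = 1$ for $r = 1$), so $\val'(e) \geq 1/\sqrt{n} \geq v_i(h_j)/\sqrt{n}$ using unit-range normalization $v_i(h_j) \leq 1$. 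Since $\mu_{MM}$ maximizes $\sum_e w_e$ and both $\mu^*$ and $\mu_{MM}$ belong to the same class of matchings (priority-$\bmp$ or arbitrary matchings in $\mathbb{G}_\mi$ in the PO case), we obtain $\sum_{e \in \mu_{MM}} \val'(e) \geq \sum_{e \in \mu^*_H} \val'(e) \geq \sw(\mu^*_H)/\sqrt{n}$, and hence $\sw(\mu_{MM}) \geq \sum_{e \in \mu_{MM}} \val'(e) \geq \sw(\mu^*_H)/\sqrt{n}$ using $\val' \leq \val$ pointwise. For $L$, the crude bound $\sw(\mu^*_L) \leq n \cdot (1/\sqrt{n}) = \sqrt{n}$ suffices.

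To convert these into the desired ratio, I need a matching lower bound on $\sw(\mu)$. Here the analog of the unit-sum arguments is much simpler: both Pareto optimal and priority-$\bmp$ matchings contain at least one rank-$1$ edge by \Cref{clm:boundB}, and under unit-range valuations every rank-$1$ edge contributes exactly $1$ to welfare, so $\sw(\mu) \geq 1$. Combined with $\sw(\mu^*_L) \leq \sqrt{n}$ this gives $\sw(\mu^*_L) \leq \sqrt{n} \cdot \sw(\mu)$. For the priority-$\bmp$ case $\mu = \mu_{MM}$ so the bound on $\sw(\mu^*_H)$ translates directly to $\sw(\mu^*_H) \leq \sqrt{n} \cdot \sw(\mu)$. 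Adding the two pieces yields $\sw(\mu^*) \leq 2\sqrt{n}\cdot \sw(\mu) \in O(\sqrt{n})\cdot \sw(\mu)$.

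The main obstacle is bridging from $\mu_{MM}$ to $\mu$ in the Pareto optimal branch, where TTC is applied on top of $\mu_{MM}$ as initial endowment. The fix is individual rationality of TTC: each agent weakly prefers their new allocation to their endowment, so $\sw(\mu) \geq \sw(\mu_{MM})$, and the chain $\sw(\mu) \geq \sw(\mu_{MM}) \geq \sw(\mu^*_H)/\sqrt{n}$ goes through exactly as in the priority case. I expect no other subtleties, since unit-range valuations remove the ``rank-dependent value upper bound'' issue that made the unit-sum analysis of \Cref{thm:UB-nonAdap-PO} branch into the two cases depending on whether $|\mu^1_{MM}| = 0$ or not; here a single rank-$1$ edge always delivers value $1$.
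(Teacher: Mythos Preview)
Your proposal is correct and follows essentially the same route as the paper's proof: partition $\mu^*$ into high-value edges (those with value at least $1/\sqrt{n}$) and low-value edges, bound the high part by $\sqrt{n}\cdot\sw(\mu_{MM})$ via the max-weight property (using that $\val'\le\val$ pointwise and that $\mu_{MM}$ and $\mu^*$ share the same priority-part of the weight in the $\bmp\in\mathbb{P}$ case), bound the low part by $\sqrt{n}$, and finish using $\sw(\mu)\ge 1$ from \Cref{clm:boundB} together with individual rationality of TTC. The only step you leave slightly implicit is that in the priority-$\bmp$ branch the inequality $\sum_{e\in\mu_{MM}}\val'(e)\ge\sum_{e\in\mu^*}\val'(e)$ relies on $\mu_{MM}$ and $\mu^*$ having identical signatures (so the $p_r$-contributions cancel); this is exactly what your phrase ``belong to the same class of matchings'' encodes, and the paper handles it the same way by invoking the analogue of \Cref{lemma:highVals-PO}.
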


\begin{proof}
	Just like we did to prove \Cref{thm:UB-nonAdap-PO} let us introduce some notation. Given the priority vector $\bmp$, let $\mu^{*}$ denote the matching that achieves the optimal social welfare among Pareto optimal or priority-$\bmp$ matchings when the agents have unit-range valuations. Let $H_i$ denote the set of agents who are matched to their $i$-th choice in $\mu^{*}$ and have value at least $t_i$ for their partner in $\mu^{*}$. Similarly, let $L_i$ denote the set of agents who are matched to their $i$-th choice in $\mu^{*}$ and have value less than $t_i$ for their partner in $\mu^{*}$.  We define $H = \cup_{i=1}^n H_i$, $L = \cup_{i=1}^{n} L_i$, and, for some $S \subseteq \mn$, $\mu^{*}_S \subseteq \mu^{*}$ be the set of edges $(a_i, h_j)$ such that $a_i \in S$ and $(a_i, h_j) \in \mu^{*}$. 	
	
	Additionally, let  $\mu'$ be the matching that is computed in line~\ref{algo:unitRange:line:muMM} in \Cref{algo:UB-nonAdap-unitRange}. Note that if $\mu$ is the matching returned by \Cref{algo:UB-nonAdap-unitRange}, then $\mu' = \mu$ when $\bmp \in \mathbb{P}$. Also, for the case when $p_i = 0$ for all $i \in [n]$, $\sw(\mu) \geq \sw(\mu')$, since $\mu$ is the matching returned by TTC with initial endowments $\mu'$ and we know that TTC is individually rational \cite{shapley74}. 
	
	Next, it is easy to see that the matching  is Pareto optimal when $p_i = 0$ for all $i \in [n]$ and is a priority-$\bmp$ matching when $\bmp \in \mathbb{P}$. (This can be seen by proceeding exactly like in the second paragraph of the proof of \Cref{thm:UB-Adap}.)
	
	Finally, in order to bound the approximation ratio, we will directly bound the ratio of $\frac{\sw(\mu^*)}{\sw(\mu)}$. To do this, first note that we can proceed exactly like in the proof of \Cref{lemma:highVals-PO} to see that $\sw(\mu^{*}_H) \leq \sqrt{n} \cdot \sw(\mu')$. This is so because the only difference is in the way the $t_i$s are defined for all $i \in [n]$. Second, note that $\sw(\mu^{*}_L) \leq |\mu^{*}_L| \cdot \frac{1}{\sqrt{n}} \leq \sqrt{n}$, where the first inequality follows since $|L_1| = 0$ and every agent in $L_i$ for $i \in \{2, \ldots, n\}$ has value at most $t_i = 1/\sqrt{n}$ for their good, and the second inequality follows since $|\mu^{*}_L| \leq n$. Therefore, we have that,
	\begin{equation}
		\frac{\sw(\mu^*)}{\sw(\mu)} = \frac{\sw(\mu^{*}_H) + \sw(\mu^{*}_L)}{\sw(\mu)}
		\leq \frac{\sqrt{n} \cdot \sw(\mu') + \sqrt{n}}{\sw(\mu)} \leq 2\sqrt{n},
	\end{equation}
	where the last inequality follows since $\sw(\mu) \geq 1$, as there is at least one agent who is matched to a rank-1 edge in $\mu$ (since $\mu$ is either a Pareto optimal or a priority-$\bmp$ matching and hence we can use \Cref{clm:boundB}).	
	%
\end{proof}

\subsubsection{Lower bounds} \label{app:sec:unit-range-LB}

Here we derive a lower bound that is similar to one for unit-sum valuations and show that any deterministic algorithm $\ma$ that produces a Pareto-optimal/rank-maximal/max-cardinality rank-maximal/fair matching and that asks at most one query per (agent, object) pair has $\ml(\ma) \in \Omega(\sqrt{n})$. Note that this implies that \Cref{algo:UB-nonAdap-unitRange} is asymptotically optimal.

\begin{theorem} \label{thm:LB-nonAdap-unitRange}
	Let $X$ denote one of the properties in the set $\{${Pareto-optimal, $\allowbreak$ rank-maximal, max-cardinality rank-maximal, and fair}$\}$. Let $\ma$ be a non-adaptive deterministic algorithm that always produces a matching with property $X$ and asks at most one query per (agent, object) pair. If there are $n$ agents with unit-range valuation functions, then $\ml(\ma) \in \Omega(\sqrt{n})$.
\end{theorem}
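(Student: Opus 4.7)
The plan is to follow the proof of \Cref{thm:LB-nonAdap} almost verbatim, replacing only the family of utility functions so that they satisfy the unit-range normalization ($v(\cdot)=1$ at the top choice and $v(\cdot)=0$ at the bottom). The instance (blocks $B_1,\ldots,B_{k+1}$ with preferences $h_1\succ h_{i+1}\succ \mh\setminus\{h_1,h_{i+1}\}$ inside $B_i$), the view of a non-adaptive deterministic $\ma$ as an outcome function together with a threshold matrix $T$, the definition of \emph{special agents} via pigeonhole, the set $\mathcal{V}_\mi$, and the adversarial choice of $v\in\mathcal{V}_\mi$ can all be reused. The one structural simplification forced by unit-range is that $v_i(h^1)=1$ for every agent, so any rank-$1$ query returns ``Yes'' regardless of the underlying utility; this collapses the four-type classification by $(T_{i1},T_{i2})$ to a two-type classification by $T_{i2}$ alone, namely \emph{Type-A} when $T_{i2}\le 1/\sqrt{n}$ and \emph{Type-B} when $T_{i2}>1/\sqrt{n}$. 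Consequently the block size shrinks from $5$ to $3$, while pigeonhole still produces two special agents of the same type inside each block.

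The core design task is to build unit-range utility functions $u_0$ (for non-special agents) and $u_{A1},u_{A2},u_{B1},u_{B2}$ (for special agents of the two types) that differ only in the value they place on the rank-$2$ object. Concretely, for every function I would set $v(h^1)=1$, $v(h^n)=0$, and $v(h^k)=\varepsilon_k$ for $k\in\{3,\dots,n-1\}$, where $\varepsilon_3>\varepsilon_4>\cdots>\varepsilon_{n-1}>0$ is a strictly decreasing tiny sequence (for instance $\varepsilon_k=(n-k)/n^{3}$), ensuring strict consistency with the preference order. The rank-$2$ values are then chosen as $v(h^2)=1/\sqrt{n}$ in $u_{A1}$, $v(h^2)=1-\delta$ in $u_{A2}$, $v(h^2)=1/n$ in $u_{B1}$, $v(h^2)=1/\sqrt{n}-\delta$ in $u_{B2}$, and $v(h^2)=\varepsilon_2$ (tiny, preference-consistent) in $u_0$, with $\delta>0$ taken small enough (e.g., $\delta<n^{-4}$) to be absorbed into lower-order terms. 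A direct check then gives the analogue of Table~\ref{tab:response}: within each type $X\in\{A,B\}$, the functions $u_{X1}$ and $u_{X2}$ agree at every rank $\ne 2$ and give the same response at rank $2$ (both ``Yes'' in Type-A since $v(h^2)\ge 1/\sqrt{n}\ge T_{i2}$; both ``No'' in Type-B since $v(h^2)\le 1/\sqrt{n}-\delta< T_{i2}$), so the algorithm cannot tell them apart.

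The adversary now picks $v$ exactly as in \Cref{thm:LB-nonAdap}: in each Type-$X$ block $B_i$, whichever special agent is matched to $h_{i+1}$ by $\ma(\mi)$ is assigned $u_{X1}$ and the other is assigned $u_{X2}$. Following that proof's welfare computation line-by-line with the new constants yields
\[
\sw(\mu^*) \ge 1 + n_A(1-\delta) + n_B\bigl(1/\sqrt{n}-\delta\bigr) - O(1/n), \qquad \sw(\ma(\mi)) \le 1 + n_A/\sqrt{n} + n_B/n + O(1/n),
\]
where $n_A+n_B=k=\lfloor n/3\rfloor$, the lower bound is witnessed by the matching $\mu^*$ that routes the $u_{X2}$-agent of each block to $h_{i+1}$ (which satisfies property $X$ by the same argument as in the unit-sum case), and the upper bound accounts for the worst-case placement of the rank-$2$ edges plus the $O(1/n)$ contribution from rank-$\ge 3$ allocations. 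A short case split---if $n_A\ge k/2$ then $\sw(\mu^*)=\Omega(n)$ against $\sw(\ma(\mi))=O(\sqrt{n})$, and if $n_B\ge k/2$ then writing $\alpha=n_A/k$ gives $\sw(\mu^*)/\sw(\ma(\mi))\ge \max(\alpha n,\sqrt{n})/(1+\alpha\sqrt{n})=\Omega(\sqrt{n})$---finishes the bound $\ml(\ma)\in\Omega(\sqrt{n})$. The main obstacle I expect is engineering the Type-B gap: we need $v(h^2)$ in $u_{B2}$ to be $\Theta(1/\sqrt{n})$ (to make the optimum large enough) while still lying strictly below \emph{every} admissible $T_{i2}\in(1/\sqrt{n},1]$ so that both $u_{B1}$ and $u_{B2}$ answer ``No''; pinning $v_{u_{B2}}(h^2)=1/\sqrt{n}-\delta$ with $\delta$ arbitrarily small resolves this, since $1/\sqrt{n}-\delta<T_{i2}$ holds automatically whenever $T_{i2}>1/\sqrt{n}$.
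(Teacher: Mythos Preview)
Your proposal is correct and follows essentially the same approach as the paper. The paper's own proof sketch keeps the four-type, block-of-five scaffolding from \Cref{thm:LB-nonAdap} and merely swaps in unit-range utility vectors $u_0,\ldots,u_8$ (all with top value $1$ and bottom value $0$), while remarking that rank-$1$ queries are vacuous here so the argument becomes simpler; you actually carry that simplification out by collapsing to two types and blocks of three, which is a clean and valid variant of the same construction.
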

\begin{proof}[Proof (sketch)]The proof of this is almost identical to the proof of \Cref{thm:LB-nonAdap}, with the main difference being the way the utility functions $u_0, u_1, \ldots, u_8$ are defined. For this proof, we redefine $u_0, u_1, \ldots, u_8$ the following way, to ensure that they are unit-range valuation functions. 
	\begin{align*}
		&u_0 = (1, c_1^2, 0,\ldots, 0)\\	 
		&u_1 = (1, c_1, 0, \ldots, 0) &&u_2 = (1, 1/2, 0, \ldots, 0) \\
		&u_3 = (1, c_1 + \epsilon, 0, \ldots, 0) &&u_4 = (1, 1/4-\epsilon, 0, \ldots, 0)\\
		&u_5 = (1, c_1^2, 0 \ldots, 0) &&u_6 = (1, c_1 - \epsilon, 0, \ldots, 0)\\ 
		&u_7 = (1, c_1^2, 0, \ldots, 0) &&u_8 = (1, c_1/4, 0, \ldots, 0),
	\end{align*} 	
	Given these definitions, one can now proceed exactly like in the proof of \Cref{thm:LB-nonAdap}. Note that the proof will be simpler in this case because for unit-range valuations, the queries of the form $\mq(\cdot, \cdot, T_{i1})$ are not useful, since by definition all unit-range utility functions have value 1 for the most preferred good. 
\end{proof}

\end{document}